\newtheorem{theorem}{Theorem}
\newtheorem{lemma}{Lemma}
\newcommand{\todo}[1]{[\textcolor{red}{TODO: #1}]}
\newcommand{\hide}[1]{}
\newcommand{\eat}[1]{}
\newcommand{\stitle}[1]{\vspace{1ex}\noindent\textbf{#1}}
\newcommand{\HIDX}{\ensuremath{\mathcal{H}}\xspace}
\newcommand{\rev}[1]{#1}
  \providecommand\BibTeX{{%
    \normalfont B\kern-0.5em{\scshape i\kern-0.25em b}\kern-0.8em\TeX}}}
\begin{document}

\title{HINT: A Hierarchical Index for Intervals in Main Memory}

 \author{George Christodoulou}
 \affiliation{%
   \institution{University of Ioannina}
   \country{Greece}
 }
 \email{gchristodoulou@cse.uoi.gr}

 \author{Panagiotis Bouros}
 \orcid{0000-0002-8846-4330}
 \affiliation{%
   \institution{Johannes Gutenberg University Mainz}
   \country{Germany}
 }
 \email{bouros@uni-mainz.de}

 \author{Nikos Mamoulis}
 \orcid{0000-0003-3423-4895}
 \affiliation{%
   \institution{University of Ioannina}
   \country{Greece}
 }
 \email{nikos@cse.uoi.gr}

\renewcommand{\shortauthors}{Christodoulou et al.}

\begin{abstract}
Indexing intervals is a fundamental problem, finding
a wide range of applications, most notably in temporal and uncertain databases.
In this paper, we propose HINT, a novel and efficient in-memory index for
intervals, with a focus on interval overlap queries, which are a
basic component of many search and analysis tasks.
HINT applies a hierarchical partitioning approach, which assigns each interval to at most two partitions per level and has controlled space requirements.
We reduce the information stored
at each partition to the absolutely necessary by dividing the
intervals in it based on whether they begin inside
or before the partition boundaries.
In addition, our index includes storage optimization techniques for
the effective handling of data sparsity and skewness.
Experimental results on real and synthetic interval sets of different
characteristics show that HINT is typically one order of
magnitude faster than existing interval indexing methods.
\end{abstract}

\eat{
\begin{CCSXML}
<ccs2012>
   <concept>
       <concept_id>10002951.10002952.10003190.10003192</concept_id>
       <concept_desc>Information systems~Database query processing</concept_desc>
       <concept_significance>500</concept_significance>
       </concept>
   <concept>
       <concept_id>10002951.10002952.10002971.10003450</concept_id>
       <concept_desc>Information systems~Data access methods</concept_desc>
       <concept_significance>500</concept_significance>
       </concept>
   <concept>
       <concept_id>10002951.10002952.10002953.10010820.10010518</concept_id>
       <concept_desc>Information systems~Temporal data</concept_desc>
       <concept_significance>300</concept_significance>
       </concept>
 </ccs2012>
\end{CCSXML}

\ccsdesc[500]{Information systems~Database query processing}
\ccsdesc[500]{Information systems~Data access methods}
\ccsdesc[300]{Information systems~Temporal data}
}

\keywords{Interval data, Query processing, Indexing, Main memory}

\eat{
\begin{teaserfigure}
  \includegraphics[width=\textwidth]{sampleteaser}
  \caption{Seattle Mariners at Spring Training, 2010.}
  \Description{Enjoying the baseball game from the third-base
  seats. Ichiro Suzuki preparing to bat.}
  \label{fig:teaser}
\end{teaserfigure}
}

\maketitle

\section{Introduction}
\label{sec:intro}
A wide range of applications require managing large collections of intervals. In temporal databases \cite{SnodgrassA86,BohlenDGJ17}, each tuple has a {\em validity interval}, which captures the period of time that the tuple is valid. In statistics and probabilistic databases \cite{DalviS04}, uncertain values are often approximated by (confidence or uncertainty) intervals.
In data anonymization \cite{SamaratiS98}, attribute values are often
generalized to value ranges. XML data indexing techniques
\cite{MinPC03} encode label paths as intervals and evaluate path
expressions using containment relationships between the
intervals. Several computational geometry problems \cite{BergCKO08}
(e.g., windowing) use interval search as a module.
\rev{The
internal states of window queries in Stream processors
(e.g. Flink/Kafka) can be modeled and
managed as intervals \cite{Awad0KVS20}.}

We study the classic problem of indexing a large collection $\mathcal{S}$ of objects (or records), based on an interval attribute that characterizes each object. Hence, we model each object $s\in \mathcal{S}$ as a triple $\langle s.id, s.st, s.end\rangle$, where $s.id$ is the object's identifier (which can be used to access any other attribute of the object), and $[s.st,s.end]$ is the interval associated to $s$.
Our focus is on interval {\em range} queries, the most fundamental query type over intervals. Given a query interval $q=[q.st,q.end]$, the objective is to find the ids of all objects $s \in \mathcal{S}$, whose intervals {\em overlap with} $q$. Formally, the result of a range query $q$ on object collection $\mathcal{S}$ is $\{s.id~|~s\in \mathcal{S} \wedge (s.st\le q.st\le s.end \lor q.st\le s.st \le q.end)\}$.
Range queries are also known as {\em time travel} or {\em timeslice} queries in temporal databases \cite{SalzbergT99}. Examples
of such queries
on different data domains include the following:
\begin{itemize}
\item on a relation storing employment periods: {\em find the employees who were employed sometime in $[1/1/2021,2/28/2021]$}.
\item on weblog data: {\em find the users who were active sometime between 10:00am and 11:00am yesterday}.
\item on taxi trips data: {\em find the taxis which were active (on a trip) between 15:00 and 17:00 on 3/3/2021}.
\item on uncertain temperatures: {\em find all stations having temperature between 6 and 8 degrees with a non-zero probability}.
\end{itemize}
Range queries
can be specialized to retrieve intervals that satisfy any relation in Allen's set \cite{Allen81}, e.g., intervals that are {\em covered by} $q$.
{\em Stabbing} queries
({\em pure-timeslice} queries in temporal databases)
are a special class of range queries for which $q.st=q.end$.
Without loss of generality, we assume that the intervals and queries are {\em closed} at both ends.
Our method can easily be adapted to manage intervals and/or process range queries, which are open at either or both sides, i.e., $[o.st,o.end)$, $(o.st,o.end]$ or $(o.st,o.end)$.

For efficient range and stabbing queries over collections of intervals, classic data structures for managing intervals, like the interval tree \cite{Edels80}, are typically used.
Competitive indexing methods include the timeline index \cite{KaufmannMVFKFM13}, 1D-grids and the period index \cite{BehrendDGSVRK19}.
All these methods, which we review in detail in Section \ref{sec:related}, have not been optimized for handling very large collections of intervals in main memory.
Hence, there is room for 
new data structures, which exploit the characteristics and capabilities of
modern machines that have large enough memory capacities
for the scale of data found in most applications.

\stitle{Contributions.}
In this paper, we propose a novel and general-purpose
Hierarchical index for INTervals (HINT), 
suitable for 
applications that manage large collections of intervals.
HINT defines a hierarchical decomposition of the domain and assigns each interval in $\mathcal{S}$ to at most two partitions per level.
If the domain is relatively small and discrete, our index can process interval range queries with no comparisons at all.
For the general case where the domain is large and/or continuous, we propose a generalized version of HINT,
denoted by HINT$^m$, which limits the number of levels to $m+1$ and greatly reduces the space requirements.
HINT$^m$ conducts comparisons only for the intervals in the first and last accessed partitions at the bottom levels of the index.  
Some of the unique and novel characteristics of our index include:
\begin{itemize}
\item The intervals in each partition are further divided into groups, based on whether they begin inside or before the partition. This division (1) cancels the need for detecting and eliminating duplicate query results, (2) reduces the data accesses to the absolutely necessary, and (3) minimizes the space needed for storing the objects into the partitions.
\item As we theoretically prove, the expected number of HINT$^m$ partitions for which comparisons are necessary is just four. This guarantees fast retrieval times, independently of the query extent\eat{length} and position.
\item The optimized version of our index stores the intervals in all partitions at each level sequentially and uses a dedicated array with just the ids of intervals there, as well as 
links between non-empty partitions at each level. These optimizations facilitate sequential access to the query results at each level, while avoiding accessing unnecessary data.
\end{itemize}

Table \ref{tab:indexes} qualitatively compares HINT  to
previous work.
Our experiments\eat{al evaluation} on real and synthetic datasets show\eat{s} that our index is {\em one order of magnitude faster} than the competition.
As we explain in Section \ref{sec:related}, existing indices typically require at least one comparison for each query result (interval tree, 1D-grid) or may access and compare more data than necessary (timeline index, 1D-grid).
\eat{In addition}Further, the 1D-grid, the timeline and the period index need more space than HINT in the presence of long intervals in the data due to\eat{they do} excessive replication either in their partitions (1D-grid, period index) or their checkpoints (timeline index).
HINT  gracefully supports updates, since each partition (or division within a partition) is independent from others.
The construction cost of HINT is also low, as we verify experimentally.
Summing up, HINT is superior in all aspects \eat{compared }to the state-of-the-art and constitutes an important contribution,
given the fact that range queries over large collections of intervals is a fundamental problem with numerous applications.

\eat{
\begin{table}
	\centering
	\caption{Comparison of interval indices}
	\label{tab:indexes}
	\vspace{-1ex}
	\small
	\begin{tabular}{|l|@{~}c@{~}|@{~}c@{~}|@{~}c@{~}|}\hline
		{\bf Method}&comparisons									&space	&updates										\\\hline\hline
		
		Interval tree \cite{Edels80}&$O(logn+K)$
		&$O(n)$&no/slow\\
		Timeline index \cite{KaufmannMVFKFM13}	
		&$O(\Delta)$
		&$O(n+chk\cdot C)$&no/slow\\ 	
		1D-grid 
		&$O(partsize)$
		&$O(n\cdot rep)$&yes/fast\\ 		
Period index \cite{BehrendDGSVRK19}	
&$O(partsize)$
&$O(n\cdot rep)$&yes/fast\\ 	
HINT/HINT$^m$ (our work)	
&$O(Hpartsize)$
&$O(nlogm)$&yes/fast\\ 	
		\hline
	\end{tabular}
      \end{table}
}
    
\begin{table}
	\centering
	\caption{Comparison of interval indices}
	\label{tab:indexes}
	\vspace{-1ex}
	\footnotesize
	\begin{tabular}{|l|@{~}c@{~}|@{~}c@{~}|@{~}c@{~}|}\hline
		{\bf Method}&{\bf query cost} & {\bf space}	&{\bf updates}										\\\hline\hline
		
		Interval tree \cite{Edels80}&medium
		&low&slow\\
		Timeline index \cite{KaufmannMVFKFM13}	
		&medium
		&medium&slow\\ 	
		1D-grid
		&medium
		&medium&fast\\ 		
Period index \cite{BehrendDGSVRK19}	
&medium
&medium&fast\\ 	
HINT/HINT$^m$ (our work)
&low
&low&fast\\ 	
		\hline
	\end{tabular}
	\vspace{-3ex}
      \end{table}

\eat{
{\em Domain-partitioning} schemes have been proposed to facilitate interval join queries \cite{DignosBG14,BourosM17,BourosMTT21,CafagnaB17} and these can be potentially used for processing interval range queries.
For example, in Ref. \cite{BourosM17,BourosMTT21}, the domain is split into disjoint partitions and each interval is assigned to all partitions it overlaps with. Although this introduces replication (hence, the storage requirements are not minimal), query evaluation is fast, because only a limited number of partitions that overlap with the query are accessed.
The state-of-the-art domain-partitioning index for intervals
is considered to be the recently proposed {\em period index} \cite{BehrendDGSVRK19}, which first divides the space into coarse disjoint partitions and then re-partitions each division hierarchically.

In this paper, we first identify some deficiencies of domain-partitioning indices and propose a number of techniques that greatly boost their performance. This allows us to develop interval indices, which are significantly faster than the state-of-the-art. 
First of all, for queries that overlap with the boundaries of multiple partitions, it is possible that the same interval can be detected as query result in multiple partitions. Although duplicates can be eliminated by a simple and cheap post-processing check \cite{DittrichS00}, more than necessary intervals may have to be accessed during query evaluation. We tackle this problem by further dividing the intervals in each partition to groups based on whether they start inside or before the partition boundaries.
The second problem is that in a single-level partitioning scheme, the replication of intervals can be excessive, hence, the index size may grow a lot.
In view of this, we propose HINT, an indexing scheme based on a hierarchical domain decomposition, which has controlled space requirements.
In addition, we use the prefixes of the values that define the interval and query boundaries to guide search, greatly reducing the number of comparisons.   
For datasets where the domain is discrete and relatively small, our index can process queries without conducting any comparisons. 
For larger domains, we design a  variant of HINT, termed HINT$^m$,
which limits the comparisons to a small number of partitions.

Finally, we optimize the way the data are physically stored in each partition, by (i)
only storing the elements of the interval data representations that are necessary for comparisons and result; (ii) sorting the contents in each partition, in order to facilitate efficient search in it; (iii) deploying a sparse array representation and indexing for the intervals in order to alleviate data sparsity and skew; and storing the ids of the intervals in each partition in a dedicated array (i.e., a column), to avoid unnecessary data accesses, wherever comparisons are not necessary.
Our experiments on real and synthetic datasets show that our approaches are typically one order of magnitude faster than classic interval indexing approaches and the state-of-the-art.
We also show that our index can be used to process fast overlap interval joins between a small and a large and indexed interval collection.
}

\stitle{Outline.}
Section \ref{sec:related} reviews related work and presents in detail the characteristics and  weaknesses of existing interval indices.
In Section \ref{sec:hierarchical}, we present HINT and its generalized HINT$^m$ version and analyze their complexity. 
Optimizations that boost the performance of  HINT$^m$ are presented in Section \ref{sec:opts}.
We evaluate the performance of HINT$^m$  experimentally in Section \ref{sec:exps} on real and synthetic data and compare it to the state-of-the-art. Finally, Section \ref{sec:concl} concludes the paper with a discussion about future work.

\section{Related Work}
\label{sec:related}
In this section, we present in detail the state-of-the-art main-memory indices for intervals, to which we experimentally compare HINT in Section \ref{sec:exps}.
In addition, we briefly discuss other relevant data structures and previous work on other queries over interval data.   


\stitle{Interval tree.}
One of the most popular data structures for intervals 
is Edelsbrunner's {\em interval tree} \cite{Edels80},
a binary search tree, which takes $O(n)$ space and answers queries in $O(\log n+K)$ time ($K$ is the number of query results). The tree divides the domain hierarchically by placing all intervals strictly before (after) the domain's center to the left (right) subtree and all intervals that overlap with the   domain's center at the root. This process is repeated recursively for the left and right subtrees using the centers of the corresponding sub-domains. The intervals assigned to each tree node are sorted in two lists based on their starting and ending values, respectively.
Interval trees are used to answer {\em stabbing} and interval (i.e., {\em range}) queries.
For example, Figure~\ref{fig:itree} shows a set of 14 intervals $s_1,\ldots,s_{14}$, which are assigned to 7 interval tree nodes and a query interval $q=[q.st,q,end]$. The domain point $c$ corresponding to the tree's root is {\em contained in} the query interval, hence all intervals in the root are reported and both the left and right children of the root have to be visited recursively. 
Since the left child's point $c_L$ is before $q.st$, we access the END list from the end and report results until we find an interval $s$ for which $s.end<q.st$; then we access recursively the right child of $c_L$. This process is repeated symmetrically for the root's right child $c_R$.
The main drawback of the interval tree is that we need to perform comparisons for most of the intervals in the query result.
In addition, updates on the tree can be slow because the lists at each node should be kept sorted.
A relational interval tree for disk-resident data was proposed in \cite{KriegelPS00}. 

\begin{figure}[t]
	\includegraphics[width=0.9\columnwidth]{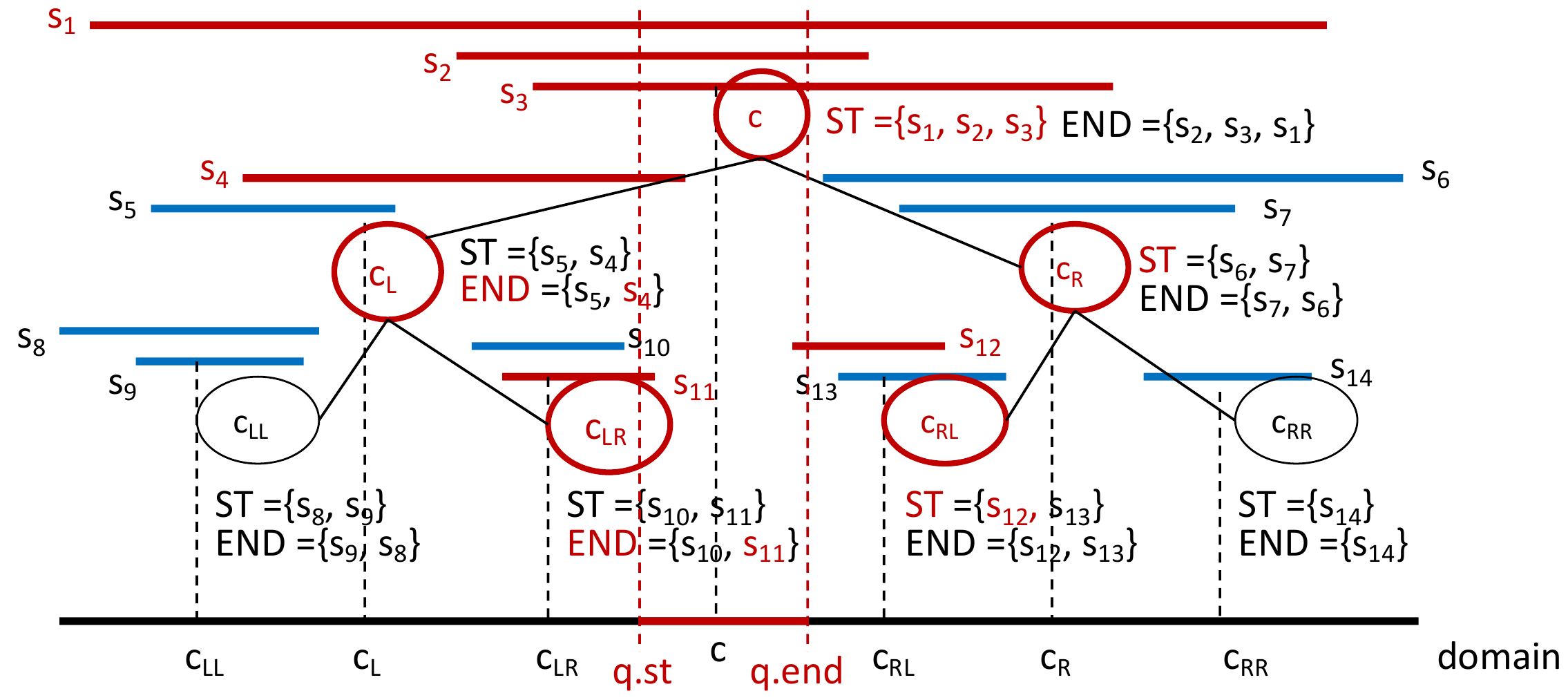}
	\vspace{-2mm}
	\caption{Example of an interval tree}
	\vspace{-3ex}
	\label{fig:itree}
\end{figure}

\stitle{Timeline index.}
The timeline index \cite{KaufmannMVFKFM13} is
a general-purpose access method
for temporal (versioned) data, implemented in SAP-HANA.
It keeps the endpoints of all intervals in an {\em event list}, which is a table of $\langle time, id, isStart \rangle$ triples, where $time$ is the value of the start or end point of the interval, $id$ is the identifier of the interval, and $isStart$
1 or 0,  depending on whether $time$ corresponds to the start or end of the interval, respectively. 
The event list is sorted primarily by $time$ and secondarily by $isStart$ (descending).
In addition, at certain timestamps, called {\em checkpoints}, the entire set of {\em active} object-ids
is materialized,
that is the intervals that contain the checkpoint.
For each checkpoint, there is a link
to the first triple in the event list for which
$isStart$=0 and $time$ 
is greater than or equal to the checkpoint,
Figure~\ref{fig:timeline}(a) shows a set of five intervals ($s_1,\ldots,s_5$) and Figure~\ref{fig:timeline}(b) exemplifies a timeline index for them.

\eat{In order t}To evaluate a range query (called {\em time-travel} query in \cite{KaufmannMVFKFM13}), we \eat{should }first find the largest checkpoint which is smaller than or equal to $q.st$
(e.g., $c_2$ in Figure~\ref{fig:timeline}) 
and initialize\eat{ an interval set} $R$ as the
active interval set at the checkpoint (e.g., $R=\{s_1,s_3,s_5\}$).
Then, we scan the event list from the position pointed by the checkpoint, 
until the first triple for which $time\ge q.st$,
and update $R$ by
inserting to it intervals corresponding to an $isStart=1$ event and deleting the ones 
corresponding to a $isStart=0$ triple
(e.g., $R$ becomes $\{s_3,s_5\}$).
When we reach $q.st$, all intervals in $R$ are guaranteed\eat{ to be part of the} query results and they are reported. We continue scanning the event list until the first triple after $q.end$ and we add to the result the ids of all intervals corresponding to triples with $isStart=1$ (e.g., $s_2$ and $s_4$).


\begin{figure}[t]
\begin{tabular}{cc}
\hspace{-1ex}\includegraphics[width=0.4\columnwidth]{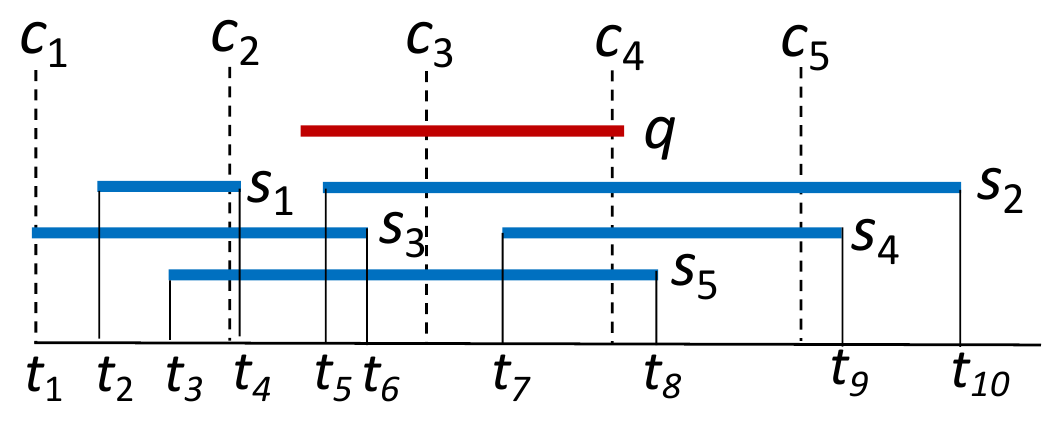}
&\includegraphics[width=0.55\columnwidth]{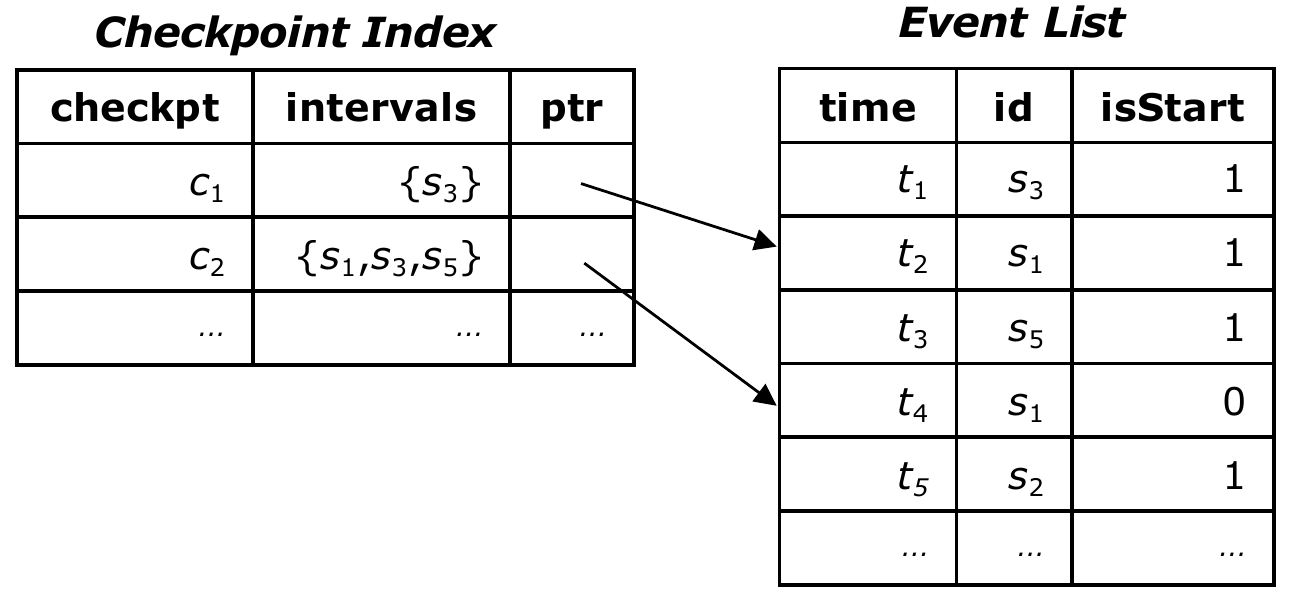}
\\
(a) set of intervals & timeline index
\end{tabular}
\vspace{-2mm}
\caption{Example of a timeline index}
\vspace*{-3ex}
\label{fig:timeline}
\end{figure}

The timeline index accesses more data and performs more comparisons than necessary,
during range query evaluation.
The index also requires a lot of extra space to store the active sets of the checkpoints. Finally, ad-hoc updates are expensive because the event list should be kept sorted.


\stitle{1D-grid.}
A simple and practical data structure for intervals is a 1D-grid, which divides the domain into $p$ partitions
$P_1,P_2,\dots,P_p$.
The partitions
are pairwise disjoint in terms of their interval span and collectively cover the entire data domain $D$.
Each interval is assigned to all partitions that it overlaps with.
Figure~\ref{fig:flat} shows 5 intervals assigned to $p=4$ partitions;
$s_1$ goes to $P_1$ only, whereas $s_5$ goes to all four partitions.
Given a range query $q$, the results can be obtained by accessing each partition $P_i$ that overlaps with $q$. For each $P_i$ which is {\em contained in} $q$ (i.e., $q.st\le P_i.st \wedge P_i.end\le q.end$), all intervals in $P_i$ are guaranteed to overlap with $q$. For each $P_i$, which overlaps with $q$, but is not contained in $q$, we should compare each $s_i \in P_i$ with $q$ to determine whether $s_i$ is a query result.
If the interval of a range query $q$ overlaps with multiple partitions,
duplicate results may be produced\eat{, which need to be handled}. 
An efficient approach for handling duplicates 
is the {\em reference value} method \cite{DittrichS00}, which was originally proposed for rectangles but can be directly applied for 1D intervals. For each interval $s$ found to overlap with $q$ in a partition $P_i$, we compute $v=\max\{s.st, q.st\}$ as the {\em reference value} and report $s$ only if $v\in [P_i.st,P_i.end]$. Since $v$ is unique, $s$ is reported only in one partition.
In Figure~\ref{fig:flat}, interval $s_4$ is reported only in $P_2$ which\eat{as $P_2$} contains value $\max\{s_4.st, q.st\}$.

The 1D-grid has two drawbacks. First, the duplicate results should be computed and checked before being eliminated by the reference value\eat{ approach}. Second, if the \eat{indexed }collection contains many long intervals,
the index may grow large in size due to excessive replication which increases the number of duplicate results to be eliminated. 
In contrast\eat{On the other hand}, \eat{the }1D-grid supports fast updates as the partitions are stored independently with\eat{and there is} no need to organize the intervals in them.

\begin{figure}[t]
     \includegraphics[width=0.85\columnwidth]{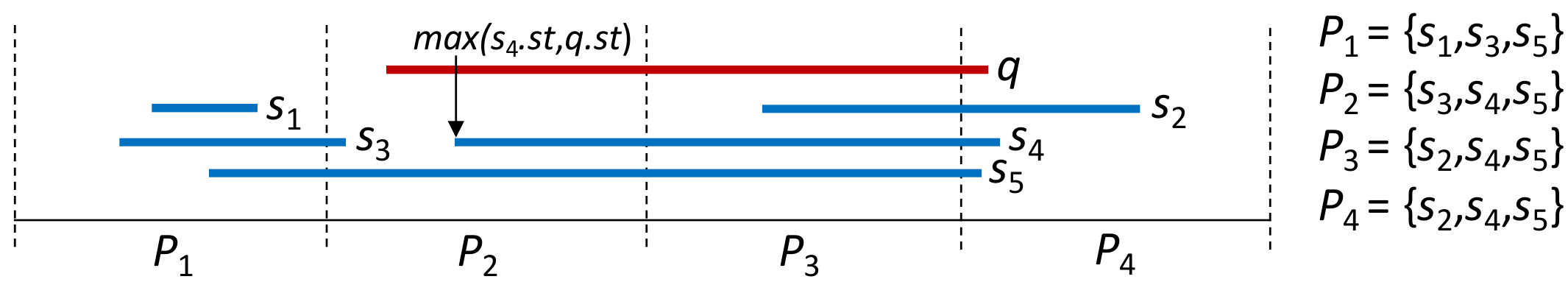}
     \vspace{-2mm}
     \caption{Example of a 1D-grid}
  \label{fig:flat}
  \vspace{-2ex}
\end{figure}

\stitle{Period index.}
The {\em period index} \cite{BehrendDGSVRK19} is a domain-partitioning self-adaptive structure, specialized for {\em range} and {\em duration} queries.
The time domain is split into coarse partitions as in a 1D-grid and then each partition is divided hierarchically, in order to organize the intervals assigned to the partition based on their positions and durations. Figure~\ref{fig:pindex} shows a set of intervals and how they are partitioned in a period index. There are two primary partitions $P_1$ and $P_2$ and each of them is divided hierarchically to three levels. Each level corresponds to a duration length and each interval is assigned to the level corresponding to its duration. The top level stores intervals shorter than the length of a division there, the second level stores longer intervals but shorter than a division there, and so on.
Hence, each interval is assigned to at most two divisions, except for intervals which are assigned to the bottom-most level, which can go to an arbitrary number of divisions. 
During query evaluation, only the divisions that overlap the query range are accessed; if the query carries a duration predicate, the divisions that are shorter than the query duration are skipped.   
For range queries, the period index performs in par with the interval tree and the 1D-grid \cite{BehrendDGSVRK19}, so we also compare against this index in Section \ref{sec:exps}.  

\begin{figure}[t]
     \includegraphics[width=0.75\columnwidth]{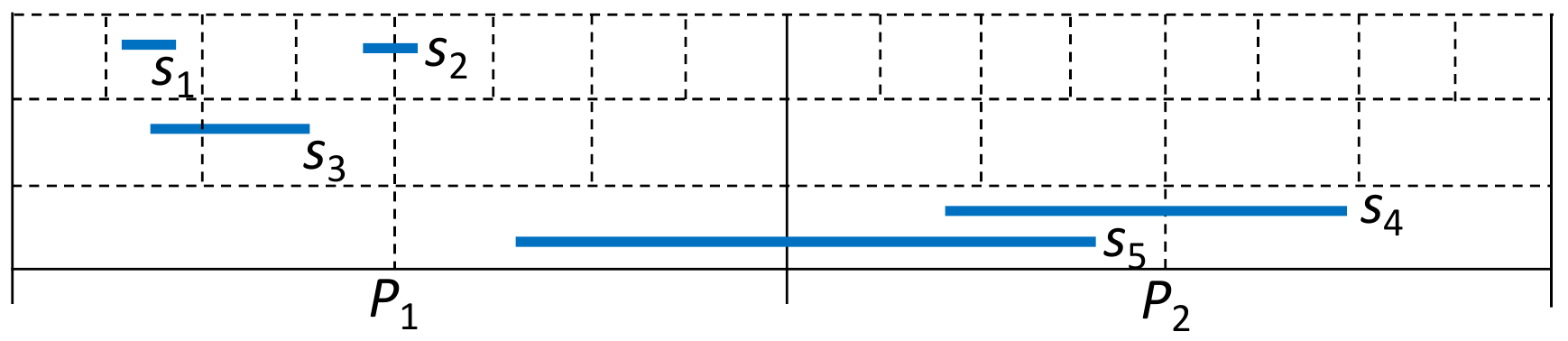}
     \vspace{-2mm}
     \caption{Example of a period index}
     \vspace{-3ex}
  \label{fig:pindex}
\end{figure}


\stitle{Other \eat{related }works.}
Another classic data structure for intervals is the {\em segment tree} \cite{BergCKO08},
a binary search tree, which
has $O(n\log n)$ space complexity and answers stabbing queries in $O(\log n+K)$ time.
The segment tree is not designed for range queries, for which it requires a duplicate result elimination mechanism.
In computational geometry \cite{BergCKO08}, indexing intervals has been studied as a subproblem within 
orthogonal 2D range search, and the worst-case optimal interval tree is typically used. Indexing intervals has re-gained interest with the advent of temporal databases \cite{BohlenDGJ17}.
\eat{In the context of}For temporal data, a number of indices are proposed for secondary memory, mainly for\eat{focus on} effective versioning and compression \cite{BeckerGOSW96,LometHNZ08}. 
Such 
indexes are tailored for historical versioned data,
while we focus on arbitrary interval sets, queries, and updates.

Additional research on indexing intervals does not address 
range queries, but other operations such as {\em temporal aggregation} \cite{KlineS95,MoonLI03,KaufmannMVFKFM13} and {\em interval joins}
\cite{DignosBG14,PiatovHD16,BourosM17,BourosLTMT20,BourosMTT21,PiatovHDP21,CafagnaB17,ChekolPS19,ZhuFYPW19}. The timeline index \cite{KaufmannMVFKFM13} can  be directly used for\eat{to perform} temporal aggregation.
Piatov et al. \cite{PiatovH17} present a collection of plane-sweep algorithms that extend the timeline index 
to support aggregation over fixed intervals, sliding window aggregates, and MIN/MAX aggregates. 
The timeline index was later adapted
for interval overlap joins \cite{PiatovHD16,PiatovHDP21}.
A {\em domain partitioning} technique for parallel processing of interval joins was proposed in \cite{BourosM17,BourosLTMT20,BourosMTT21}.
Alternative partitioning techniques for \eat{overlap }interval joins were proposed in \cite{DignosBG14,CafagnaB17}.
Partitioning techniques for interval joins cannot replace interval indices as they are not designed for
range queries.
Temporal joins considering Allen's algebra relationships for RDF data were studied in \cite{ChekolPS19}.
Multi-way interval joins in the context of temporal $k$-clique enumeration were studied in \cite{ZhuFYPW19}.
\rev{Awad et al. \cite{Awad0KVS20} define {\em interval events} in data streams by events of the same or different types that are observed in succession. Analytical operations based on aggregation or reasoning operations can be used to formulate composite interval events.}


\section{HINT}
\label{sec:hierarchical}
In this section, we propose the
\emph{Hierarchical index for INTervals} or  HINT,
which defines a hierarchical domain decomposition and assigns each interval to at most two partitions per level.
The primary goal of the index is to minimize the number of comparisons during query evaluation, while keeping the space requirements relatively low, even when there are long intervals in the collection.
HINT applies a smart division of intervals in each
partition into two groups, which avoids the production and handling of
duplicate query results and minimizes the number of intervals that have to
be accessed.
In Section~\ref{sec:hierarchical:precise}, we present a version of
HINT, which
avoids comparisons overall during query evaluation, but it is not
always applicable and may have high space requirements.
Then, Section~\ref{sec:hierarchical:partial}  presents HINT$^m$,
the general version of our index,
used for intervals in arbitrary domains.
\rev{
Last, Section~\ref{sec:hierarchical:m} describes our analytical model for setting the $m$ parameter and Section~\ref{sec:hierarchical:updates} discusses updates. 
}
Table~\ref{table:notations} summarizes the notation used
in the paper.




\begin{table}[t]
\caption{Table of notations\label{table:notations}}
\vspace{-2ex}
\centering
\footnotesize
\begin{tabular}{|c|c|}
\hline
\textbf{notation} &\textbf{description}\\ %
\hline\hline
$s.id, s.st, s.end$ &identifier, start, end point of interval $s$ \\
$q=[q.st, q.end]$ &query range interval\\
$prefix(k, x)$ & $k$-bit prefix of integer $x$\\  
$P_{\ell,i}$ & $i$-th partition at level $\ell$ of HINT/HINT$^m$\\
$P_{\ell,f}$ ($P_{\ell,l}$) & first (last) partition at level $\ell$
                              that overlaps with $q$\\
$P_{\ell,i}^O$ ($P_{\ell,i}^R$) & sub-partition of $P_{\ell,i}$ with originals (replicas)\\
\rule{0pt}{4ex}  $P_{\ell,i}^{O_{in}}$ ($P_{\ell,i}^{O_{aft}}$) &
                                                                  intervals in $P^O_{\ell,i}i$ ending inside (after) the partition\\
\hline
\end{tabular}
\vspace*{-3ex}
\end{table}

\subsection{A comparison-free version of HINT}
\label{sec:hierarchical:precise}
 
We first describe a version of HINT,
which is appropriate in the case of a {\em discrete}
and {\em not very large} domain $D$.
Specifically, assume that the domain $D$ wherefrom the 
endpoints of intervals in $\mathcal{S}$ take value is $[0,2^m\!-\!1]$.
We can define a regular hierarchical decomposition of the domain into
partitions, where at each level $\ell$ from $0$ to $m$, there are $2^{\ell}$
partitions, denoted by array $P_{\ell,0},\eat{P_{\ell,1},}\dots,P_{\ell,2^{\ell}\!-\!1}$.
Figure~\ref{fig:exact1} illustrates the hierarchical domain
partitioning for $m=4$. 

Each interval $s\in S$ is assigned
to the {\em smallest set of partitions} which collectively define
$s$. 
It is not hard to show that $s$ will be assigned to at most two
partitions per level. For example, in Figure~\ref{fig:exact1},
interval $[5,9]$ is assigned to one
partition at level $\ell=4$ and two partitions at level $\ell=3$.
The
assignment procedure is described by 
Algorithm~\ref{algo:assign}. In a nutshell, for an interval $[a,b]$, starting from the
bottom-most level $\ell$, if the last bit of $a$ (resp. $b$) is 1 (resp. 0),
we assign the interval to partition $P_{\ell,a}$ (resp. $P_{\ell,b}$) and increase $a$
(resp. decrease $b$) by one.
We then update $a$ and $b$ by cutting-off their
last bits (i.e., integer division by 2, or bitwise right-shift). If, at the next level,
$a>b$ holds, indexing $[a,b]$ is done. 

\begin{algorithm}[t]
\LinesNumbered
\footnotesize
\Input{HINT index \HIDX, interval $s$}
\Output{updated \HIDX after indexing $s$}
$a\leftarrow s.st$; $b\leftarrow s.end$\comm*{set masks to $s$ endpoints}
$\ell \leftarrow m$\comm*{start at the bottom-most level}
\While {$\ell\ge 0$ \textbf{\emph{and}} $a\le b$}
{
  \If{last bit of $a$ is $1$}
  {
    \textbf{add} $s$ to $\HIDX.P_{\ell,a}$\comm*{update partition}
    $a \leftarrow a + 1$\;
  }
  \If{last bit of $b$ is $0$}
  {
    \textbf{add} $s$ to $\HIDX.P_{\ell,b}$\comm*{update partition}
    $b \leftarrow b-1$\;
  }
  $a \leftarrow a \div 2$; $b \leftarrow b \div 2$\comm*{cut-off last bit}
  $\ell\leftarrow\ell -1$\comm*{repeat for previous level}
}
\caption{Assignment of an interval to partitions}
\label{algo:assign}
\end{algorithm}

\begin{figure}[htb]
     \includegraphics[width=0.8\columnwidth]{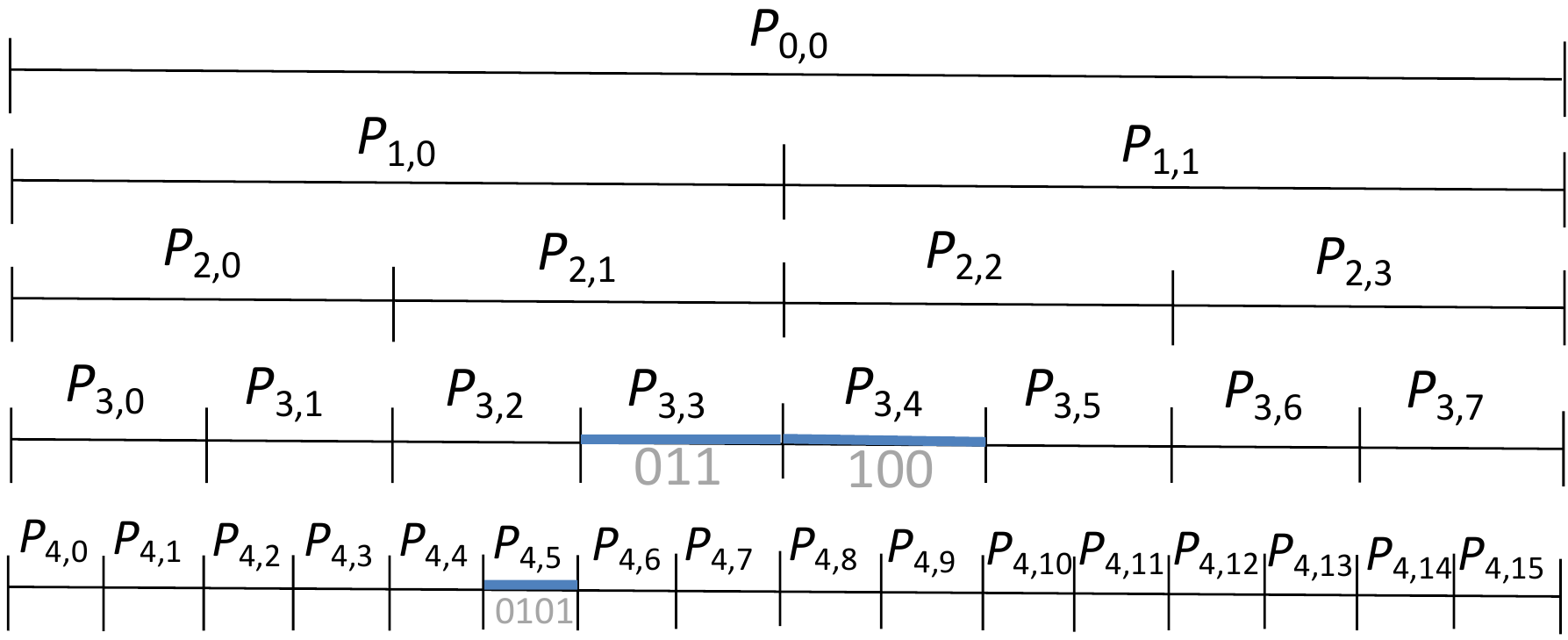}
\vspace{-2mm}
     \caption{Hierarchical partitioning and assignment of $[5,9]$}
   \label{fig:exact1}
   \vspace{-2ex}
\end{figure}

\subsubsection{Range queries}
A  range query $q$ can be evaluated by finding at each level the
partitions that overlap with $q$.
Specifically, the partitions that overlap with 
the query interval $q$ at level $\ell$ are partitions $P_{\ell,prefix(\ell,q.st)}$ to
$P_{\ell,prefix(\ell,q.end)}$, where $prefix(k,x)$ denotes the $k$-bit
prefix of integer $x$.
\rev{We call these partitions {\em relevant} to the query $q$.}
All intervals in \rev{the relevant} partitions are
guaranteed to overlap with $q$ and
intervals in none of these partitions cannot possibly overlap with $q$. 
However, since the same interval $s$ may
exist in multiple partitions that overlap with a query,
$s$ may be reported multiple times in the query result.

We propose a technique that avoids the production and therefore, the need for elimination of duplicates and, at the same time, minimizes the number of data accesses.
For this, we divide the intervals in each partition $P_{\ell,i}$ into
two groups: {\em originals} $P^O_{\ell,i}$ and
{\em replicas} $P^R_{\ell,i}$.
Group $P^O_{\ell,i}$ contains all intervals $s\in P_{\ell,i}$ that {\em begin} at
$P_{\ell,i}$
i.e.,  $prefix(\ell,s.st)=i$.
Group $P^R_{\ell,i}$ contains all intervals $s\in P_{\ell,i}$ that begin
before $P_{\ell,i}$,
i.e., $prefix(\ell,s.st)\ne i$.%
\footnote{Whether an interval $s\in P_{\ell,i}$ is assigned to $P^O_{\ell,i}$ or
$P^R_{\ell,i}$ is determined at insertion time (Algorithm~\ref{algo:assign}). At the first time Line~5 is executed, $s$ is added
as an original and in all other cases as a replica. If Line~5 is never executed, then $s$ is added as original the only time that Line~8 is executed.}
Each interval is added as original in only one partition of HINT.
For example, interval $[5,9]$ in Figure~\ref{fig:exact1} is added to 
$P^O_{4,5}$, $P^R_{3,3}$, and $P^R_{3,4}$.

Given a range query $q$, at
each level $\ell$ of the index, we report all intervals 
in the first
\rev{relevant} partition $P_{\ell,f}$ 
(i.e., $P^O_{\ell,f} \cup P^R_{\ell,f}$). Then, for every other \rev{relevant} 
partition $P_{\ell,i}$, $i>f$,
we report all
intervals in $P^O_{\ell,i}$ and ignore $P^R_{\ell,i}$. This guarantees
that no result is missed and no duplicates are produced. The reason is
that each interval $s$ will appear as original in just one partition,
hence, reporting only originals cannot produce any duplicates. At the
same time, all replicas $P^R_{\ell,f}$  in the first partitions per level $\ell$ that
overlap with  $q$ begin {\em before} $q$ and overlap with  $q$, so they should be
reported. On the other hand, replicas $P^R_{\ell,i}$ in subsequent
\rev{relevant} partitions ($i>f$) 
contain intervals, which are
either originals in a previous partition $P_{\ell,j}$, $j<i$ or
replicas in $P^R_{\ell,f}$, so, they can safely be skipped.
Algorithm~\ref{algo:queryexact}
describes the range query algorithm using HINT.

\begin{algorithm}[t]
\LinesNumbered
\footnotesize
\Input{HINT index \HIDX, query interval $q$}
\Output{set $\mathcal{R}$ of all intervals that overlap with  $q$}
$\mathcal{R}\leftarrow \emptyset$\;
\ForEach{level $\ell$ in \HIDX}
{
  $p \leftarrow prefix(\ell,q.st)$\;
  $\mathcal{R}\leftarrow \mathcal{R} \cup \{s.id | s\in \HIDX.P^O_{\ell,p} \cup \HIDX.P^R_{\ell,p}\}$\\
  \While{$p<prefix(\ell,q.end)$}
  {
    \textbf{set} $p\leftarrow p+1$\;
    $\mathcal{R}\leftarrow \mathcal{R} \cup \{s.id | s\in \HIDX.P^O_{\ell,p} \}$\\
  }
}
\Return $\mathcal{R}$\;
\caption{Range query \eat{algorithm }on HINT}
\label{algo:queryexact}
\end{algorithm}

For example, consider the hierarchical partitioning of Figure~\ref{fig:exactrange} and a query interval $q=[5,9]$. The binary
representations of $q.st$ and $q.end$ are 0101 and 1001, respectively.
The \rev{relevant} partitions 
at each level are shown in
bold (blue) and dashed (red) lines and can be determined by the
corresponding prefixes of  0101 and 1001. At each level $\ell$, {\em all}
intervals (both originals and replicas) in the first partitions $P_{\ell,f}$ (bold/blue) are
reported while in the subsequent partitions (dashed/red), {\em only} the {\em original} intervals are.

\begin{figure}[htb]
     \includegraphics[width=0.8\columnwidth]{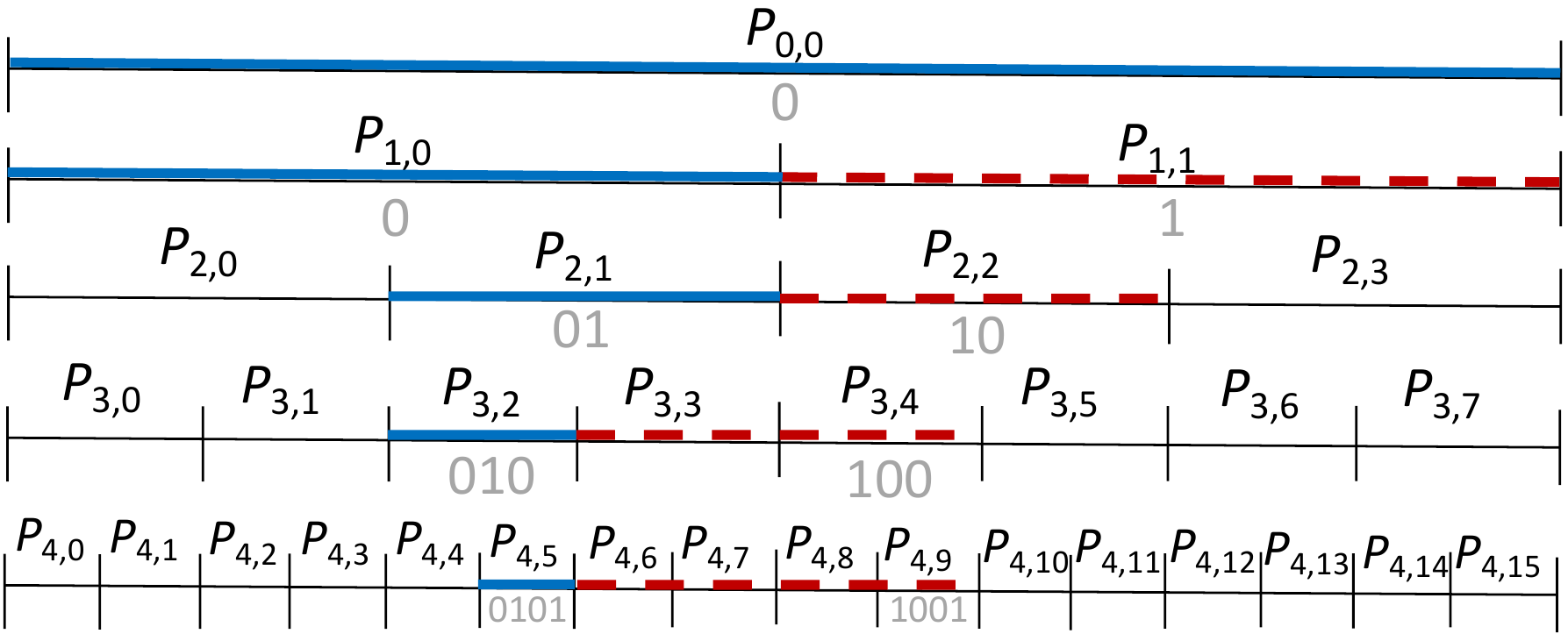}
     \vspace{-2mm}
     \caption{Accessed partitions for range query $[5,9]$}
  \label{fig:exactrange}
  \vspace{-2ex}
\end{figure}

\stitle{Discussion.}
The version of HINT described above finds all
range query results, without conducting any comparisons. 
This means
that in each partition $P_{\ell,i}$, we only have to keep the ids of
the intervals that are assigned to $P_{\ell,i}$ and do not have
to store/replicate the interval endpoints. 
In addition,
the relevant partitions at each level are computed by fast bit-shifting
operations which are comparison-free.
To use HINT for arbitrary integer domains, we should first
normalize all interval endpoints by subtracting the minimum endpoint,
in order to convert them to values in a $[0,2^m-1]$ domain (the same
transformation should be applied on the queries).
If the required $m$ is very large, we can index the intervals based on
their $m$-bit prefixes and support approximate search on discretized data.
Approximate
search can also be applied on intervals in a real-valued domain, after
rescaling and discretization in a similar way.


\subsection{HINT$^m$: indexing arbitrary intervals}\label{sec:hierarchical:partial}
We now present a generalized version of HINT, denoted by HINT$^m$,
which
can be used for intervals in arbitrary
domains.
HINT$^m$ uses a hierarchical domain partitioning with $m+1$ levels,
based on a $[0,2^m-1]$ domain $D$;
each raw interval endpoint is {\em mapped} to a
value in $D$, by linear rescaling. The mapping function
$f(\mathbb{R}\to D)$ is $f(x) = \lfloor
\frac{x-min(x)}{max(x)-min(x)} \cdot  (2^m-1)\rfloor$, where $min(x)$
and $max(x)$ are the minimum and maximum interval endpoints in the
dataset $S$, respectively.
Each raw interval $[s.st,s.end]$ is mapped to interval
$[f(s.st),f(s.end)]$.
The mapped interval is then assigned to at most two partitions per
level in HINT$^m$, using
Algorithm~\ref{algo:assign}.

For the ease of presentation, we will assume that the raw interval
endpoints take values in $[0,2^{m'}-1]$, where $m'>m$, which means
that the mapping function $f$ simply outputs the $m$ most significant
bits of its input. As an example, assume that $m=4$ and
$m'=6$. Interval $[21,38]$
(=$[0b010101,0b100110]$)
is mapped to interval
$[5,9]$
(=$[0b0101,0b1001]$)
and assigned to partitions $P_{4,5}$,
$P_{3,3}$, and $P_{3,4}$, as shown in Figure \ref{fig:exact1}.
\eat{
Intervals $[20,37]$
(=$[0b010100,0b100101]$)
and $[22,39]$
(=$[0b010110,0b100111]$)
also go to exactly the same set of partitions.}
So, in contrast to HINT,
the set of partitions whereto an interval $s$ is assigned in HINT$^m$ does not
define $s$, but the smallest
interval in the $[0,2^{m}-1]$ domain $D$, which {\em covers} $s$.
As in HINT, at each level $\ell$, we
divide each partition 
$P_{\ell,i}$ to $P^O_{\ell,i}$ and $P^R_{\ell,i}$, 
to avoid duplicate query results.




\subsubsection{Query evaluation using HINT$^m$}
\label{sec:hierarchical:qeval}
For a range query $q$,
simply reporting all intervals in the relevant partitions at
each level (as in Algorithm~\ref{algo:queryexact}) would produce false
positives.
Instead,
comparisons to the query endpoints may be required for
the first and the last partition at each level that overlap with $q$.
Specifically, we can consider each level of HINT$^m$ as a
1D-grid (see Section~\ref{sec:related})
and go
through the partitions at each level $\ell$ that overlap with  $q$.
For the first partition
$P_{\ell,f}$, we verify whether $s$ overlaps with $q$ for each
interval $s\in P^O_{\ell,f}$ and each $s\in P^R_{\ell,f}$.
For the last partition $P_{\ell,l}$, we verify whether
$s$ overlaps with $q$ for each interval $s\in P^O_{\ell,l}$.
For each partition $P_{\ell,i}$ between $P_{\ell,f}$ and $P_{\ell,l}$, we
report all $s\in P^O_{\ell,i}$ without any comparisons.
As an example, consider the HINT$^m$ index and the
range query interval $q$ shown in Figure~\ref{fig:firstlast}.
The identifiers of the relevant partitions to $q$ are shown in the
figure (and also some indicative intervals that are assigned to these partitions).
At level $m=4$, we have to perform comparisons for all intervals in
the first relevant partitions $P_{4,5}$. In partitions
$P_{4,6}$,\ldots,$P_{4,8}$, we just report the originals in them as results,
while in partition $P_{4,9}$ we compare the start points of all
originals with $q$, before we can confirm whether they are results or not.
We can simplify the overlap tests at the first and the last
partition of each level $\ell$ based on the following:
\begin{lemma}\label{lem:compcut}
At every level $\ell$, each $s\in P^R_{\ell,f}$ is a query result
iff $q.st \le s.end$. If $l>f$, each $s\in P^O_{\ell,l}$ is a query result
    iff $s.st \le q.end$.
\end{lemma}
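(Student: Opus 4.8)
The plan is to show that in each of the two cases one half of the overlap condition is automatically satisfied, so that the full two-sided test degenerates to the single comparison claimed. I would first recall that, per the definition in Section~\ref{sec:intro}, $s$ is a result of $q$ precisely when $s$ overlaps $q$, i.e.\ when both $s.st\le q.end$ and $q.st\le s.end$ hold (the OR-form given there is equivalent to this AND-form by a one-line case split on whether $s.st\le q.st$). I would also fix the level-$\ell$ geometry: partition $P_{\ell,i}$ spans the mapped values $[\,i\cdot 2^{m-\ell},\,(i+1)2^{m-\ell}-1\,]$, and, as established for Algorithm~\ref{algo:assign}, the partitions assigned to $s$ collectively cover exactly the mapped interval $[f(s.st),f(s.end)]$ and never reach outside it. Since the mapping $f(\cdot)$ is monotone non-decreasing, $f(x)<f(y)$ forces $x<y$; this is the bridge I will use to turn statements about mapped endpoints (which govern partition membership) into statements about the raw endpoints (on which the reported comparison is made).

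For the first claim, let $s\in P^R_{\ell,f}$. Being a replica in $P_{\ell,f}$ means $f(s.st)$ lies outside $P_{\ell,f}$, while the cover property gives $f(s.st)\le P_{\ell,f}.st$; together these yield $f(s.st)<P_{\ell,f}.st$. Since $P_{\ell,f}$ is the first relevant partition it contains $f(q.st)$, so $P_{\ell,f}.st\le f(q.st)$, and chaining gives $f(s.st)<f(q.st)$. By monotonicity $s.st<q.st\le q.end$, so $s.st\le q.end$ holds unconditionally and the overlap test collapses to its remaining half $q.st\le s.end$, as claimed.

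The second claim is symmetric. For $s\in P^O_{\ell,l}$ we have $f(s.st)\in P_{\ell,l}$, hence $f(s.st)\ge P_{\ell,l}.st$; and since $P_{\ell,l}$ is the last relevant partition (containing $f(q.end)$), the hypothesis $l>f$ places it strictly after the partition $P_{\ell,f}$ that contains $f(q.st)$, giving $f(q.st)\le P_{\ell,f}.end<P_{\ell,l}.st$. Thus $f(s.st)>f(q.st)$, and monotonicity yields $q.st\le s.st\le s.end$, so $q.st\le s.end$ holds unconditionally and the test reduces to $s.st\le q.end$. The only points I would be careful to spell out -- and the real substance of the argument -- are the mapped-versus-raw bookkeeping together with the monotonicity lifting, and the strictness of the two chained inequalities, since it is exactly there that the replica hypothesis (first case) and the assumption $l>f$ (second case) are used.
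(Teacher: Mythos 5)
Your proof is correct and follows essentially the same route as the paper's: in each case you show one half of the two-sided overlap test is automatically satisfied ($s.st<q.st$ for replicas in the first relevant partition, $q.st<s.st$ for originals in the last one when $l>f$), so only the stated single comparison remains. The paper asserts these two strict inequalities in one line each, whereas you justify them explicitly via the partition geometry and the monotonicity of the rescaling map $f$ — a more careful account of the same argument, not a different one.
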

\rev{
\begin{proof}
For the first 
relevant partition $P_{\ell,f}$ at each 
level $\ell$, for each replica $s\in P^R_{\ell,f}$,
$s.st < q.st$,
so $q.st \le s.end$
suffices  as an overlap test.
For the last partition $P_{\ell,l}$,
if $l>f$, for each original $s\in P^O_{\ell,f}$, 
$q.st < s.st$, so
$s.st \le q.end$ suffices  as an overlap test.
\end{proof}
}

\begin{figure}[t]
     \includegraphics[width=0.8\columnwidth]{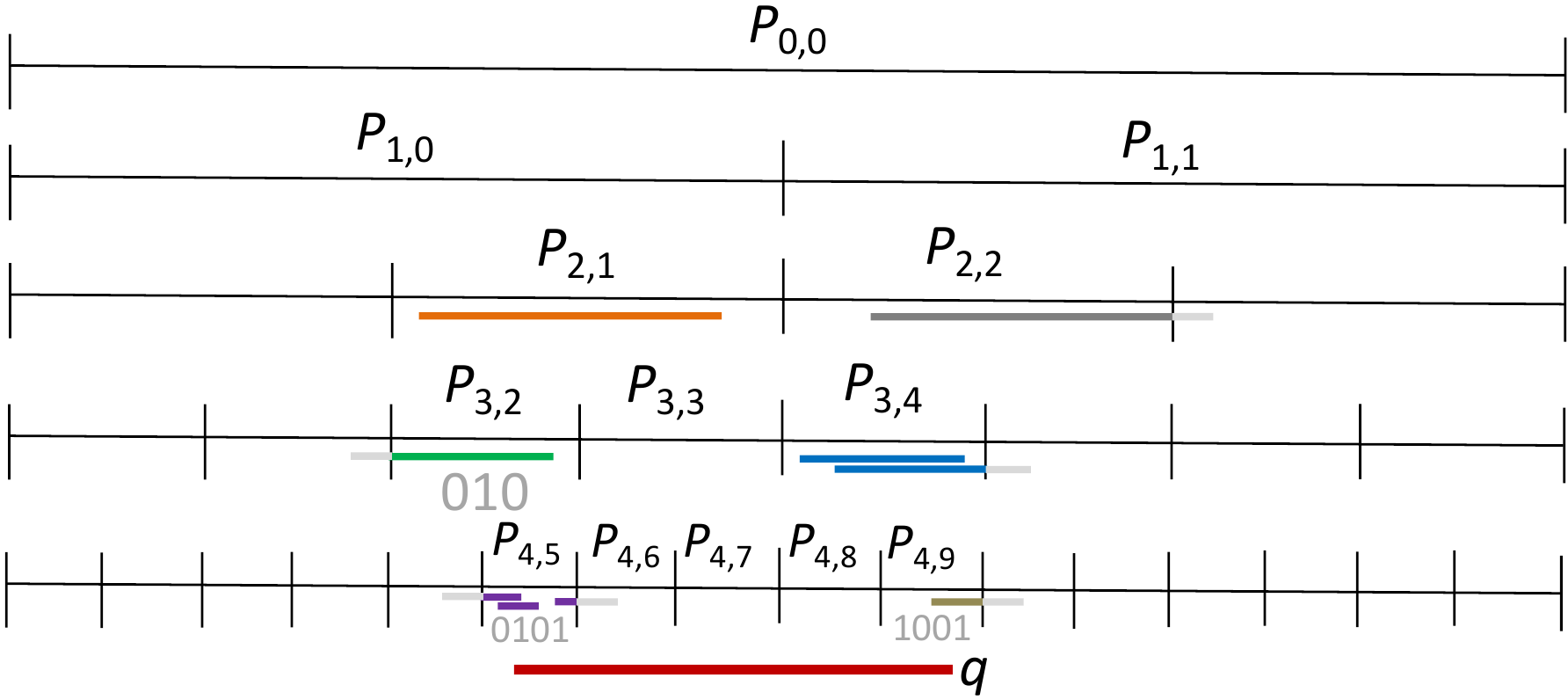}
     \vspace{-2mm}
     \caption{Avoiding redundant comparisons in HINT$^m$}
  \label{fig:firstlast}
  \vspace{-2ex}
\end{figure}

\subsubsection{Avoiding redundant comparisons in query evaluation}
One of our most important findings in this study and a powerful
feature of HINT$^m$ is that
at most levels, 
it is not necessary to do
comparisons at the first and/or the last partition.
For instance, in the previous example,
we do not have to perform comparisons for partition $P_{3,4}$, since
any interval assigned to $P_{3,4}$ should overlap with  $P_{4,8}$ and the
interval spanned by $P_{4,8}$ is covered by $q$. This means that the
start point of all intervals in $P_{3,4}$ is guaranteed to be before
$q.end$ (which is inside $P_{4,9}$). In addition, observe that for any
relevant partition which is the last partition at an upper level and
covers $P_{3,4}$ (i.e., partitions $\{P_{2,2},P_{1,1},P_{0,0}\}$),
we do not have to conduct
the $s.st\le q.end$ tests as intervals in these partitions are
guaranteed to start before  $P_{4,9}$.
The lemma below formalizes these observations:

\rev{
\begin{lemma}\label{lem:firstlast}
  If the first (resp. last) relevant partition for a query $q$ at level $\ell$ ($\ell<m$) starts (resp. ends) at
 the same value as the  first (resp. last) relevant partition at level
 $\ell+1$, then for every first (resp. last) relevant partition $P_{v,f}$ (resp. $P_{v,l}$) at levels $v\le
 \ell$, each interval $s\in P_{v,f}$ (resp. $s\in P_{v,l}$)  
 satisfies $s.end\ge q.st$ (resp. $s.st\le q.end$).
\end{lemma}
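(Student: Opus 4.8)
The plan is to make the dyadic geometry explicit and then reduce both halves of the statement to one monotonicity argument. I would first fix spans: at level $\ell$ the partition $P_{\ell,i}$ covers the mapped domain interval $[i\cdot 2^{m-\ell},(i+1)2^{m-\ell}-1]$, and the first (resp. last) relevant partition is $P_{\ell,f_\ell}$ with $f_\ell=\lfloor f(q.st)/2^{m-\ell}\rfloor$ (resp. $P_{\ell,l_\ell}$ with $l_\ell=\lfloor f(q.end)/2^{m-\ell}\rfloor$). The single structural fact I rely on is the invariant of Algorithm~\ref{algo:assign}: whenever $s$ is assigned to $P_{\ell,i}$, the entire span of $P_{\ell,i}$ lies inside the mapped interval $[f(s.st),f(s.end)]$, so $f(s.end)\ge (i+1)2^{m-\ell}-1$ (the partition's right endpoint) and symmetrically $f(s.st)\le i\cdot 2^{m-\ell}$ (its left endpoint). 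I would prove only the ``first\,/\,$s.end\ge q.st$'' half and remark that the ``last'' half is identical with left/right and $st$/$end$ swapped.

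The first step is to translate the hypothesis. The assertion that $P_{\ell,f_\ell}$ and $P_{\ell+1,f_{\ell+1}}$ \emph{start at the same value} is equivalent to $f_{\ell+1}=2f_\ell$, i.e.\ $f(q.st)$ falls in the \emph{left half} of $P_{\ell,f_\ell}$. Hence $f(q.st)\le f_\ell 2^{m-\ell}+2^{m-\ell-1}-1$, which is strictly below the right endpoint $(f_\ell+1)2^{m-\ell}-1$ of $P_{\ell,f_\ell}$, the gap being $2^{m-\ell-1}>0$ because $\ell<m$. So the hypothesis yields the clean strict inequality $f(q.st)<(f_\ell+1)2^{m-\ell}-1$.

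The second step extends this to every coarser level $v\le\ell$ by nesting. Since $P_{v,f_v}$ and $P_{\ell,f_\ell}$ both contain $f(q.st)$ and level $v$ is coarser, $P_{v,f_v}\supseteq P_{\ell,f_\ell}$, so its right endpoint satisfies $(f_v+1)2^{m-v}-1\ge (f_\ell+1)2^{m-\ell}-1>f(q.st)$. Taking any $s\in P_{v,f_v}$, the structural fact gives $f(s.end)\ge (f_v+1)2^{m-v}-1>f(q.st)$, that is $f(s.end)>f(q.st)$. Finally I invoke monotonicity of the mapping $f$: the contrapositive of $x\le y\Rightarrow f(x)\le f(y)$ turns $f(s.end)>f(q.st)$ into $s.end>q.st$ on the raw endpoints, hence $s.end\ge q.st$, which is exactly the claim. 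The ``last'' half runs the same way: the hypothesis becomes ``$f(q.end)$ lies in the right half of $P_{\ell,l_\ell}$'', giving $f(q.end)>l_\ell 2^{m-\ell}\ge$ (left endpoint of $P_{v,l_v}$) $\ge f(s.st)$, and monotonicity delivers $s.st\le q.end$.

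The step I expect to need the most care is the bridge between the mapped domain, where partition containment and all the boundary arithmetic live, and the raw domain, in which the overlap predicate is actually stated: monotonicity of $f$ only transfers a \emph{strict} separation $f(q.st)<f(s.end)$ back to the raw values, so I must be sure the hypothesis produces strictness rather than mere $\le$. The left-half computation is precisely what guarantees this, and the nesting argument preserves it, so the two must be kept aligned. The remaining delicacy is the off-by-one bookkeeping at the closed-interval endpoints (the $-1$ terms and the $\lfloor\cdot\rfloor$ in $f_\ell$), but that is routine once the left-half characterization is in hand.
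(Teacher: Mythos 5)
Your proof is correct and follows essentially the same route as the paper's: the core step in both is that any interval assigned to $P_{\ell,f}$ must extend to that partition's right endpoint (else it would have been placed in the finer partition $P_{\ell+1,f}$), the hypothesis places $q.st$ in the left half of $P_{\ell,f}$ and hence strictly before that endpoint, and nesting of coarser partitions propagates the bound to all $v\le\ell$. The only difference is that you make explicit the dyadic index arithmetic and the transfer from the mapped domain back to raw endpoints via strict monotonicity, which the paper's proof leaves implicit; this is a welcome tightening rather than a new argument.
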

\begin{proof}
	Let $P.st$ (resp. $P.end$) denote the first (resp. last) domain value of partition $P$.
	Consider the first relevant partition $P_{\ell,f}$ at level $\ell$ and assume that 
	$P_{\ell,f}.st=P_{\ell+1,f}.st$. Then, for every interval $s\in P_{\ell,f}$, $s.end\ge P_{\ell+1,f}.end$, otherwise $s$ 
	would have been allocated to $P_{\ell+1,f}$ instead of $P_{\ell,f}$. Further, $P_{\ell+1,f}.end \ge q.st$, since 
	$P_{\ell+1,f}$ is the first partition at level $\ell+1$ which overlaps with $q$. Hence, $s.end\ge q.st$.
	Moreover, for every interval $s\in P_{v,f}$ with $v<\ell$, $s.end\ge P_{\ell+1,f}.end$ holds, as interval $P_{v,f}$ covers interval $P_{\ell,f}$; so, we also have $s.end\ge q.st$.
	Symmetrically, we prove that if $P_{\ell,l}.end=P_{\ell+1,l}.end$, then for each $s\in P_{v,l}, v\le \ell$, $s.st\le q.end$.
\end{proof}

}
We next focus on how to rapidly check the condition of 
Lemma~\ref{lem:firstlast}. Essentially, if the last bit of the
offset
$f$ (resp. $l$) of the first (resp. last) partition $P_{\ell,f}$
(resp. $P_{\ell,l}$) relevant to the query
at level $\ell$ is 0 (resp. 1), then \eat{this means that }the 
first (resp. last) partition at level $\ell-1$ above satisfies the
condition.
For example, in Figure~\ref{fig:firstlast},
consider the last relevant
partition $P_{4,9}$ at level $4$.
The last bit of $l=9$ is 1; so,
the last partition $P_{3,4}$ at level $3$ satisfies the condition and
we do not have to perform comparisons in the last 
partitions at level $3$ and above.

Algorithm~\ref{algo:rangeqhier} is a pseudocode for the range query algorithm on
HINT$^m$.
The algorithm accesses all levels of the index, bottom-up.
It uses two  auxiliary flag
variables $compfirst$ and $complast$ to mark whether it is
necessary to perform comparisons at the current level (and all levels
above it) at the first and the last partition,
respectively, according to the discussion in
the previous paragraph.
At
each level $\ell$, we find the 
offsets
of the relevant partitions
to the query, based on the $\ell$-prefixes of $q.st$ and $q.end$ (Line~\ref{lin:bounds}).
For the first position $f=prefix(q,st)$, the partitions holding
originals and replicas $P^O_{\ell,f}$ and $P^R_{\ell,f}$ are
accessed.
The algorithm
first checks whether $f=l$, i.e., the first and the last partitions
coincide.
In this case, if $compfirst$ and $complast$ are set, then we
perform all comparisons in $P^O_{\ell,f}$ and apply the first
observation in Lemma~\ref{lem:compcut} to $P^R_{\ell,f}$.
Else, if only $complast$ is set, we can safely skip the
$q.st\le s.end$ comparisons; if only $compfist$ is set, regardless
whether $f=l$, we just perform $q.st\le s.end$ comparisons to both
originals and replicas to the first partition.
Finally, if neither  $compfirst$ nor $complast$ are set, we just
report all intervals in the first partition as results.
If we are at the last partition $P_{\ell,l}$ and $l>f$ (Line 17) then  we just examine
$P^O_{\ell,l}$  and apply just the  $s.st\le q.end$ test for each
interval there, according to Lemma~\ref{lem:compcut}. Finally, for all partitions in-between the first
and the last one, we simply report all original intervals there. 

\begin{algorithm}[t]
\LinesNumbered
\footnotesize
\Input{HINT$^m$ index \HIDX, query interval $q$}
\Output{set $\mathcal{R}$ of intervals that overlap with  $q$}
$compfirst\leftarrow TRUE$; $complast \leftarrow TRUE$\;
$\mathcal{R}\leftarrow \emptyset$\; 
\For(\comm*[f]{bottom-up}){$\ell= m$ \textbf{to} $0$}
{
  $f \leftarrow prefix(\ell,q.st)$; $l \leftarrow prefix(\ell,q.end)$\; \label{lin:bounds}
  \For{$i=f$ to $l$}
  {
    \If(\comm*[f]{first overlapping partition}){$i=f$}
    {
      \If{$i=l$ {\bf and} $compfirst$ {\bf and} 
          $complast$}
        {
          $\mathcal{R}\leftarrow \mathcal{R} \cup \{s.id | s\in
          \HIDX.P^O_{\ell,i}, q.st \le s.end \land s.st\le q.end\}$\;
          $\mathcal{R}\leftarrow \mathcal{R} \cup \{s.id | s\in 
          \HIDX.P^R_{\ell,i}, q.st\le s.end\}$\;
        }
        \ElseIf{$i=l$ {\bf and} $complast$}
        {
         $\mathcal{R}\leftarrow \mathcal{R} \cup \{s.id | s\in \HIDX.P^O_{\ell,i}, s.st\le q.end\}$\;
          $\mathcal{R}\leftarrow \mathcal{R} \cup \{s.id | s\in \HIDX.P^R\}$\;
        }
        \ElseIf{$compfirst$}
        {
          $\mathcal{R}\leftarrow \mathcal{R} \cup \{s.id | s\in
          \HIDX.P^O_{\ell,i}\cup \HIDX.P^R_{\ell,i}, q.st\le s.end\}$\;
        \vspace*{-3ex}
        }
        \Else 
        {
          $\mathcal{R}\leftarrow	` \mathcal{R} \cup \{s.id | s\in \HIDX.P^O_{\ell,i} \cup
          \HIDX.P^R_{\ell,i}\}$\;
        }
     }
    \ElseIf(\comm*[f]{last partition, $l>f$}){$i=l$ {\bf and} $complast$}
    {
         $\mathcal{R}\leftarrow \mathcal{R} \cup \{s.id | s\in \HIDX.P^O_{\ell,i}, s.st\le q.end\}$\;
    }
    \Else(\comm*[f]{in-between or last ($l>f$), no comparisons})
     {
          $\mathcal{R}\leftarrow \mathcal{R} \cup \{s.id | s\in \HIDX.P^O_{\ell,i} \}$\;
     }
  }
  \If(\comm*[f]{last bit of $f$ is 0}){$f~\textrm{mod}~2 = 0$}
  {$compfirst \leftarrow FALSE$\;}
  \If(\comm*[f]{last bit of $l$ is 1}){$l~\textrm{mod}~2 = 1$}
  {$complast \leftarrow FALSE$\;}
}
\Return $\mathcal{R}$\;
\caption{Range query \eat{algorithm }on HINT$^m$}
\label{algo:rangeqhier}
\end{algorithm}

\rev{
\subsubsection{Complexity Analysis}
\label{sec:hint:analysis}
Let $n$ be the number of intervals in $\mathcal{S}$. 
Assume that the domain is $[0,2^{m'}-1]$, where $m'>m$.
To analyze the space complexity of HINT$^m$, we first prove the following lemma:

\begin{lemma}\label{lem:lastlevelsize}
	The total number of intervals assigned at the lowest level $m$ of
	HINT$^m$ is expected to be $n$.
\end{lemma}

\begin{proof}
        Each interval 
        $s\in \mathcal{S}$ will go to zero, one, or two partitions at level $m$, based\eat{depending} on the bits of $s.st$ and $s.end$ at position $m$ (see Algorithm~\ref{algo:assign}); 
	on average, $s$ will go to one partition.
\end{proof}


Using Algorithm \ref{algo:assign}, when an interval is assigned to a partition at a level $\ell$, the interval is {\em truncated} (i.e., shortened) by $2^{m'-\ell}$. Based on this, we analyze the space complexity of HINT$^m$ as follows.

\begin{theorem}\label{lem:spacecomp}
	Let $\lambda$ be the average length of intervals in input
        collection $S$.
        The space complexity of
	HINT$^m$ is $O(n\cdot\log_2 (2^{\log_2 \lambda -m'+m}+1))$.
\end{theorem}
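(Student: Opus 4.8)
The plan is to charge the space to the total number of \emph{(interval, partition)} incidences produced by Algorithm~\ref{algo:assign}: every partition stores only a constant amount of data per interval it receives (an id, together with the endpoints needed for the comparisons of Lemma~\ref{lem:compcut}), so if $a(s)$ denotes the number of partitions to which $s$ is assigned, the total space is $\Theta\big(\sum_{s\in S} a(s)\big)$. It therefore suffices to bound $a(s)$ by the length $L(s)=s.end-s.st$ of $s$ and then aggregate over the collection.

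First I would bound $a(s)$ for a single interval. By construction $s$ is assigned to at most two partitions per level, so $a(s)\le 2\,(m-\ell^\star+1)$, where $\ell^\star$ is the topmost (smallest-index, hence widest) level at which $s$ receives a partition. The key observation is the truncation property noted above: a partition at level $\ell$ spans $2^{m'-\ell}$ raw values, and every partition assigned to $s$ is a tile contained in the smallest mapped interval covering $s$, whose raw extent is $O\big(L(s)+2^{m'-m}\big)$. Hence the widest tile, the one at level $\ell^\star$, satisfies $2^{m'-\ell^\star}=O\big(L(s)+2^{m'-m}\big)$, which after taking logarithms and using $m-m'=\log_2 2^{m-m'}$ gives
\[
 m-\ell^\star+1 \;=\; O\!\left(\log_2\!\big(L(s)\cdot 2^{m-m'}+1\big)\right).
\]
Thus $a(s)=O\!\big(1+\log_2(L(s)\,2^{m-m'}+1)\big)$; the $+1$ absorbs the intervals shorter than a bottom-level partition, which occupy only level $m$ and are assigned $O(1)$ times, in agreement with Lemma~\ref{lem:lastlevelsize}.

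Finally I would aggregate. The map $\phi(x)=\log_2(x\,2^{m-m'}+1)$ is concave (a logarithm composed with an increasing affine function), so Jensen's inequality yields $\frac1n\sum_{s\in S}\phi(L(s))\le \phi\big(\frac1n\sum_{s}L(s)\big)=\phi(\lambda)$. Combining with the per-interval bound,
\[
 \sum_{s\in S} a(s) \;=\; O\!\left(n\cdot\log_2\!\big(\lambda\cdot 2^{m-m'}+1\big)\right) \;=\; O\!\left(n\cdot\log_2\!\big(2^{\log_2\lambda-m'+m}+1\big)\right),
\]
which is the stated bound (understood in the usual regime where this term dominates the unavoidable $\Omega(n)$ cost of storing the ids). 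I expect the per-interval estimate to be the crux: one must argue that the \emph{highest occupied level}, rather than the trivial count of $m+1$ levels, governs the replication, by comparing the width of the topmost tile to the interval's length; the concavity/Jensen step is then routine but is exactly what lets the sum of per-interval logarithms collapse into a single logarithm of the average length $\lambda$.
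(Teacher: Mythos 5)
Your proof is correct, and it reaches the paper's bound by what is at heart the same counting argument --- the space is $n$ times the number of levels an interval occupies, and that number is $\log_2(\lambda\cdot 2^{m-m'}+1)$ --- but you package the two key steps differently, and in both cases more rigorously. Where the paper argues in expectation (each interval goes to \emph{one} partition per level on average, per Lemma~\ref{lem:lastlevelsize}, and is truncated by $2^{m'-\ell}$ at level $\ell$ until the geometric sum $2^{m'-m}+\dots+2^{m'-m+k-1}$ exceeds $\lambda$), you give a deterministic per-interval bound: at most two partitions per level, with the number of occupied levels controlled by comparing the width $2^{m'-\ell^\star}$ of the widest assigned tile to $L(s)+O(2^{m'-m})$. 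These are two views of the same inequality, but yours holds in the worst case rather than in expectation, at the price of a factor $2$. The second difference is the aggregation: the paper implicitly treats every interval as if it had the average length $\lambda$, whereas your appeal to concavity of $x\mapsto\log_2(x\,2^{m-m'}+1)$ and Jensen's inequality is exactly the missing justification for why the sum of per-interval logarithms is bounded by $n$ times the logarithm of the average --- a step the paper's proof glosses over. Your caveat about the additive $\Omega(n)$ term for very short intervals is also apt; the paper's stated bound has the same degenerate behaviour when $\lambda\ll 2^{m'-m}$ and should be read as $O(n\cdot\max(1,k))$. In short: same skeleton, but your version upgrades the expected-case, uniform-length argument to a worst-case bound with a correct treatment of heterogeneous interval lengths.
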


\begin{proof}
	Based on Lemma \ref{lem:lastlevelsize}, each $s\in S$ will be assigned on average to one partition at level $m$ and will be truncated by $2^{m'-m}$. Following Algorithm \ref{algo:assign}, at the next level $m-1$, $s$ is also be expected to be assigned to one partition (see Lemma \ref{lem:lastlevelsize}) and truncated by $2^{m'-m+1}$, and so on, until the entire interval is truncated (condition $a\le b$ is violated at Line 3 of Algorithm \ref{algo:assign}). Hence, we are looking for the number of levels whereto each $s$ will be assigned, or for the
	smallest $k$ for which $2^{m'-m}+2^{m'-m+1}+\dots +2^{m'-m+k-1}\ge \lambda$.
      Solving the inequality gives $k\ge \log_2 (2^{\log_2 \lambda
        -m'+m}+1)$ and the space complexity of HINT$^m$ is $O(n\cdot k)$
\end{proof}
	
\eat{Regarding}For the computational cost of 
range queries in terms of conducted comparisons, 
in the worst case, $O(n)$ intervals are assigned to the first relevant partition $P_{m,f}$ at level $m$ and $O(n)$ comparisons are required. 
To estimate the {\em expected cost} of range query evaluation in
terms of conducted comparisons, we assume a uniform distribution of
intervals to partitions and random query intervals. 

\eat{
We first estimate
the number of intervals assigned to level $m$ of the index.

\begin{lemma}\label{lem:lastlevelsize}
	The total number of intervals assigned at the lowest level $m$ of
	HINT$^m$ is expected to be $n$.
\end{lemma}

\begin{proof}
	Let $y$ be the length of each partition at the lowest level $m$.
	Let $\lambda$ be the length of an interval $s\in \mathcal{S}$.
	If $\lambda\ge y$, 
	then $s$ is expected to fall in
	zero, one, or two partitions at level $m$ depending on the last indexable
	bits of $s.st$ and $s.end$; there are four combinations for the two
	bits and on average $s$ is going to be assigned to just one partition.
	If $\lambda< y$, $s$ either spans two
	partitions at level $m$ or just one. In the first case, the partitions are
	neighboring and $s$ will either fall in both or in neither of them
	(hence, one partition on average). In the second case, $s$ will fall
	into just the partition that includes it.
	Overall, each interval $s\in \mathcal{S}$ falls into exactly one partition at level $m$
	on average.
\end{proof}
}

\begin{lemma}\label{lem:expectedcompparts}
	The expected number of HINT$^m$ partitions for which
	we have to conduct comparisons is four.
\end{lemma}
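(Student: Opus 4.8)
The plan is to count, level by level, the partitions at which Algorithm~\ref{algo:rangeqhier} genuinely performs comparisons, and then to take the expectation under the assumed uniform model. By the algorithm and by Lemmas~\ref{lem:compcut} and~\ref{lem:firstlast}, at each level $\ell$ comparisons are confined to the first relevant partition $P_{\ell,f}$ (only while the flag \emph{compfirst} is set) and to the last relevant partition $P_{\ell,l}$ (only while \emph{complast} is set); every in-between partition is reported comparison-free. Hence the quantity to bound is exactly the number of levels at which \emph{compfirst} is set plus the number of levels at which \emph{complast} is set, and my first step is to characterize those two counts in terms of the bits of the (mapped) query endpoints.

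Write $q.st = b_{m-1}\cdots b_1 b_0$ in binary over $D=[0,2^m-1]$. The offset of the first relevant partition at level $\ell$ is $f = prefix(\ell,q.st)$, whose least significant bit equals $b_{m-\ell}$. Since the algorithm clears \emph{compfirst} immediately after a level $\ell$ whenever $f \bmod 2 = 0$ and never resets it, the flag remains set precisely at levels $m, m-1, \ldots, m-T_{st}$, where $T_{st}$ is the number of trailing $1$-bits of $q.st$; that is $T_{st}+1$ levels, with the bottom level $m$ always contributing because the flag is initialized to \textit{TRUE}. Symmetrically, \emph{complast} stays set at $T_{end}+1$ levels, where $T_{end}$ is the number of trailing $0$-bits of $q.end$. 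This reduces the lemma to computing $E[T_{st}]$ and $E[T_{end}]$.

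Under the assumption of random queries, the low-order bits of each endpoint behave as i.i.d.\ fair coin flips, so $\Pr[T_{st}=j] = 2^{-(j+1)}$ and $E[T_{st}] = \sum_{j\ge 0} j\cdot 2^{-(j+1)} = 1$, and identically $E[T_{end}] = 1$. Therefore the expected number of levels carrying first-partition comparisons is $E[T_{st}]+1 = 2$, and likewise $2$ for the last partition, so the expected number of partitions for which comparisons are conducted is $2+2 = 4$.

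The main obstacle is making the counting exact at the corner cases so that the clean ``$2+2$'' does not conceal a bias. I must argue that the finiteness of the level range only truncates the geometric tail of $T_{st}$ (the trailing run is capped at $m$), contributing a negligible correction for the values of $m$ of interest, and I must treat the upper levels where the first and last relevant partitions coincide ($f=l$) so that a single partition carrying both kinds of comparison is not double-counted. I would dispatch the latter by observing that $f=l$ forces $q.st$ and $q.end$ to share their top bits, which occurs only near the root where, generically, both flags have already been cleared; under the uniform model the probability of a level with $f=l$ and both flags still set is lower-order, so it does not perturb the expectation, leaving the value $4$.
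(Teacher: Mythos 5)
Your argument is correct and is essentially the paper's own proof in different bookkeeping: the paper sums the expected number of comparison-requiring partitions level by level as $2+1+0.5+0.25+\dots=4$, while you compute the expected length of the trailing-bit runs of $q.st$ and $q.end$ ($2\times(1+E[T])=4$), which is the same geometric series reorganized and rests on the same 50\%-per-level observation from Lemma~\ref{lem:firstlast}. Your additional attention to the truncation at the top levels and the $f=l$ coincidence is more careful than the paper, which simply restricts to the worst case of large $m$ and long $q$, but it does not change the substance of the argument.
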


\begin{proof}
At the last level of the index $m$, we definitely have to
do comparisons in the first and the last partition (which are
different in the worst case). At level $m-1$, for each of the first and last
partitions, we have a 50\% chance to avoid comparisons, due to Lemma
\ref{lem:firstlast}. Hence, the
expected number of partitions for which we have to perform comparisons
at level $m-1$ is 1. Similarly, at level $m-2$ each of the yet active first/last
partitions has a  50\% chance to avoid comparisons. Overall, for the
worst-case conditions, where $m$ is large and $q$ is long,
the expected number of partitions, for which we need to perform
comparisons is $2+1+0.5+0.25+\dots=4$.
\end{proof}	

\begin{theorem}\label{lem:expectedcom}
	The expected number of comparisons during query evaluation over HINT$^m$ is $O(n/2^m)$.
\end{theorem}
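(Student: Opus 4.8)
The plan is to combine the two preceding lemmas via linearity of expectation. The total number of comparisons performed by Algorithm~\ref{algo:rangeqhier} equals the sum, over those partitions in which comparisons are actually conducted, of the number of intervals stored in each such partition. By Lemma~\ref{lem:expectedcompparts}, the expected number of these comparison partitions is the constant $4$, so the whole problem reduces to bounding the expected number of intervals residing in a single comparison partition.

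First I would pin down the per-partition occupancy at the bottom level. By Lemma~\ref{lem:lastlevelsize}, level $m$ holds $n$ intervals in expectation, spread over its $2^m$ partitions; under the stated uniform-distribution assumption each level-$m$ partition therefore contains $n/2^m$ intervals on average. Since the tests in the first and last relevant partition at level $m$ touch at most all originals and replicas stored there, each level-$m$ comparison partition contributes $O(n/2^m)$ comparisons. The headline bound then follows by simply multiplying the constant $4$ of Lemma~\ref{lem:expectedcompparts} by this $O(n/2^m)$ occupancy, giving $O(n/2^m)$.

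To make the contribution of the higher levels precise I would read the proof of Lemma~\ref{lem:expectedcompparts} more finely: the expected number of comparison partitions at level $m-j$ is $2\cdot 2^{-j}$ (i.e. $2,1,\tfrac12,\dots$, summing to $4$). The expected work is then $\sum_{j\ge 0} 2\cdot 2^{-j}\cdot c_{m-j}$, where $c_{m-j}$ is the expected number of intervals in a partition at level $m-j$. If one can argue $c_{m-j}=O(n/2^m)$ at every level under the uniform model, this is a convergent geometric series equal to $4\cdot O(n/2^m)=O(n/2^m)$, matching the statement.

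The main obstacle is exactly this last step: showing that a comparison partition above level $m$ does not store asymptotically more than $n/2^m$ intervals. Originals are benign — the start-bit argument places them as $n/2^{m+1}$ per partition at \emph{every} level — but replicas accumulate in the coarse (small-$\ell$) partitions, and a careful count ties their number to the interval-length distribution: for mapped length $L$, the per-partition count at level $m-j$ behaves like $(n/2^m)\,2^j\Pr[L\ge 2^j]$. The saving grace is that the $2^j$ growth is cancelled by the $2^{-j}$ decay of the expected number of comparison partitions, leaving $2(n/2^m)\sum_{j\ge 0}\Pr[L\ge 2^j]$, i.e. $(n/2^m)$ times the expected number of levels an interval spans. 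Treating that span as $O(1)$ (equivalently, $\log L=O(1)$ on average, consistent with treating the number of levels as constant) collapses the sum to $O(n/2^m)$; carrying out this cancellation explicitly, rather than merely asserting a flat $n/2^m$ occupancy at all levels, is where the real care is needed.
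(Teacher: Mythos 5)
Your proposal is correct and follows essentially the same route as the paper: multiply the expected number of comparison partitions from Lemma~\ref{lem:expectedcompparts} by the expected per-partition occupancy $O(n/2^m)$ from Lemma~\ref{lem:lastlevelsize}. The extra care you take in the final paragraph --- balancing the $2^j$ growth of per-partition occupancy at level $m-j$ against the $2^{-j}$ decay in the expected number of comparison partitions --- is in fact more rigorous than the paper's own treatment, which simply asserts that the two expected additional partitions above level $m$ also hold $O(n/2^m)$ intervals by reapplying Lemma~\ref{lem:lastlevelsize} to the truncated intervals.
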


\begin{proof}
	For each query, we expect to conduct comparisons at least in the first and the last relevant partitions at level $m$.
	The expected number of intervals, in each
	of these two partitions, is $O(n/2^m)$, considering Lemma \ref{lem:lastlevelsize} and assuming a uniform distribution of the intervals in the partitions. In addition, due to Lemma \ref{lem:expectedcompparts}, the number of expected additional partitions that require comparisons is 2 and each of these two partitions is expected to also hold at most $O(n/2^m)$ intervals, by Lemma \ref{lem:lastlevelsize} on the levels above $m$ and using the truncated intervals after their assignment to level $m$ (see Algorithm \ref{algo:assign}). 
	Hence, $q$ is expected to be compared with $O(n/2^m)$ intervals in total and the cost of each such comparison is $O(1)$.
\end{proof}	
}

\rev{
\subsection{Setting $m$}
\label{sec:hierarchical:m}
As shown in Section \ref{sec:hint:analysis}, the space requirements
and the search performance of HINT$^m$ 
depend on the value of $m$.
For large values of $m$,
the cost of accessing
comparison-free results will dominate the computational cost of comparisons. 
This section presents an analytical study for estimating $m_{opt}$: the
smallest value of $m$, which is expected
to result in a HINT$^m$ of search performance close to the best
possible, while achieving the lowest possible space requirements.
Our study uses simple statistics \eat{about the input data and the query
workload;} namely, the number of intervals $n = |\mathcal{S}|$,
the mean length $\lambda_s$ of data intervals
and the mean length $\lambda_q$ of query
intervals\eat{ in the workload}.
We assume that the endpoints and the lengths of both
intervals and queries are uniformly distributed.

The overall cost of query evaluation consists of (1) the cost for
determining the relevant partitions per level, denoted by $C_p$, (2) the cost of
conducting comparisons between data intervals and the query,
denoted by $C_{cmp}$, and (3) the  
cost of accessing query results in the partitions for which we do not have to conduct
comparisons, denoted by $C_{acc}$. Cost $C_p$ is negligible, as the
partitions are determined by a small number $m$ of bit-shifting
operations. To estimate $C_{cmp}$, we need to estimate the number of
intervals in the partitions whereat we need to conduct comparisons
and multiply this by
the expected cost $\beta_{cmp}$ per comparison.
To estimate \eat{cost }$C_{acc}$, we need to estimate the number of
intervals in the corresponding partitions and multiply this by
the expected cost $\beta_{acc}$ of (sequentially) accessing and
reporting one interval.
$\beta_{cmp}$ and $\beta_{acc}$ are machine-dependent and can easily be
estimated by experimentation.

According to Algorithm \ref{algo:rangeqhier}, unless $\lambda_q$ is 
smaller than the length of a partition at level $m$, there will be two
partitions that require comparisons at level $m$, one partition at
level $m-1$, etc. with the expected number of partitions being at most
four (see Lemma~\ref{lem:expectedcompparts}).
Hence, we can assume that $C_{cmp}$ is practically dominated by the cost of processing
two partitions at the lowest level $m$.
As each partition at level $m$ is expected to have $n/2^m$
intervals (see Lemma~\ref{lem:lastlevelsize}), we have $C_{cmp} = \beta_{cmp}\cdot n/2^m$.
Then, the number of accessed intervals for which we expect to apply no comparisons is
$|Q|-2\cdot n/2^m$, where $|Q|$ is the total number of expected query
results. Under this, we have
$C_{acc} = \beta_{acc}\cdot(|Q|-2\cdot
n/2^m)$.
We can estimate $|Q|$ using the selectivity analysis for
(multidimensional) intervals and range queries in \cite{PagelSTW93} as
$|Q| = n\cdot \frac{\lambda_s + \lambda_q}
{\Lambda}$, where $\Lambda$ is the length of the entire domain \eat{which includes}with
all intervals in $\mathcal{S}$ (i.e., $\Lambda = \max_{\forall s\in \mathcal{S}}s.end-\min_{\forall s\in \mathcal{S}}s.st$).

With $C_{cmp}$ and $C_{acc}$\eat{ defined}, we now discuss how to estimate $m_{opt}$. First, we
%
gradually increase $m$ from $1$ up to its max value $m'$
(determined by $\Lambda$), and compute the expected cost
$C_{cmp}+C_{acc}$.
For $m = m'$, HINT$^m$ \eat{essentially }corresponds to the comparison-free HINT with the lowest expected cost. Then, we select as $m_{opt}$ the lowest value of $m$ for which $C_{cmp}+C_{acc}$ converges to the cost of the $m = m'$ case.

\eat{
old stuff:

methodology:
\begin{itemize}
\item assume a query selectivity formula based on uniform assumptions
  by adapting spatial query selectivity estimators and estimate number
  of results based on $\rho$. 
\item assume a specific cost per comparison and a specific cost for
  adding each resulting interval to the result set (XOR in our
  implementation).
\item estimate the total cost of comparisons as 4*$\log$*leafbucketsize*compcost 
\item estimate the total cost of comparison-free result additions as
  numberofresults*costofadding
\item simulate query cost for increasing values of $m$ until
  convergence, then choose $m_{opt}$ as convergence value
\end{itemize}

}
}

\rev{
\subsection{Updates}
\label{sec:hierarchical:updates}

We handle insertions to an existing HINT or HINT$^m$ index by calling Algorithm~\ref{algo:assign} for each new interval $s$. Small adjustments are needed for HINT$^m$ to add $s$ to the originals division at the first partition assignment, i.e., to $P^O_{\ell,a}$ or $P^O_{\ell,b}$, and to the replicas division for every other partition, i.e., to $P^R_{\ell,a}$ or $P^R_{\ell,b}$
Finally, we handle deletions using tombstones, similarly to previous studies \cite{Lomet75, Overmars83} and recent indexing approaches \cite{FerraginaV20}. Given an interval $s$ for deletion, we first search the index to locate all partitions that contain $s$ (both as original and as replica) and then, replace the id of $s$ by a special ``tombstone'' id, which signals the logical deletion.
}

\section{Optimizing HINT$^m$}
\label{sec:opts}
In this section,
we discuss optimization techniques, which greatly improve the performance
of HINT$^m$ (and HINT) in practice.
First, we show how to reduce the number of partitions in HINT$^m$ where 
comparisons are performed
and how to avoid
accessing unnecessary data.
Next, we show how to handle very sparse or skewed data at each
level of HINT/HINT$^m$. Another (orthogonal) optimization is
decoupling the storage of the interval ids with the storage of
interval endpoints in each partition. 
\rev{
Finally, we revisit updates under the prism of these optimizations.
}

\subsection{Subdivisions and space decomposition}
\label{sec:opts:comp}
Recall that, at each level $\ell$ of HINT$^m$,
every partition $P_{\ell,i}$ is divided into 
$P^O_{\ell,i}$ (holding originals) and 
$P^R_{\ell,i}$ (holding replicas).
We propose to further divide each \eat{set }$P^O_{\ell,i}$ into
$P^{O_{in}}_{\ell,i}$ and
$P^{O_{aft}}_{\ell,i}$, so that
$P^{O_{in}}_{\ell,i}$ (resp. $P^{O_{aft}}_{\ell,i}$) holds the
intervals from
$P^{O_{in}}_{\ell,i}$
that end {\em inside} (resp. {\em after})
partition $P_{\ell,i}$.
Similarly, each $P_{\ell,i}^R$ is divided into $P_{\ell,i}^{R_{in}}$ and $P_{\ell,i}^{R_{aft}}$.

\stitle{Range queries that overlap with multiple partitions.}
Consider a range query $q$,
which overlaps with a sequence of {\em more than one partitions} at
level $\ell$.
As already discussed,
if we have to conduct comparisons in the first such partition $P_{\ell,f}$,
we should do so for all intervals in $P_{\ell,f}^O$ and $P_{\ell,f}^R$.
By subdividing $P_{\ell,f}^O$ and $P_{\ell,f}^R$, we get the following lemma:
\begin{lemma}\label{lem:pf}
  If $P_{\ell,f}\ne P_{\ell,l}$ (1) each interval $s$ in $P_{\ell,f}^{O_{in}} \cup P_{\ell,f}^{R_{in}}$ overlaps with $q$ iff $s.end\ge q.st$; and
 (2)  all intervals $s$ in $P_{\ell,f}^{O_{aft}}$ and $P_{\ell,f}^{R_{aft}}$ are guaranteed to overlap with $q$.
\end{lemma}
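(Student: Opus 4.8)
The plan is to reduce the general overlap predicate $s.st \le q.end \,\wedge\, q.st \le s.end$ to the single stated comparison (or to nothing at all) by exploiting the geometry that the hypothesis $P_{\ell,f}\ne P_{\ell,l}$ forces on the first relevant partition. Following the convention of Lemma~\ref{lem:firstlast}, let $P.st$ and $P.end$ denote the first and last domain values of a partition $P$. I would begin by recording three facts. Since $P_{\ell,f}$ is the \emph{first} relevant partition, $q.st$ falls inside it, so $P_{\ell,f}.st \le q.st \le P_{\ell,f}.end$. Since $P_{\ell,f}\ne P_{\ell,l}$, the last relevant partition lies strictly to the right, hence $q.end \ge P_{\ell,l}.st > P_{\ell,f}.end$; in particular $q.end > P_{\ell,f}.end \ge q.st$. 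Finally, every interval $s$ assigned to $P_{\ell,f}$ by Algorithm~\ref{algo:assign} overlaps the span of that partition, so $s.st \le P_{\ell,f}.end$ and $s.end \ge P_{\ell,f}.st$.

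For claim (1), I would invoke the defining property of the new subdivision: an interval in $P_{\ell,f}^{O_{in}}\cup P_{\ell,f}^{R_{in}}$ ends \emph{inside} the partition, i.e.\ $s.end \le P_{\ell,f}.end$. Chaining this with the facts above gives $s.st \le s.end \le P_{\ell,f}.end < q.end$, so the conjunct $s.st \le q.end$ holds automatically and carries no information. The overlap condition therefore collapses to its remaining conjunct $q.st \le s.end$, which is exactly $s.end \ge q.st$, establishing the stated ``iff''.

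For claim (2), the intervals in $P_{\ell,f}^{O_{aft}}\cup P_{\ell,f}^{R_{aft}}$ end \emph{after} the partition, i.e.\ $s.end > P_{\ell,f}.end$. Using the recorded facts, $s.end > P_{\ell,f}.end \ge q.st$ yields $q.st \le s.end$, while $s.st \le P_{\ell,f}.end < q.end$ yields $s.st \le q.end$. Both conjuncts of the overlap predicate are thus satisfied, so every such $s$ is guaranteed to overlap $q$ and no comparison is needed.

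The argument itself is short; the only points that genuinely require care are the two structural facts feeding it — that $P_{\ell,f}\ne P_{\ell,l}$ forces $q.end$ \emph{strictly} beyond the right boundary of $P_{\ell,f}$ (so that the ``free'' conjunct is truly free), and that assignment of $s$ to a partition implies overlap with that partition's span. Both follow immediately from the definition of the relevant partitions via $prefix$ and from the assignment procedure of Algorithm~\ref{algo:assign}, so I do not anticipate any real obstacle beyond stating these two observations cleanly before the main chain of inequalities.
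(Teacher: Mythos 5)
Your proof is correct and follows essentially the same route as the paper, whose entire argument is the one-line observation that $q$ starts inside $P_{\ell,f}$ but ends after it (the latter forced by $P_{\ell,f}\ne P_{\ell,l}$); you simply spell out the resulting inequality chains that make one conjunct of the overlap test vacuous in case (1) and both vacuous in case (2). No gaps.
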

\rev{
\begin{proof}
	Follows directly from the fact that $q$ starts {\em inside} $P_{\ell,f}$ but  ends {\em after} $P_{\ell,f}$.
\end{proof}	
}
Hence, we need {\em just one comparison} for each interval in
$P_{\ell,f}^{O_{in}} \cup P_{\ell,f}^{R_{in}}$, whereas we can report all intervals
$P_{\ell,f}^{O_{aft}}\cup P_{\ell,f}^{R_{aft}}$ as query results {\em without any
  comparisons}.
As already discussed, for all
partitions $P_{\ell,i}$  between $P_{\ell,f}$ and  $P_{\ell,l}$, we just report intervals in
$P_{\ell,i}^{O_{in}} \cup P_{\ell,i}^{O_{aft}}$ as results, without any comparisons,
whereas for the last partition $P_{\ell,l}$, we perform {\em one comparison}
  per interval
in $P_{\ell,l}^{O_{in}} \cup P_{\ell,l}^{O_{aft}}$.

\stitle{Range queries that overlap with a single partition.}
If the range query $q$ overlaps only
one partition $P_{\ell,f}$ at level $\ell$, 
we can use following lemma to minimize the
necessary comparisons:
\begin{lemma}\label{lem:1p}
If $P_{\ell,f} = P_{\ell,l}$ then
\begin{itemize}
\item each interval $s$ in $P_{\ell,f}^{O_{in}}$ overlaps with $q$ iff $s.st \le q.end \wedge q.st \leq s.end$,
\item each interval $s$ in $P_{\ell,f}^{O_{aft}}$ overlaps with $q$ iff $s.st \le q.end$,
\item each interval $s$ in $P_{\ell,f}^{R_{in}}$ overlaps with $q$ iff $s.end \ge q.st$,
\item all intervals in $P_{\ell,f}^{R_{aft}}$ overlap with $q$.
\end{itemize}
\end{lemma}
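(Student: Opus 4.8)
The plan is to exploit the single geometric fact that, because $P_{\ell,f}=P_{\ell,l}$ (i.e.\ $f=l$), the query interval lies entirely within the span of the unique relevant partition $P_{\ell,f}$. Writing $P.st$ and $P.end$ for the first and last domain values covered by $P_{\ell,f}$ (the notation introduced in the proof of Lemma~\ref{lem:firstlast}), the condition $f=l$ means $prefix(\ell,q.st)=prefix(\ell,q.end)=f$, so $P.st \le q.st \le q.end \le P.end$. Starting from the generic overlap test ``$s$ overlaps $q$ iff $s.st \le q.end \wedge q.st \le s.end$'', I would then show that membership in each of the four subdivisions forces one or both of these two inequalities to hold automatically, leaving exactly the residual comparison(s) claimed.

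Concretely, I would first isolate two \emph{structural} shortcuts implied by the subdivision definitions together with the containment of $q$. First, every replica $s$ (in $P_{\ell,f}^{R_{in}}$ or $P_{\ell,f}^{R_{aft}}$) begins before the partition, so $s.st < P.st \le q.st \le q.end$, which already gives $s.st \le q.end$; hence for replicas the \emph{left} half of the overlap test is free and only $q.st \le s.end$ remains in question. Second, every ``$aft$'' interval $s$ (in $P_{\ell,f}^{O_{aft}}$ or $P_{\ell,f}^{R_{aft}}$) ends after the partition, so $s.end > P.end \ge q.end \ge q.st$, which already gives $q.st \le s.end$; hence for ``$aft$'' intervals the \emph{right} half of the test is free.

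Combining these with the case under analysis yields all four bullets simultaneously: for $P_{\ell,f}^{O_{in}}$ neither shortcut applies, so both $s.st \le q.end$ and $q.st \le s.end$ remain; for $P_{\ell,f}^{O_{aft}}$ only the right half is free, leaving $s.st \le q.end$; for $P_{\ell,f}^{R_{in}}$ only the left half is free, leaving $q.st \le s.end$ (equivalently $s.end \ge q.st$); and for $P_{\ell,f}^{R_{aft}}$ both halves are free, so every such interval overlaps $q$ unconditionally.

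The only delicate step—and the one I would treat most carefully—is justifying the boundary inequalities $s.st < P.st$ and $s.end > P.end$ at the level of the \emph{raw} endpoints, since the subdivisions are defined via the mapped interval whereas the lemma compares raw values $s.st,s.end,q.st,q.end$. Here I would invoke monotonicity of the mapping $f$: ``begins before $P$'' means $prefix(\ell,f(s.st))<f$, so $f(s.st)<f(q.st)$ and thus $s.st<q.st$; symmetrically ``ends after $P$'' gives $f(s.end)>f(q.end)$ and thus $s.end>q.end$. Everything else is immediate from the containment $P.st \le q.st \le q.end \le P.end$, so beyond this bookkeeping there is no real obstacle.
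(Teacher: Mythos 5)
Your proof is correct and follows essentially the same route as the paper's: both arguments rest on the containment $P_{\ell,f}.st \le q.st \le q.end \le P_{\ell,f}.end$ implied by $f=l$, from which replicas automatically satisfy $s.st \le q.end$ (they start before the partition, hence before $q$) and ``aft'' intervals automatically satisfy $s.end \ge q.st$ (they end after the partition, hence after $q$), leaving exactly the residual tests claimed in each bullet. Your extra care about the raw-versus-mapped endpoints is a reasonable refinement of bookkeeping the paper leaves implicit, but it does not change the substance of the argument.
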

\rev{
\begin{proof}
	All intervals $s\in P_{\ell,f}^{O_{aft}}$ end after $q$, so $s.st \le q.end$ suffices as an overlap test.  
	All intervals $s\in P_{\ell,f}^{R_{in}}$ start before $q$, so $s.st \le q.end$ suffices as an overlap test.
	All intervals $s\in P_{\ell,f}^{R_{aft}}$ start before and end after $q$, so they are guaranteed results.
\end{proof}	
}
\begin{figure}[t]
     \includegraphics[width=0.99\columnwidth]{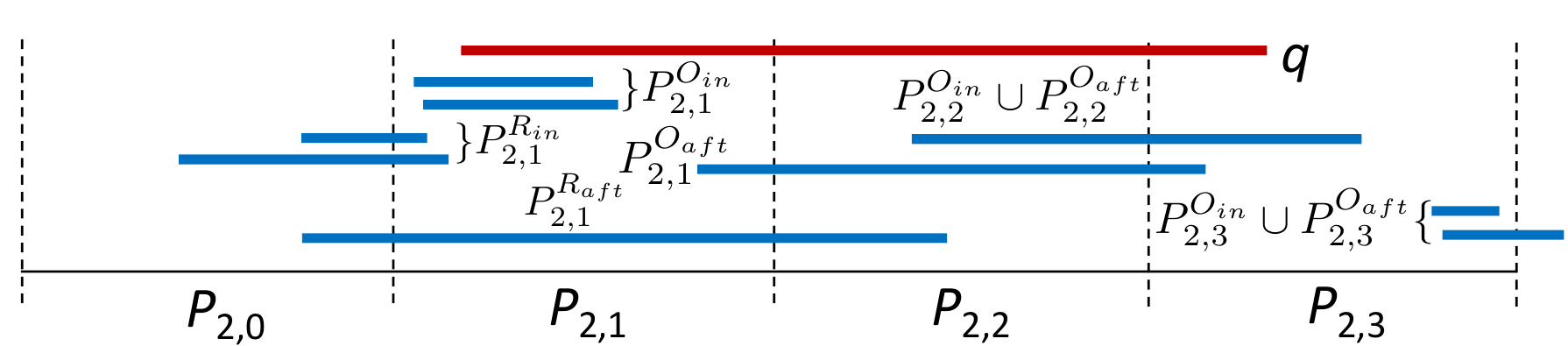}
     \vspace{-2ex}
  \caption{Partition subdivisions in HINT$^m$ (level $\ell=2$)}
  \label{fig:flatplusdecomp}
  \vspace{-3ex}
\end{figure}
Overall, the subdivisions help us to minimize the number of intervals
in each partition, for which we have to apply comparisons.
Figure~\ref{fig:flatplusdecomp} shows the subdivisions which are accessed by query $q$
at level $\ell=2$ of a HINT$^m$ index.
In partition $P_{\ell,f}=P_{2,1}$, all four subdivisions are accessed, but comparisons are needed only for intervals in $P_{2,1}^{O_{in}}$ and  $P_{2,1}^{R_{in}}$.
In partition $P_{2,2}$, only the originals (in  
$P_{2,2}^{O_{in}}$ and $P_{2,2}^{O_{aft}}$) are accessed and
reported without any comparisons.
Finally, in $P_{\ell,l}=P_{2,3}$, only the originals (in  
$P_{2,3}^{O_{in}}$ and $P_{2,3}^{O_{aft}}$) are accessed and compared to $q$.

\subsubsection{Sorting the intervals in each subdivision}
\label{sec:flat:plus:sorting}
We can keep the intervals in each subdivision sorted,
in order to reduce the number of comparisons for queries that access them.
For example, let us examine the last partition $P_{\ell,l}$ that
overlaps with a query $q$ at a level $\ell$.
If the intervals $s$ in $P_{\ell,l}^{O_{in}}$ are sorted \eat{based }on their start endpoint (i.e., $s.st$), we can simply access and report the intervals until the first $s\in  P_{\ell,l}^{O_{in}}$, such that $s.st>q.end$. Or, we can perform binary search to find the first  $s\in  P_{\ell,l}^{O_{in}}$, such that $s.st> q.end$ and then scan and report all intervals before $s$.
Table~\ref{tab:orderings} (second column) summarizes the sort orders for each of the four subdivisions of a partition that can be beneficial in range query evaluation. 
For  a subdivision $P_{\ell,i}^{O_{in}}$, intervals may have to be compared based on their start point (if  $P_{\ell,i}=P_{\ell,f}$), or based on their end point (if  $P_{\ell,i}=P_{\ell,l}$), or based on both points (if $P_{\ell,i}=P_{\ell,f}=P_{\ell,l}$). Hence, we choose to sort based on either $s.st$ or $s.end$ to accommodate two of these three cases.
For  a subdivision $P_{\ell,i}^{O_{aft}}$, intervals may only have to be compared based on their start point (if $P_{\ell,i}=P_{\ell,l}$).  
For  a subdivision $P_{\ell,i}^{R_{in}}$, intervals may only have to be compared based on their end point (if $P_{\ell,i}=P_{\ell,f}$).  
Last, for  a subdivision $P_{\ell,i}^{R_{aft}}$, there is never any need
to compare the intervals,
so, no order provides any search benefit.

\begin{table}
  \caption{Sort orders that can be beneficial}\label{tab:orderings}
	\footnotesize
     \vspace{-2ex}
  \begin{tabular}{|l||l|l|}
    \hline
    {\bf subdivision} & {\bf beneficial sorting} & {\bf necessary data}\\
    \hline\hline
    $P_{\ell,i}^{O_{in}}$& by  $s.st$ or by  $s.end$ & $s.id, s.st, s.end$\\
\rule{0pt}{3ex}    $P_{\ell,i}^{O_{aft}}$& by  $s.st$ & $s.id, s.st$\\
 \rule{0pt}{3ex}   $P_{\ell,i}^{R_{in}}$& by  $s.end$ & $s.id, s.end$\\
 \rule{0pt}{4ex}   $P_{\ell,i}^{R_{aft}}$& no sorting & $s.id$\\ \hline  
  \end{tabular}
\vspace{-3ex}
\end{table}

\subsubsection{Storage optimization}
\label{sec:flat:plus:decomposition}
So far, we have assumed that each interval $s$ is stored in the partitions whereto $s$ is assigned as a triplet $\langle s.id, s.st, s.end \rangle$.
However, if we split the partitions into subdivisions,
we do not need to keep all information
of the intervals in them. Specifically,
for each subdivision $P_{\ell,i}^{O_{in}}$, we may need to use
$s.st$ and/or $s.end$ for each interval $s\in P_{\ell,i}^{O_{in}}$, while for each subdivision $P_{\ell,i}^{O_{aft}}$, we may need to use $s.st$ for each $s\in P_{\ell,i}^{O_{in}}$, but we will never need $s.end$. From the intervals $s$ of each  subdivision $P_{\ell,i}^{R_{in}}$, we may need $s.end$, but we will never use $s.st$. Finally, for each  subdivision $P_{\ell,i}^{R_{in}}$, we just have to keep the $s.id$ identifiers of the intervals.
Table~\ref{tab:orderings} (third column) summarizes the data that we need to keep from each interval in the subdivisions of each partition.
Since each interval $s$ is stored as original just once in the
entire index, but as replica in possibly multiple partitions, 
space can be saved by storing only the necessary data, especially 
if the
intervals span multiple partitions.
Last, note that even when we 
do not apply the subdivisions, but just use \eat{two }divisions
$P_{\ell,i}^{O}$ and $P_{\ell,i}^{R}$ (as suggested in Section~\ref{sec:hierarchical:partial}), we do not have to store
the start points $s.st$ of all intervals in $P_{\ell,i}^{R}$\eat{ divisions},
since they are never used in comparisons.

\subsection{Handling data skewness and sparsity}
\label{sec:opts:skew}
Data skewness and sparsity may cause many partitions to be empty,
especially at the lowest levels of HINT (i.e., large values of
$\ell$).
\eat{At these levels,
there could be many empty partitions.}
Recall that a query accesses a sequence of multiple $P^O_{\ell,i}$ partitions
at each level $\ell$.
Since the intervals are physically distributed in the partitions,
this results into the unnecessary accessing of empty partitions
and may cause cache misses.
We propose a storage organization where all $P_{\ell,i}^{O}$ divisions at the
same level $\ell$ are merged into a single table $T_{\ell}^{O}$ and an
auxiliary index is used to find each non-empty division.%
\footnote{For simplicity, we discuss this organization when a partition $P_{\ell,i}$ is divided into $P^O_{\ell,i}$ and $P^R_{\ell,i}$; the same idea can be straightforwardly applied also when the four subdivisions discussed in Section~\ref{sec:flat:plus:decomposition} are used.}
The auxiliary index locates the first non-empty partition,
which is greater than or equal to the $\ell$-prefix of $q.st$ (i.e.,
via binary search or a binary search tree).
From thereon, the nonempty partitions which overlap with the query interval
are accessed sequentially and distinguished with the help of the
auxiliary index.
Hence, the contents of the relevant $P^O_{\ell,i}$'s to each query are
always accessed sequentially.
Figure~\ref{fig:onearraysparse}(a) shows an example at
level $\ell=4$ of HINT$^m$.  From the total $2^\ell=16$ $P^O$ partitions at that
level, only 5 are nonempty (shown in grey at the top of the
figure): $P^O_{4,1}, P^O_{4,5}, P^O_{4,6}, P^O_{4,8}, P^O_{4,13}$.
All 9 intervals in them (sorted by start point) are unified in a
single table $T_4^O$ as shown at the bottom of the figure
(the binary representations of the interval endpoints are shown).
At the moment, ignore the ids column for $T_4^O$\eat{ shown} at the
\eat{bottom-}right of the figure.
The sparse index for $T_4^O$ has one entry per nonempty partition
pointing to the first interval in it. For the query \eat{shown }in the
example, the index is used to find the first nonempty partition
$P^O_{4,5}$, for which the id is greater than or equal to the $4$-bit
prefix $0100$ of $q.st$. All relevant non-empty partitions 
$P^O_{4,5}, P^O_{4,6}, P^O_{4,8}$ are accessed sequentially from
$T_4^O$, until the position of the first interval of $P^O_{4,13}$.

Searching for the first partition $P^O_{\ell,f}$
that overlaps with $q$ at each level can be quite expensive 
when numerous nonempty partitions exist.
To alleviate this issue, we suggest adding to the auxiliary index, a link from each
partition $P^O_{\ell,i}$ to the partition $P^O_{\ell-1,j}$ at the
level above, such that $j$ is the smallest number greater than or
equal to $i\div 2$, for which partition $P^O_{\ell-1,j}$ is not
empty. Hence, instead of performing binary search at level $\ell-1$,
we use the link from the first partition $P^O_{\ell,f}$ relevant to the query at
level $\ell$ and (if necessary)
apply a linear search backwards starting from the pointed partition
$P^O_{\ell-1,j}$ to identify the first
non-empty partition $P^O_{\ell-1,f}$ that overlaps with $q$.
Figure~\ref{fig:onearraysparse}(b) shows an example, where each
nonempty partition at level $\ell$ is linked with the first nonempty partition with
greater than or equal prefix at the level $\ell-1$ above. Given 
query example $q$,
we use the auxiliary index to find the first nonempty partition
$P^O_{4,5}$ which overlaps with $q$ and also sequentially access  
$P^O_{4,6}$ and $P^O_{4,8}$. Then, we follow the pointer from
$P^O_{4,5}$ to $P^O_{3,4}$ to find the first nonempty partition at
level $3$, which overlaps with $q$. We repeat this to get partition
$P^O_{2,3}$ at level $2$, which however is not guaranteed to be the
first one  
overlapping
with $q$, so we go backwards to \eat{reach }$P^O_{2,3}$.

\eat{
\begin{figure}
\begin{tabular}{cc}
\hspace{-2ex}\includegraphics[width=0.5\columnwidth]{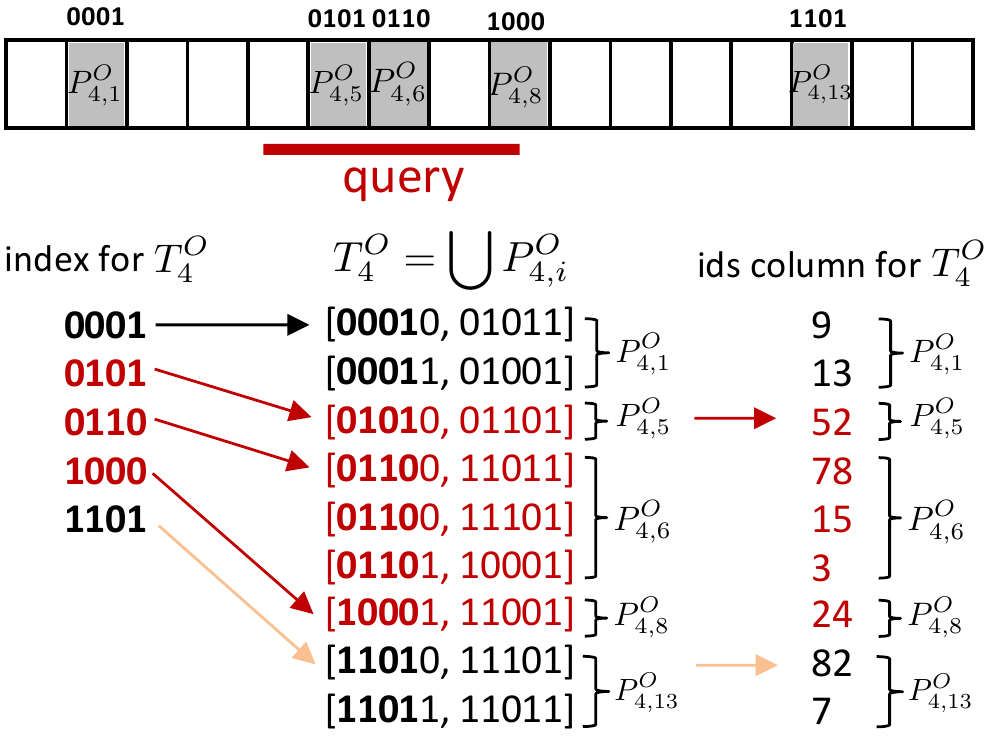}
&\hspace{-1ex}\includegraphics[width=0.5\columnwidth]{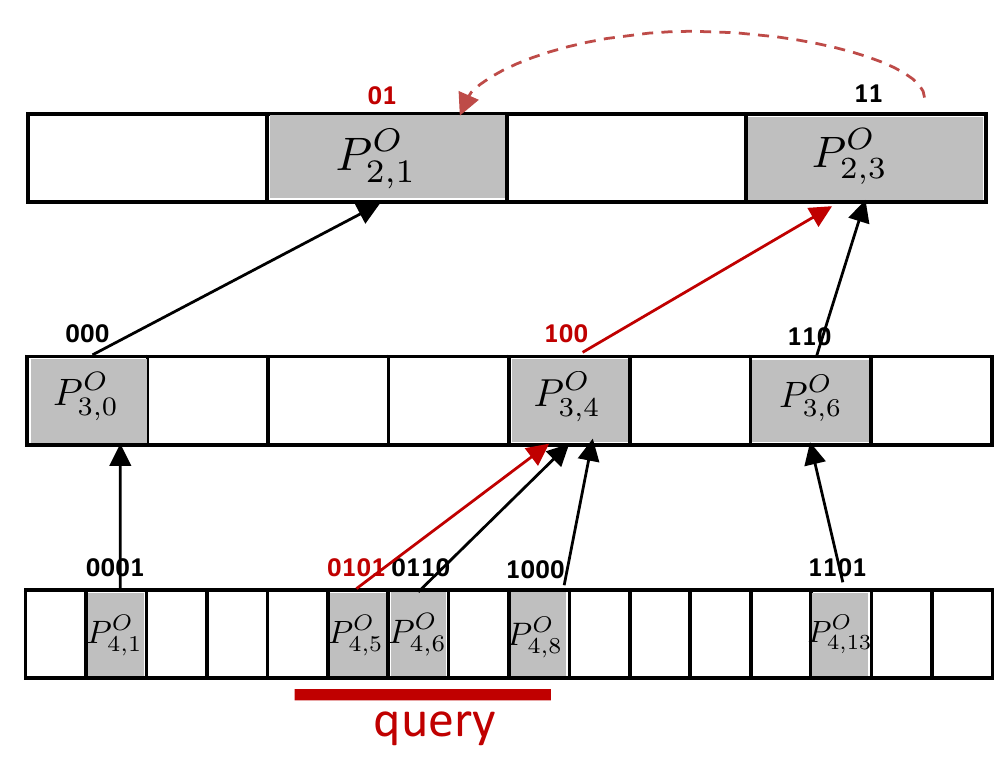}\\
  (a) auxiliary index&
                       (b) linking between levels
\end{tabular}
\caption{Storage and indexing optimizations}
\label{fig:onearraysparse}
\end{figure}
}
\begin{figure}
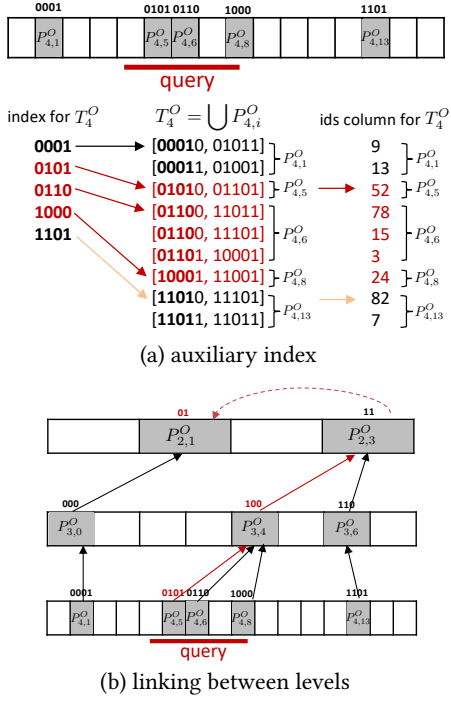

\centering
\includegraphics[width=0.7\columnwidth]{figures/sparseindex1}\\
(a) auxiliary index\\\vspace{2ex}
\vspace{-2mm}
\includegraphics[width=0.6\columnwidth]{figures/sparseindex2}\\
\vspace{-1mm}
 (b) linking between levels
\vspace{-2mm}
\caption{Storage and indexing optimizations}
\label{fig:onearraysparse}
\vspace{-2ex}
\end{figure}

\subsection{Reducing cache misses}
\label{sec:opts:cache}
At most levels of HINT$^m$, no comparisons are conducted
and the only operations are processing the interval ids which qualify
the query.
In addition, even for the levels $\ell$ where comparisons are required, these
are only restricted to the first and the last partitions
$P^O_{\ell,f}$ and $P^O_{\ell,l}$ that overlap with $q$
and no comparisons are needed for the partitions that are in-between.
Summing up, when accessing any (sub-)partition for which no comparison
is required, we do not need any information about the
intervals, except for their ids. Hence, in our implementation, for
each (sub-)partition, we store the ids of all intervals in it in a
dedicated array (the {\em ids column})
and the interval endpoints (wherever necessary) in a
different array.%
\footnote{Similar to the previous section, this storage optimization can be straightforwardly employed also when a partition is divided into  $P_{\ell,i}^{O_{in}}$, $P_{\ell,i}^{O_{aft}}$, $P_{\ell,i}^{R_{in}}$, $P_{\ell,i}^{R_{aft}}$.}
If we need the id of an interval
that qualifies a comparison, we can access the
corresponding position of the ids column.
This storage organization greatly improves search performance by
reducing the cache misses, because
for the intervals that do not require comparisons, we only access
their ids and not their interval endpoints.
This optimization is orthogonal to and applied in combination with the
strategy discussed in Section~\ref{sec:opts:skew}, i.e., we store all
$P^O$ divisions at each level $\ell$ in a single table $T_{\ell}^{O}$,
which is decomposed to a column that stores the ids and another
table for the endpoint data of the intervals.
An example of the ids column is shown in Figure~\ref{fig:onearraysparse}(a). If, for a sequence of partitions at a
level, we do not have to perform any comparisons, we just access the
sequence of the interval ids that are part of the answer, which is
implied by the position of the first such partition (obtained
via the auxiliary index). In this example, all intervals in
$P^O_{4,5}$ and  $P^O_{4,6}$ are guaranteed to be query results
without any comparisons and they can be sequentially accessed from the
ids column without having to access the endpoints of the
intervals. The auxiliary index guides the search by identifying and
distinguishing between partitions for which comparisons should be
conducted (e.g.,  $P^O_{4,8}$) and those for which they are not
necessary.

\rev{
\subsection{Updates}
\label{sec:opts:updates}
A version of HINT$^m$ that uses \emph{all} techniques from Sections~\ref{sec:opts:comp}-\ref{sec:opts:skew}, is optimized for query operations. Under this premise, the index cannot efficiently support individual updates, i.e., new intervals inserted one-by-one. Dealing with updates in {\em batches} will be a better fit. This is a common practice for other update-unfriendly indices, e.g., the inverted index in IR\eat{information retrieval}. Yet, for mixed workloads (i.e., with both queries and updates), we adopt a hybrid setting where a
\emph{delta} index is maintained to digest the latest updates as discussed in Section~\ref{sec:hierarchical:updates},%
\footnote{\rev{Small adjustments are applied for the $P^{O_{in}}_{l,i}$, $P^{O_{aft}}_{l,i}$, $P^{R_{in}}_{l,i}$, $P^{R_{aft}}_{l,i}$ subdivisions and the storage optimizations.}} and a fully optimized HINT$^m$, which is updated periodically in batches, holds older data 
supporting deletions with tombstones. Both indices are probed when a query is evaluated.
}








\eat{
\rev{
\section{Handling Updates}
\todo{Discuss the two approaches: update-friendly HINT$^m$, optimized HINT$^m$ .}

For deletions, quotes:
\begin{itemize}
\item DB textbook "Tombstones as a technique for dealing with deletion is from [3]." --> \cite{Lomet75}
\item From \cite{FerraginaV20} "The deletion of a key d is handled similarly to an insert by adding a special tombstone value that signals the logical removal of d. For details, we refer the reader to [29]." --> \cite{Overmars83}
\end{itemize}
}
}

\section{Experimental Analysis}
\label{sec:exps}
We compared our hierarchical index, detailed
in Sections~\ref{sec:hierarchical} \rev{and \ref{sec:opts}} against
the interval tree \cite{Edels80} (code from \cite{itree}),
the timeline index \cite{KaufmannMVFKFM13},
the (adaptive) period index \cite{BehrendDGSVRK19},
and a uniform 1D-grid.
All indices were implemented in \texttt{C++} and compiled
using \texttt{gcc} (v4.8.5) with \eat{flags }\texttt{-O3}\eat{, \texttt{-mavx}
and \texttt{-march=native}}.
\footnote{Source code available in \href{https://github.com/pbour/hint}{https://github.com/pbour/hint}.}
The tests ran on a dual Intel(R) Xeon(R) CPU E5-2630 v4 clocked at 2.20GHz with 384 GBs of RAM, running CentOS Linux\eat{ 8.2.2004}. 

\begin{table}
\centering
\caption{Characteristics of real datasets}
\label{tab:datasets}
\vspace{-1ex}
\footnotesize
\begin{tabular}{|l|@{~}c@{~}|@{~}c@{~}|@{~}c@{~}|@{~}c@{~}|}\hline
												&BOOKS							&WEBKIT							&TAXIS								&GREEND\\\hline\hline
Cardinality								&$2,\!312,\!602$ 				&$2,\!347,\!346$ 				&$172,\!668,\!003$			&$110,\!115,\!441$\\
%
\rev{Size [MBs]}						&\rev{$27.8$} 					&\rev{$28.2$} 					&\rev{$2072$}					&\rev{$1321$}\\
%
Domain [sec]							&$31,\!507,\!200$				&$461,\!829,\!284$			&$31,\!768,\!287$				&$283,\!356,\!410$\\
Min duration [sec]						&$1$									&$1$									&$1$									&$1$\\
Max duration [sec]					&$31,\!406,\!400$			&$461,\!815,\!512$			&$2,\!148,\!385$				&$59,\!468,\!008$\\
Avg. duration [sec] 					&$2,\!201,\!320$				&$33,\!206,\!300$			&$758$							&$15$\\
Avg. duration [\%] 					&6.98								&7.19								&0.0024							&0.000005\\
\hline
\end{tabular}
\vspace{-2ex}
\end{table}
\begin{table}
\centering
\caption{Parameters of synthetic datasets}
\label{tab:synthdatasets}
\vspace{-2mm}
\footnotesize
\begin{tabular}{|l|c|}\hline
\textbf{parameter} & \textbf{values} (defaults in {\bf bold})\\\hline\hline
Domain length
                                                                                                                        &32M,
                                                                                                                          64M,{\bf
                                                                                                                          128M},
  256M, 512M\\
  Cardinality
%

&
                                                                                                        {\bf 10M}, 50M, 100M, \rev{500M}, \rev{1B}\\
                                                                                                        $\alpha$ (interval length) 
                                                                                                                        &1.01,
  1.1, {\bf 1.2}, 1.4, 1.8\\
$\sigma$ (interval position) 
                                                                                                                        &
  10K, 100K, {\bf 1M}, 5M, 10M\\
\hline
\end{tabular}
\vspace{-2ex}
\end{table}

\subsection{Data and queries}
\label{sec:exps:setup}
\eat{For our analysis, w}We used 4 collections of real time intervals\eat{, which
have also been used in} from
previous works\eat{ \cite{DignosBG14,PiatovHD16,BourosM17,CafagnaB17,BourosMTT21}};
Table \ref{tab:datasets} summarizes their characteristics.
BOOKS \cite{BourosM17} contains the \eat{time intervals }periods during which books
were lent out by Aarhus \eat{public }libraries in 2013 (https://www.odaa.dk).
WEBKIT \cite{BourosM17,BourosM18,DignosBG14,PiatovHD16} records the file history
in the git repository of the Webkit project from 2001 to 2016
(https://webkit.org); the intervals indicate the periods during which a file did not change.
TAXIS \cite{BourosMTT21} includes the time periods of taxi trips (pick-up and drop-off
timestamps) from NYC 
(https://www1.nyc.gov/site/tlc/index.page) in 2013.
GREEND \cite{CafagnaB17,MonacchiEEDT14} records time periods of power
usage \eat{data }from households in Austria and Italy from January 2010 to
October 2014.
BOOKS and WEBKIT contain around 2M intervals each, which are quite
long on average;
\eat{while }TAXIS and GREEND contain over
100M relatively short intervals.

We also generated synthetic collections \eat{of intervals, in order }to
simulate different cases for the lengths and the skewness of the input
intervals.
Table \ref{tab:synthdatasets} shows the construction\eat{values of the} parameters \eat{that
determine}for the synthetic datasets and their default values. The domain
of the datasets 
ranges from 32M to 512M, which requires index level parameter $m$ to
range from $25$ to $29$ for a comparison-free HINT (similar to the real datasets).
The cardinality ranges from \rev{10M to 1B}.
The lengths of the intervals were generated using the
\texttt{random.zipf($\alpha$)} function in the \texttt{numpy} library. 
They follow a zipfian distribution according to the $p(x) = \frac{x^{-a}}{\zeta(a)}$ probability density function, where $\zeta$ is the Riemann Zeta function.
A small value of $\alpha$ results in most intervals being relatively
long,
while a large value results in the great majority of intervals having
length 1.
The positions of the {\em middle points} of the intervals are generated from
a normal distribution centered at the middle point $\mu$ of the domain.
Hence, the middle point of each interval is generated by calling
numpy's \texttt{random.normalvariate($\mu, \sigma$)}. The greater the value
of $\sigma$ the more spread the intervals are in the domain.

On the real datasets, we ran range queries \eat{which are }uniformly distributed
in the domain.
On the synthetic\eat{ datasets}, the positions of the queries follow the
distribution of the data. In both cases, the extent of the query
intervals were fixed to a percentage of the domain size (default
0.1\%). At each test\eat{experimental instance}, we ran 10K random queries, in
order to measure the overall throughput.
\rev{
Measuring query throughput instead of average time per query makes
sense in applications or services that manage huge volumes of
interval data and offer a search interface to billions of users
simultaneously (e.g., public historical databases).
}

\subsection{Optimizing HINT/HINT$^m$}
In our first set of experiments, we study the best setting for our hierarchical index. Specifically, we compare the effectiveness of the two query evaluation approaches discussed in Section~\ref{sec:hierarchical:qeval} and investigate the impact of the optimizations described in Section~\ref{sec:opts}.

\subsubsection{Query evaluation approaches on HINT$^m$}
\begin{figure}[t]
\begin{tabular}{cc}
BOOKS		&TAXIS\\
\hspace{-1ex}\includegraphics[width=0.46\columnwidth]{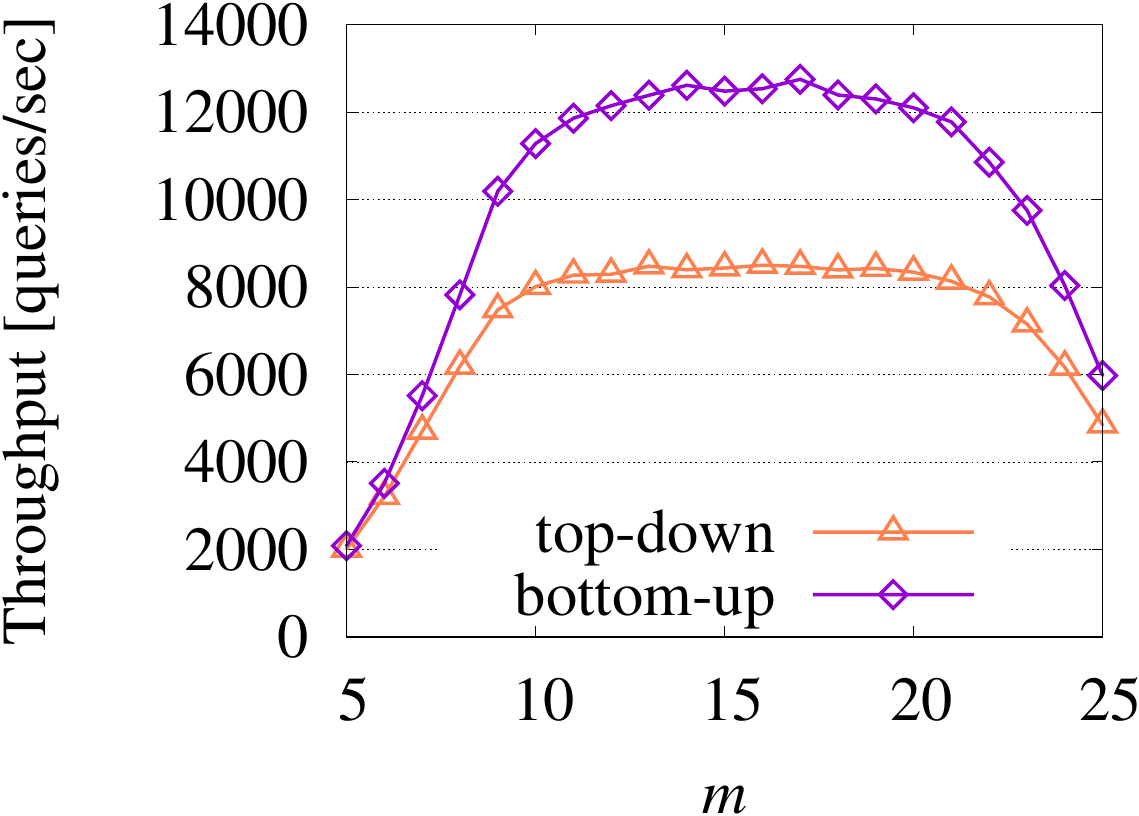}
&\hspace{-1ex}\includegraphics[width=0.46\columnwidth]{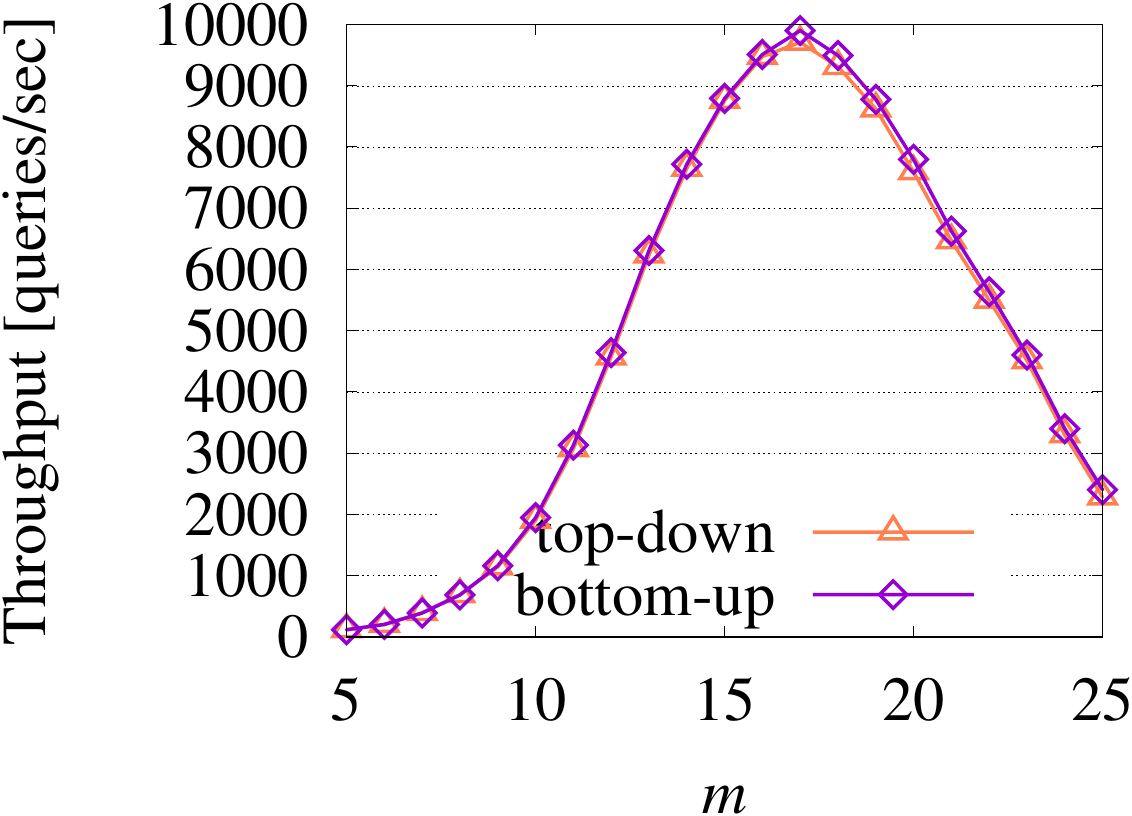}
\end{tabular}
\vspace{-3ex}
\caption{Optimizing HINT$^m$: query evaluation approaches}
\label{fig:hierm_qstrategy}
\vspace{-3ex}
\end{figure}
\eat{
We compare the straightforward \emph{top-down} approach for evaluating
range queries on HINT$^m$ that uses solely Lemma~\ref{lem:compcut},
against the \emph{bottom-up} illustrated in
Algorithm~\ref{algo:rangeqhier} which additionally employs Lemma~\ref{lem:firstlast}. 
Figure~\ref{fig:hierm_qstrategy} reports
the throughput of each approach on our real interval collections,
while varying the number of levels $m$ in the index. We observe that
the \emph{bottom-up} approach significantly outperforms
\emph{top-down} for BOOKS and WEBKIT while for TAXIS and GREEND, this
performance gap is very small. As expected, \emph{bottom-up} performs
at its best when the input\eat{ collection} contains long intervals which
are indexed on high levels of index; this is the case with BOOKS and
WEBKIT. In contrast, the intervals in TAXIS and GREEND are very short
and so, indexed at the bottom level of HINT$^m$, while the
majority of the partitions at the higher levels are empty. As a
result, \emph{top-down} conducts
no comparisons at
higher levels.
For the rest of our tests, HINT$^m$ uses the \emph{bottom-up} approach (i.e.,
Algorithm~\ref{algo:rangeqhier}).
}
We compare the straightforward \emph{top-down} approach for evaluating
range queries on HINT$^m$ that uses solely Lemma~\ref{lem:compcut},
against the \emph{bottom-up} illustrated in
Algorithm~\ref{algo:rangeqhier} which additionally employs Lemma~\ref{lem:firstlast}. 
Figure~\ref{fig:hierm_qstrategy} reports
the throughput of each approach on BOOKS and TAXIS,
while varying the number of levels $m$ in the index. 
Due to lack of space, we omit the results for WEBKIT and GREEND that follow exactly the same trend with BOOKS and TAXIS, respectively.
We observe that the \emph{bottom-up} approach significantly outperforms
\emph{top-down} for BOOKS while for TAXIS, this
performance gap is very small. As expected, \emph{bottom-up} performs
at its best for inputs\eat{ collection} that contain long intervals which
are indexed on high levels of index, i.e., the intervals in BOOKS. 
In contrast, the intervals in TAXIS are very short
and so, indexed at the bottom level of HINT$^m$, while the
majority of the partitions at the higher levels are empty. As a
result, \emph{top-down} conducts
no comparisons at
higher levels.
For the rest of our tests, HINT$^m$ uses the \emph{bottom-up} approach (i.e.,
Algorithm~\ref{algo:rangeqhier}).

\subsubsection{Subdivisions and space decomposition}
\begin{figure}[t]
\eat{
\begin{small}
\fbox{\parbox{220pt}
{
\begin{center}
{\footnotesize base}
\includegraphics[width=0.06\columnwidth]{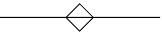}
\hspace{3ex}
{\footnotesize subs+sort}
\includegraphics[width=0.06\columnwidth]{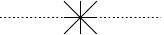}
{\footnotesize subs+sopt}
\includegraphics[width=0.06\columnwidth]{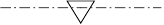}
\hspace{3.7ex}
{\footnotesize subs+sort+sopt}
\includegraphics[width=0.06\columnwidth]{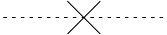}
\end{center}
}
}
\end{small}
}
\begin{tabular}{cc}
BOOKS &TAXIS\\
\hspace{-1ex}\includegraphics[width=0.46\columnwidth]{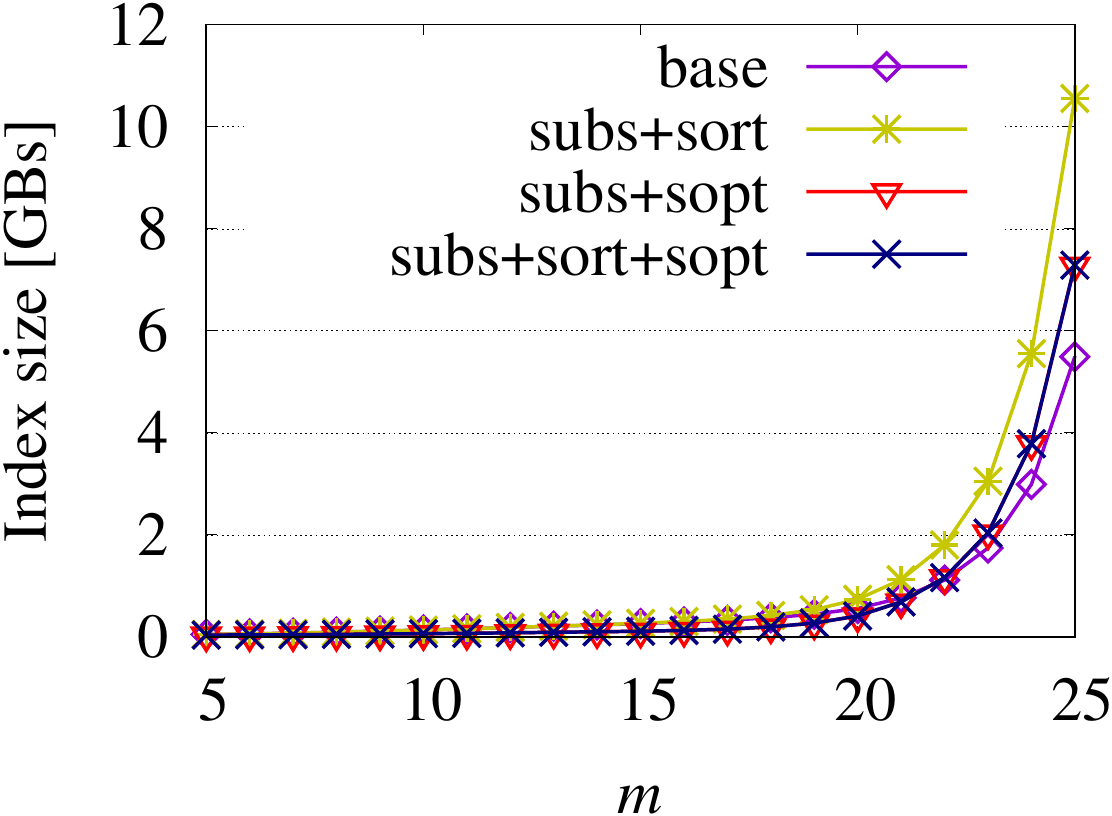}
&\includegraphics[width=0.46\columnwidth]{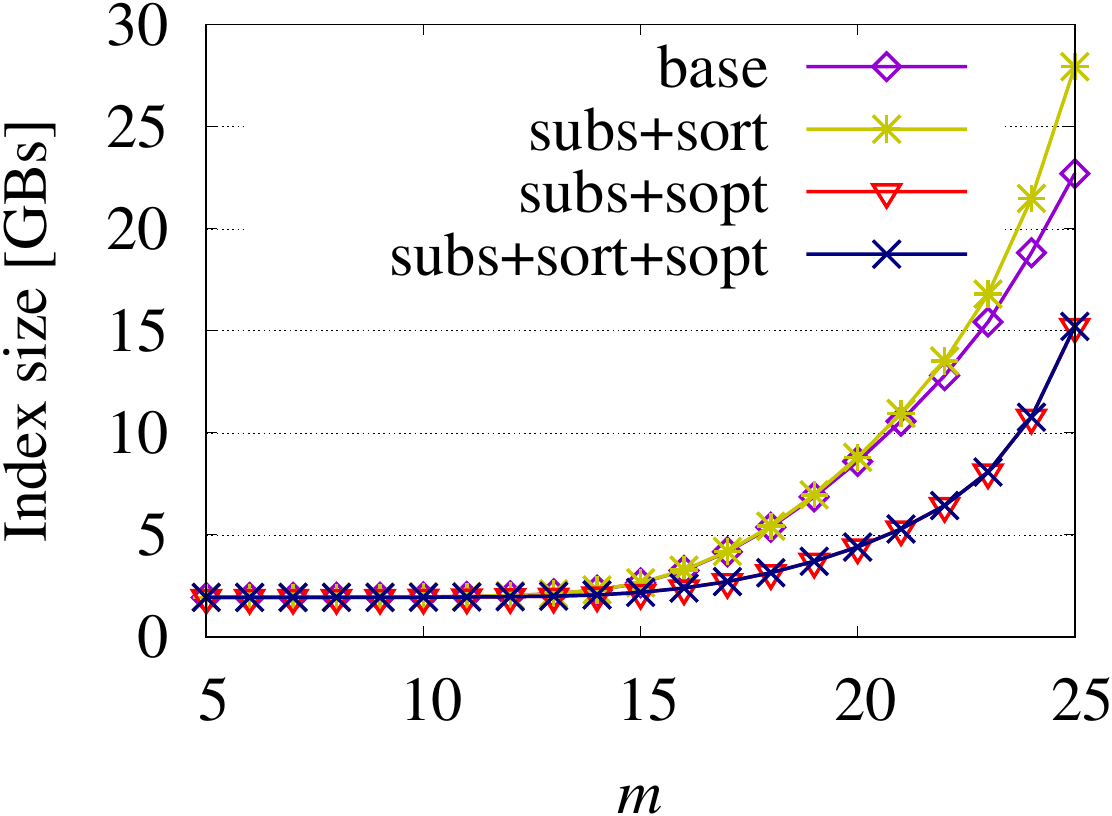}\\
\hspace{-1ex}\includegraphics[width=0.46\columnwidth]{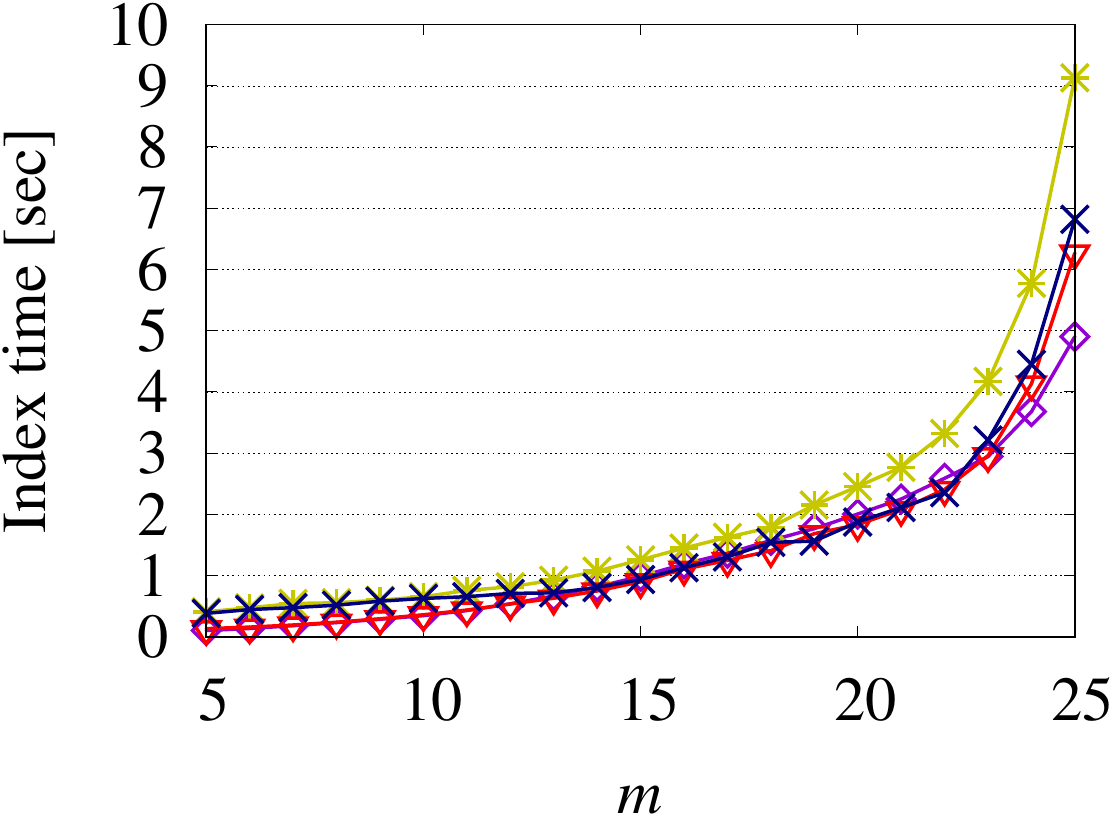}
&\includegraphics[width=0.46\columnwidth]{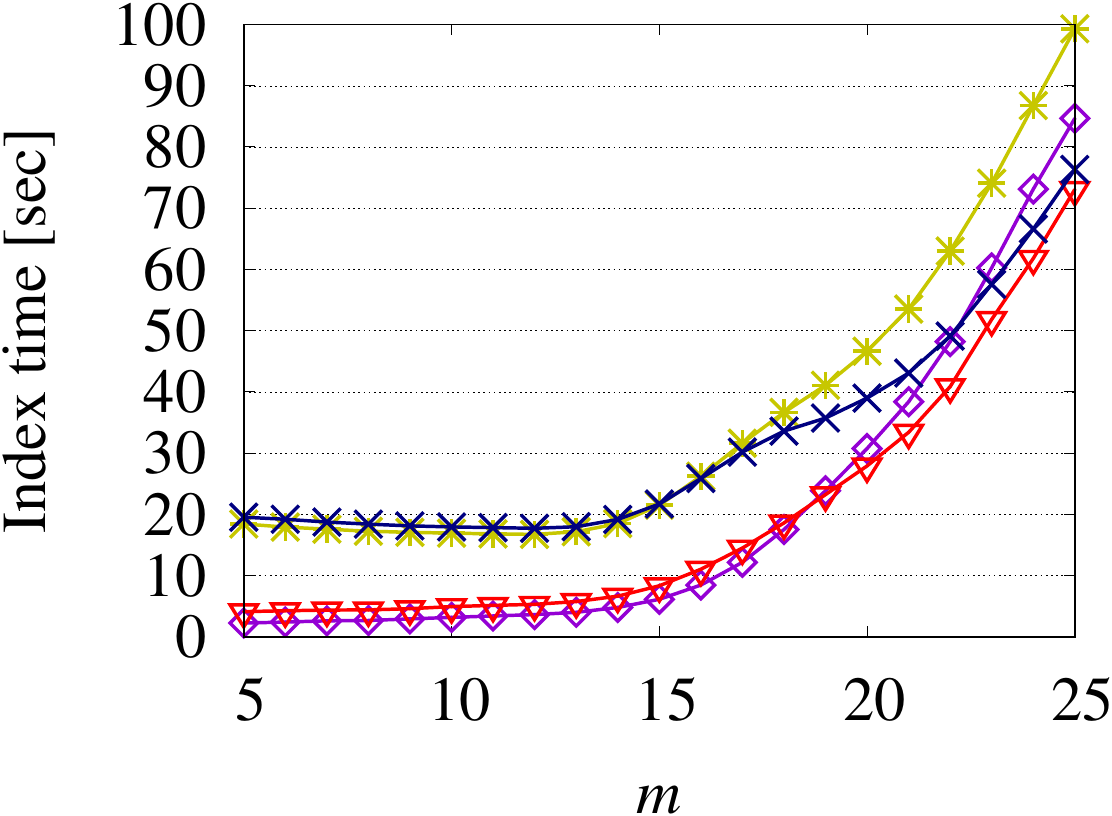}\\
\hspace{-1ex}\includegraphics[width=0.46\columnwidth]{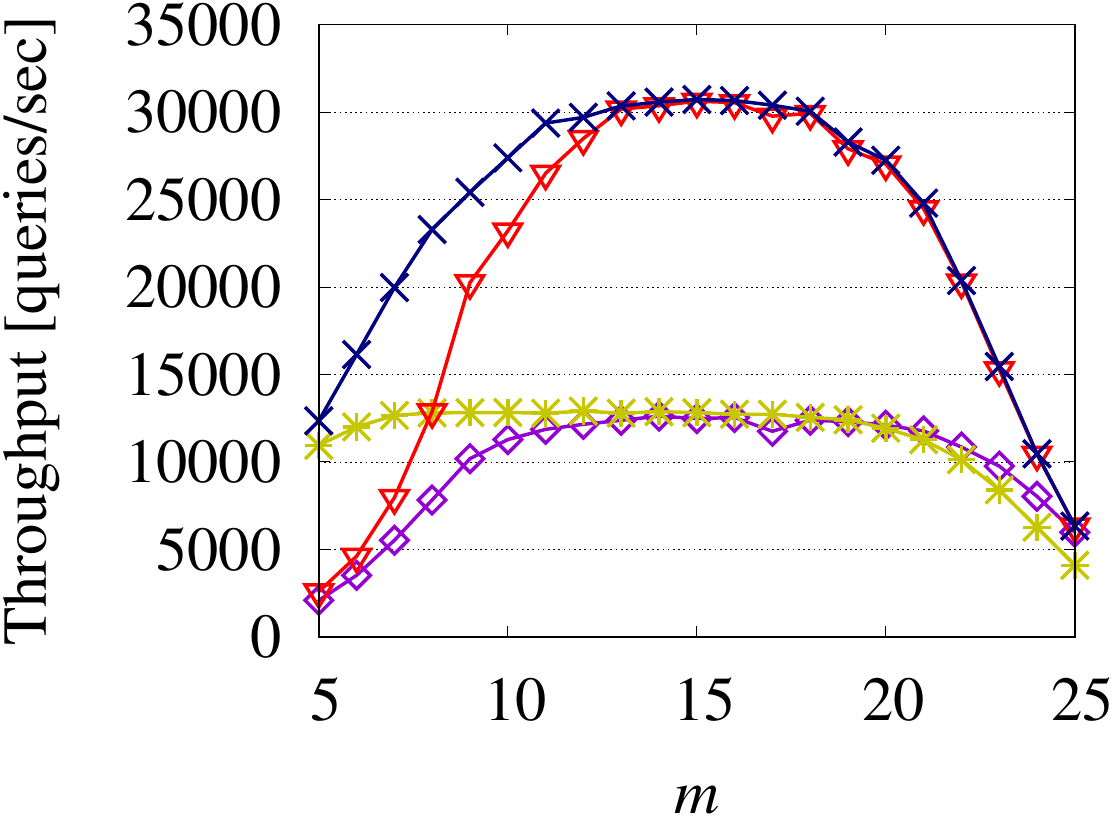}
&\includegraphics[width=0.46\columnwidth]{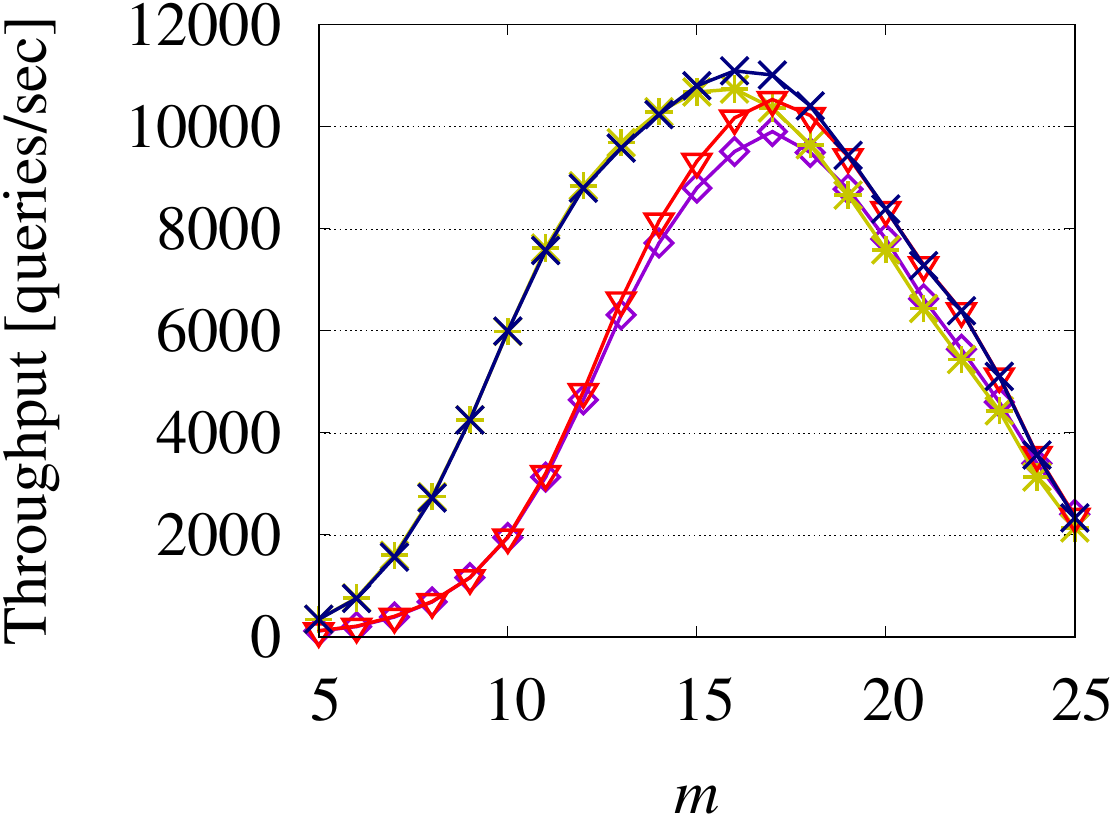}
\end{tabular}
\vspace{-3ex}
\caption{Optimizing HINT$^m$: subdivisions and space decomposition}
\label{fig:hierm_comps}
\vspace*{-3ex}
\end{figure}
We next evaluate the \emph{subdivisions} and \emph{space decomposition} optimizations described in Section~\ref{sec:opts:comp} for HINT$^m$. Note that these techniques are not applicable to our comparison-free HINT as the index stores only interval ids. Figure~\ref{fig:hierm_comps} shows the effect of the optimizations on BOOKS and TAXIS, for different values of $m$;
similar trends were observed in WEBKIT and GREEND, respectively.
The plots include (1) a \emph{base} version of HINT$^m$, which employs none of the proposed optimizations, (2) \emph{subs+sort+sopt}, with all optimizations activated, (3) \emph{subs+sort}, which only sorts the subdivisions (Section~\ref{sec:flat:plus:sorting}) and (iv) \emph{subs+sopt}, which uses only the storage optimization (Section~\ref{sec:flat:plus:decomposition}).
We observe that the \emph{subs+sort+sopt} version of HINT$^m$ is
superior to all three other versions, on all tests. Essentially, the
index benefits from the \emph{sub+sort} setting only when $m$ is
small, i.e., below 15, at the expense of increasing the index time
compared to \emph{base}. In this case, the partitions contain a large
number of intervals and therefore, using binary search or scanning
until the first interval that does not overlap the query range, will
save on the conducted comparisons. On the other hand, the
\emph{subs+sopt} optimization significantly reduces the space
requirements 
of the index. As a result, the version incurs a higher cache hit ratio and so, a higher throughput compared to \emph{base} is achieved, especially for large values of $m$, i.e., higher than 10. The \emph{subs+sort+sopt} version manages to combine the benefits of both \emph{subs+sort} and \emph{subs+sopt} versions, i.e., high throughput in all cases, with low space requirements. The effect in the performance is more obvious in BOOKS because of the long intervals and the high replication ratio. In view of these results, HINT$^m$ employs all optimizations from Section~\ref{sec:opts:comp} for the rest of our experiments.

\subsubsection{Handling data skewness \& sparsity and reducing cache misses}
\begin{table}[t]
\footnotesize
\caption{Optimizing HINT: impact of the skewness \& sparsity optimization (Section~\ref{sec:opts:skew}), default parameters}
\vspace{-2mm}
\begin{tabular}{|c||c|c|c|c|}
\hline
\multirow{2}{*}{\textbf{dataset}}	&\multicolumn{2}{c|}{\textbf{throughput} [queries/sec]}		&\multicolumn{2}{c|}{\textbf{index size} [MBs]}\\\cline{2-5}
													&original		&optimized										&original		&optimized\\
\hline\hline
BOOKS											&12098			&36173												&3282			&273\\
WEBKIT										&947			&39000											&49439		&337\\
TAXIS											&2931			&31027												&10093			&7733\\
GREEND										&648			&47038											&57667		&10131\\
\hline
\end{tabular}
\label{tab:hier-ds}
\vspace*{-3ex}
\end{table}
Table \ref{tab:hier-ds} tests the
effect of the {\em handling data skewness \& sparsity} optimization (Section~\ref{sec:opts:skew})
on the comparison-free version of HINT (Section~\ref{sec:hierarchical:precise}).%
\footnote{The {\em cache misses} optimization (Section \ref{sec:opts:cache}) is
only applicable to HINT$^m$.}
Observe that the optimization has a great effect on both the
throughput and the size of the index in all four real datasets, 
because empty partitions are effectively excluded from query evaluation and from the indexing process.

\begin{figure}[t]
\eat{
\begin{small}
\fbox{\parbox{190pt}
{
\begin{center}
{\footnotesize subs+sort+sopt}
\includegraphics[width=0.06\columnwidth]{figures/lines_2_2.pdf}
\hspace{3ex}
{\footnotesize skewness \& sparsity optimization}
\includegraphics[width=0.06\columnwidth]{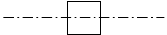}
\\
{\footnotesize cache misses optimization}
\includegraphics[width=0.06\columnwidth]{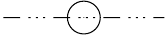}
\hspace{3.7ex}
{\footnotesize both optimizations}
\includegraphics[width=0.06\columnwidth]{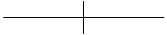}
\end{center}
}
}
\end{small}
}
\begin{tabular}{cc}
BOOKS &TAXIS\\
\hspace{-1ex}\includegraphics[width=0.46\columnwidth]{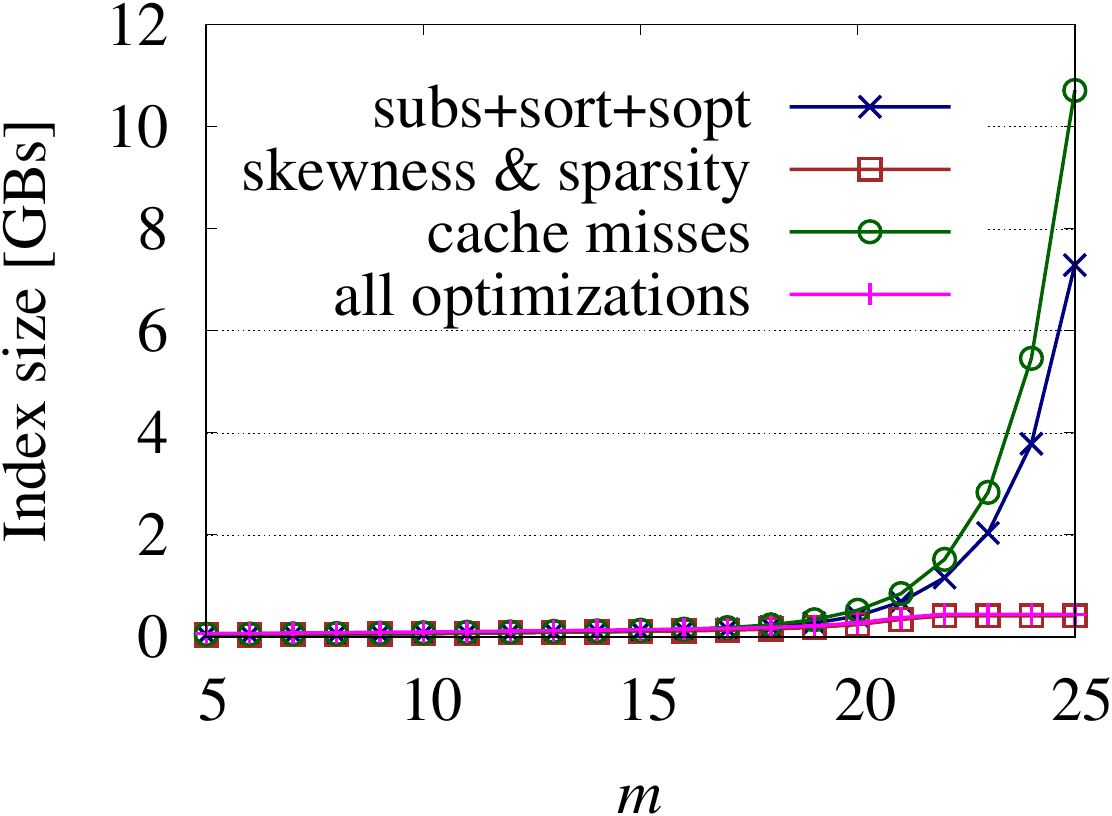}
&\includegraphics[width=0.46\columnwidth]{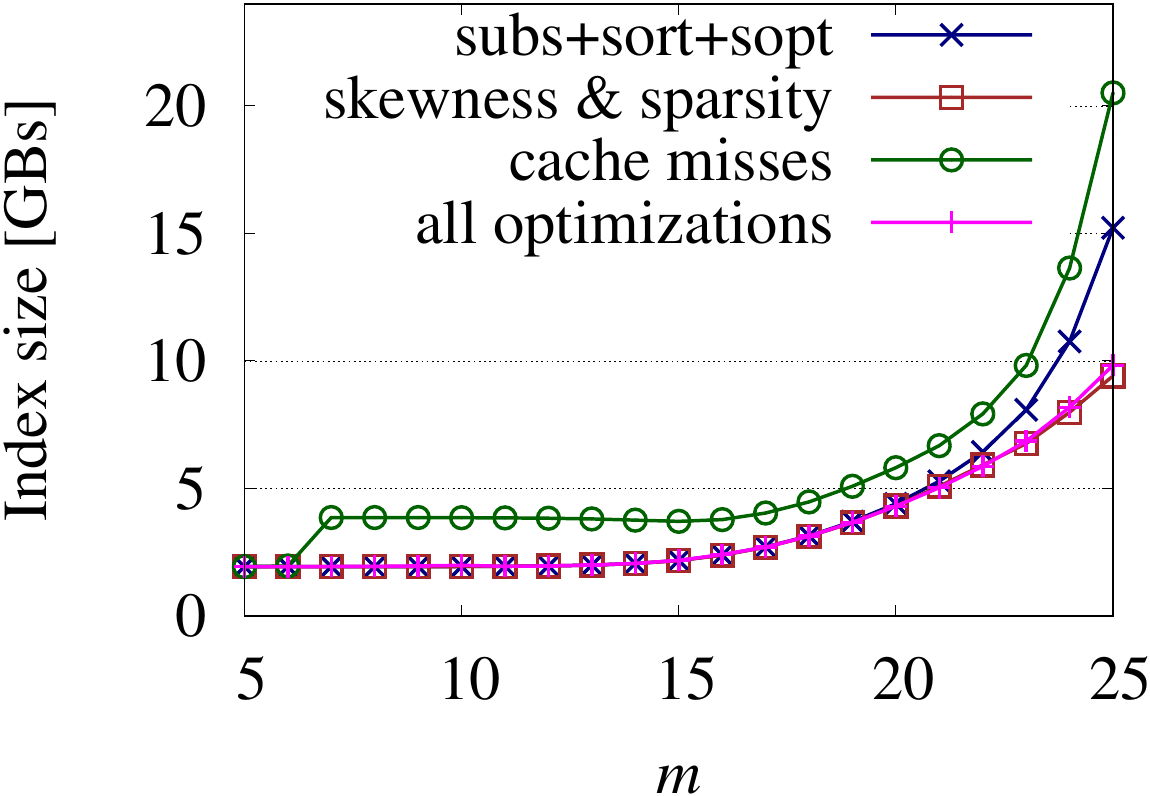}\\
\hspace{-1ex}\includegraphics[width=0.46\columnwidth]{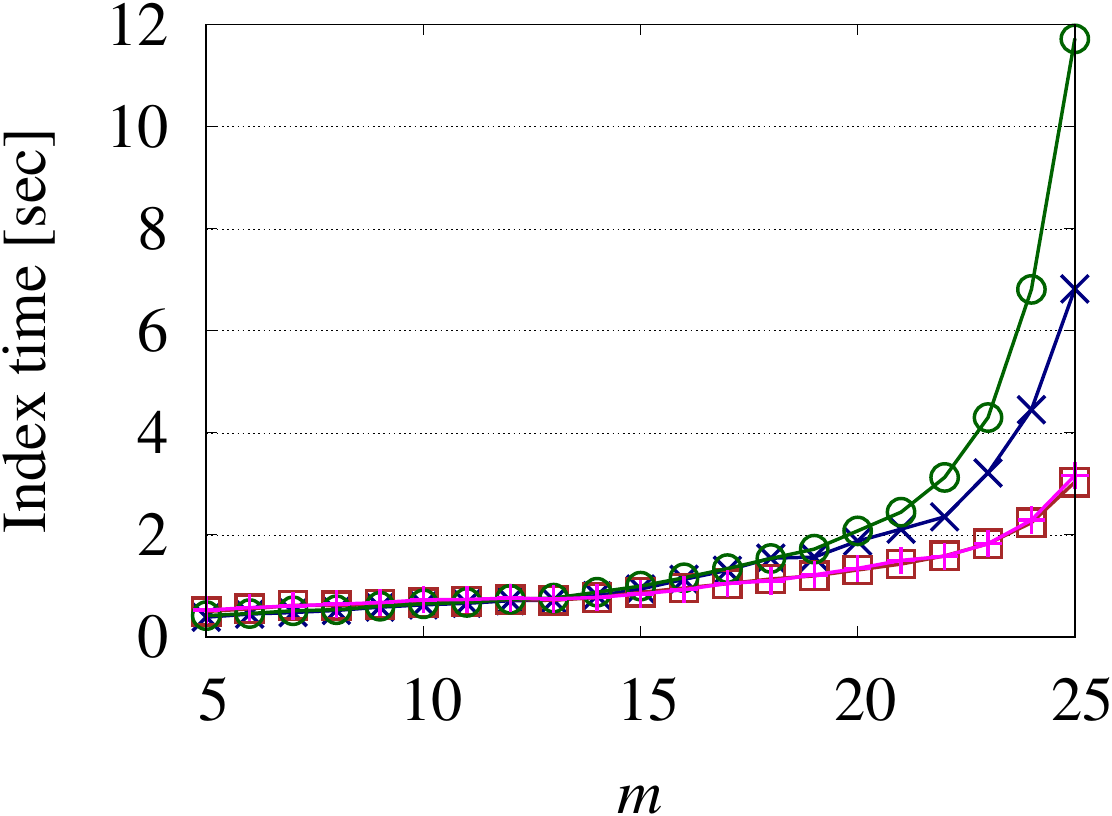}
&\includegraphics[width=0.46\columnwidth]{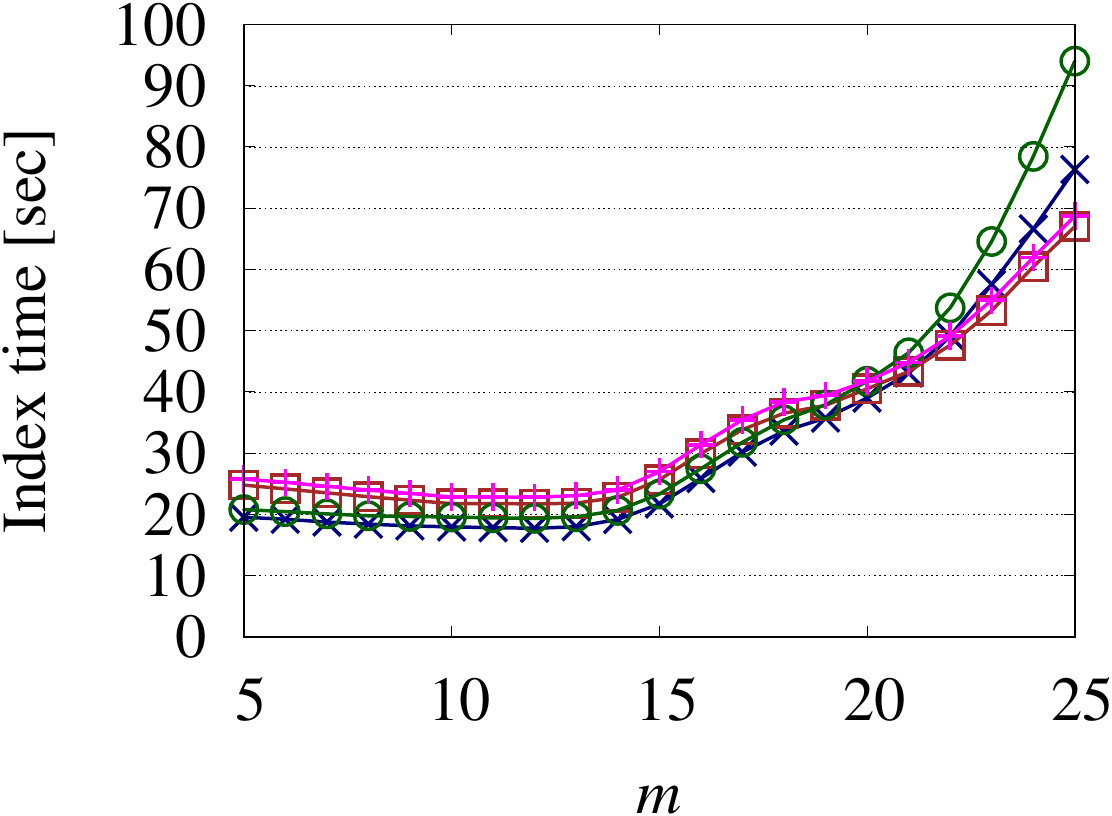}\\
\hspace{-1ex}\includegraphics[width=0.46\columnwidth]{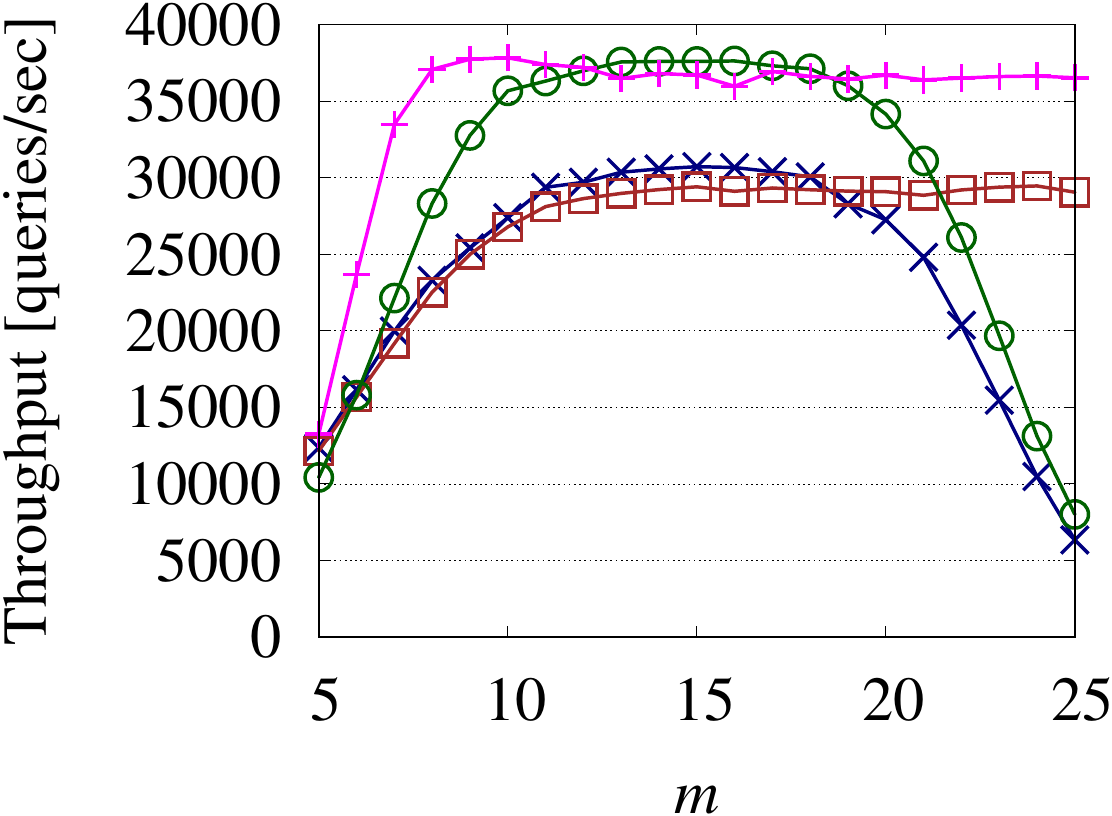}
&\includegraphics[width=0.46\columnwidth]{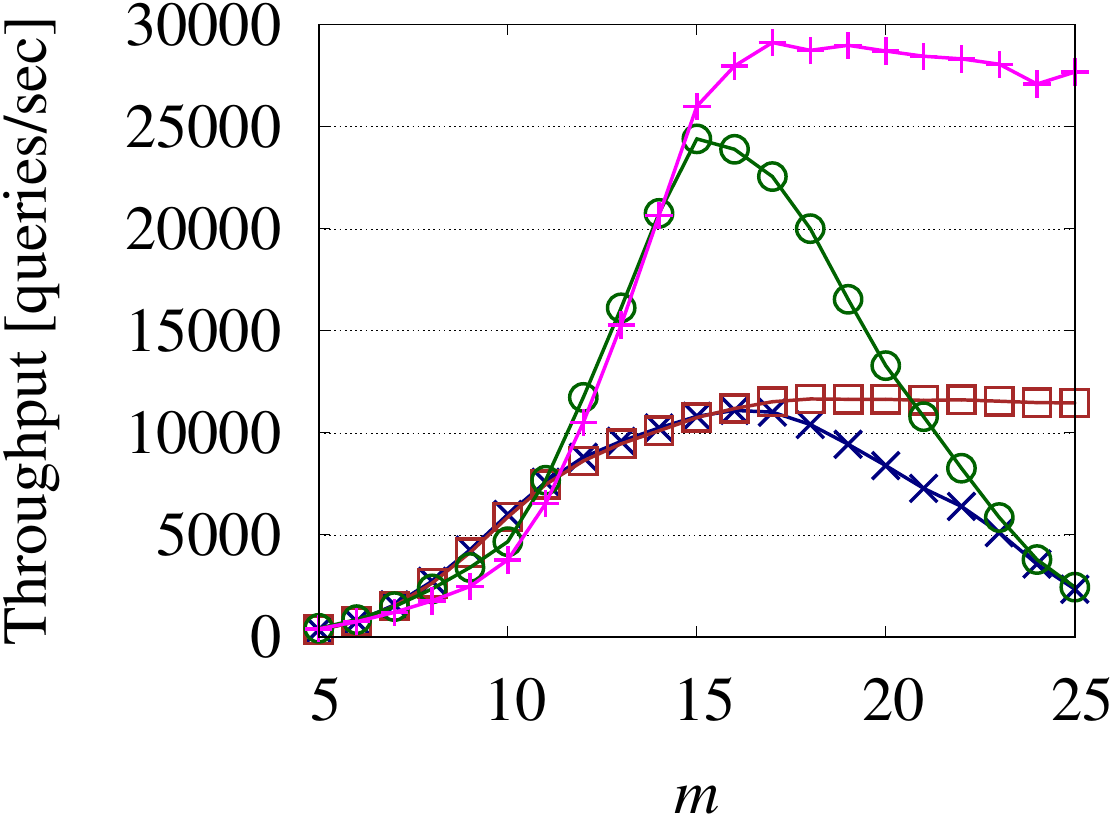}
\end{tabular}
\vspace{-3ex}
\caption{Optimizing HINT$^m$: impact of handling skewness \& sparsity and reducing cache misses optimizations}
\label{fig:hierm_rest}
\vspace*{-2ex}
\end{figure}
Figure \ref{fig:hierm_rest} shows the effect
of either or both of the \emph{data skewness \& sparsity} (Section~\ref{sec:opts:skew}) and the
\emph{cache misses} optimizations
(Section~\ref{sec:opts:cache})
on the
performance of HINT$^m$ for different values of $m$\eat{(i.e., the number of index levels used)}. 
In all cases\eat{for all values of $m$}, the
version of HINT$^m$ which uses both optimizations is superior
to all other versions. As expected, the {\em skewness \& sparsity} optimization helps
to reduce the space requirements of the index when $m$ is large,
because there are many empty partitions in this case at the bottom
levels of the index. At the same time, the \emph{cache misses} optimization
helps in reducing the number of cache misses in all cases where no
comparisons are needed. Overall, the optimized version of $HINT^m$
converges to its best performance at a relatively small value of $m$, where 
the space requirements of the index are relatively low, especially on the BOOKS and
WEBKIT datasets which contain long intervals.
For the rest of our experiments, HINT$^m$ employs both optimizations and HINT the \emph{data skewness \& sparsity} optimization.

\subsubsection{Discussion}
\begin{table}[t]
\footnotesize
\caption{Statistics and parameter setting}
\vspace{-2mm}
\begin{tabular}{|c|c||c|c|c|c|}
\hline
\textbf{index}							&\textbf{parameter}			&BOOKS		&WEBKIT		&TAXIS			&GREEND\\
\hline\hline
\multirow{2}{*}{Period}		&\# levels							&4				&4				&7				&8\\
												&\# coarse partitions			&100				&100				&100				&100\\  
												\hline 
Timeline							&\# checkpoints				&6000			&6000			&8000			&8000\\
\hline 
1D-grid										&\# partitions					&500			&300			&4000			&30000\\
\hline
\multirow{5}{*}{HINT$^m$} 		&\rev{$m_{opt}$ (model)}	&\rev{9}		&\rev{9}		&\rev{16}		&\rev{16}\\
												&$m_{opt}$ (exps)			&$10$			&$12$			&$17$			&$17$\\\cline{2-6}
												&\rev{rep. factor $k$ (model)}	&\rev{$6.09$}	&\rev{$8.98$}	&\rev{$1.98$}	&\rev{$1$}	\\
												&\rev{rep. factor $k$ (exps)}	&\rev{$5.13$}		&\rev{$6.07$}		&\rev{$2.14$}		&\rev{$1.0013$}\\\cline{2-6}
												&avg. comp. part.				&$3.226$		&$3.538$		&$3.856$		&$2.937$\\
\hline
\end{tabular}
\label{tab:stats}
\vspace*{-2ex}
\end{table}
\rev{
Table~\ref{tab:stats} reports the best values for parameter $m$ of HINT$^m$, denoted by $m_{opt}$.
For each real dataset, 
we show (1) $m_{opt}$ (model), estimated by our model in Section~\ref{sec:hierarchical:m} as the smallest $m$ value for which the index converges within 3\% to its lowest estimated cost, and (2) $m_{opt}$ (exps), which brings the highest throughput in our tests.
Overall, our model estimates a value of $m_{opt}$ which is very close
to the experimentally best value of $m$.
Despite a larger gap for WEBKIT,
the  measured throughput for the estimated \eat{value }$m_{opt}= 9$
is only 5\% lower than\eat{compared to} the  best observed throughput.
Further, the table shows the {\em replication factor}  $k$ of the index, i.e.,
the average number of partitions in which each interval is stored, as
predicted by our space complexity analysis 
(see Theorem~\ref{lem:spacecomp}) 
and as measured experimentally.
As expected, the replication factor is high on BOOKS, WEBKIT due to
the large number of long intervals, and low on TAXIS, GREEND where
the intervals are very short and stored at the bottom levels. Although
our analysis uses\eat{is based on} simple statistics, the
predictions are quite accurate.}
Finally, the last line Table~\ref{tab:stats}\eat{ of the table} ({\em avg. comp. part.})
shows the average number of
HINT$^m$ partitions for which comparisons were\eat{have to be} applied.
\eat{All numbers are below 4, which is consistent with our 
analysis in Section~\ref{sec:hint:analysis}.
In practice, this}
Consistently to our analysis in Section~\ref{sec:hint:analysis}, all numbers are below 4, which
means that the performance of HINT$^m$ is very close
to the performance of the comparison-free, but space-demanding HINT.

\eat{
\begin{table}[t]
\footnotesize
\caption{Comparing index size [MBs]}
\vspace{-2mm}
\begin{tabular}{|c||c|c|c|c|}
\hline
\textbf{index}		&BOOKS	&WEBKIT	&TAXIS			&GREEND\\
\hline\hline
Interval tree			&97			&115			&3125			&2241\\
Period 			&210			&217			&2278			&\textbf{1262}\\
Timeline 		&4916		&5671		&4203			&2525\\
1D-grid						&949		&604		&\textbf{2165}			&1264\\
\hline
HINT						&273		&337		&7733			&10131\\
HINT$^m$			&\textbf{81}			&\textbf{97}			&3048			&1278\\
\hline
\end{tabular}
\vspace{-2ex}
\label{tab:indexsize}
\end{table}
}
\begin{table}[t]
\footnotesize
\caption{Comparing index size [MBs]}
\vspace{-2mm}
\begin{tabular}{|c||c|c|c|c|}
\hline
\textbf{index}		&BOOKS	&WEBKIT	&TAXIS			&GREEND\\
\hline\hline
Interval tree			&97			&115			&3125			&2241\\
Period 			&210			&217			&2278			&\textbf{1262}\\
Timeline 		&4916		&5671		&4203			&2525\\
1D-grid					&949		&604		&2165			&1264\\
\hline
HINT						&273		&337		&7733			&10131\\
HINT$^m$			&\textbf{81}			&\textbf{98}			&\textbf{2039}			&1278\\
\hline
\end{tabular}
\vspace{-2ex}
\label{tab:indexsize}
\end{table}
\eat{
\begin{table}[t]
\footnotesize
\caption{Comparing index time [sec]}
\vspace{-2mm}
\begin{tabular}{|c||c|c|c|c|}
\hline
\textbf{index}	&BOOKS		&WEBKIT		&TAXIS			&GREEND\\
\hline\hline
Interval tree		&\textbf{0.249647}	&\textbf{0.333642}	&47.1913		&26.8279\\
Period 		&1.14919		&1.35353		&76.9302		&46.3992\\
Timeline 	&12.665271	&19.242939	&40.376573	&15.962221\\
1D-grid				&1.26315		&0.952408	&\textbf{4.02325}		&\textbf{2.23768}\\
\hline
HINT					&1.70093		&11.7671		&49.589		&36.5143\\
HINT$^m$		&1.15779		&1.15857		&39.4318		&8.39652\\
\hline
\end{tabular}
\label{tab:indexbuildtime}
\vspace{-3ex}
\end{table}
}
\begin{table}[t]
\footnotesize
\caption{Comparing index time [sec]}
\vspace{-2mm}
\begin{tabular}{|c||c|c|c|c|}
\hline
\textbf{index}	&BOOKS		&WEBKIT		&TAXIS			&GREEND\\
\hline\hline
Interval tree		&\textbf{0.249647}	&\textbf{0.333642}	&47.1913		&26.8279\\
Period 		&1.14919		&1.35353		&76.9302		&46.3992\\
Timeline 	&12.665271	&19.242939	&40.376573	&15.962221\\
1D-grid				&1.26315		&0.952408	&\textbf{4.02325}		&\textbf{2.23768}\\
\hline
HINT					&1.70093		&11.7671		&49.589		&36.5143\\
HINT$^m$		&0.725174		&0.525927		&22.787983		&8.577486\\
\hline
\end{tabular}
\label{tab:indexbuildtime}
\vspace{-3ex}
\end{table}

\subsection{Index performance comparison}

\begin{figure*}[t]
\begin{small}
\fbox{\parbox{325pt}
{
\begin{center}
{\footnotesize Interval tree}
\includegraphics[width=0.06\columnwidth]{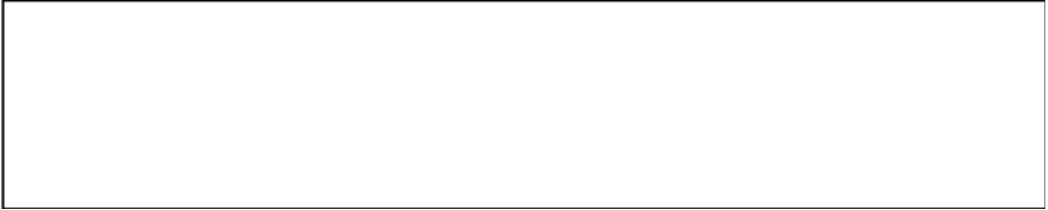}
\hspace{2ex}
{\footnotesize Period index}
\includegraphics[width=0.06\columnwidth]{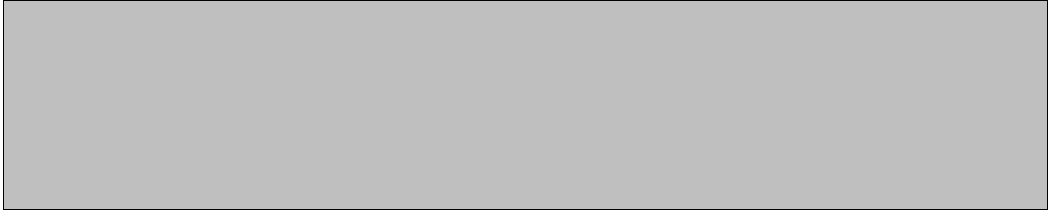}
\hspace{2ex}
{\footnotesize Timeline index}
\includegraphics[width=0.06\columnwidth]{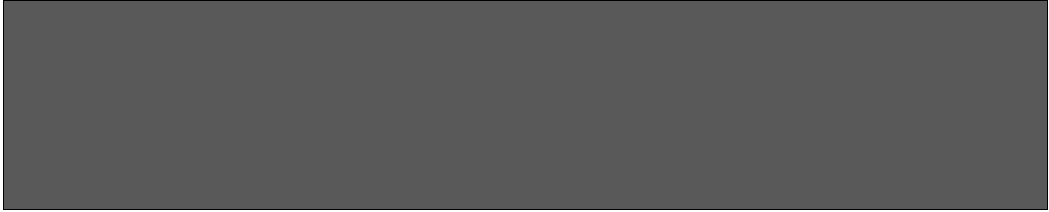}
\hspace{2ex}
{\footnotesize 1D-grid}
\includegraphics[width=0.06\columnwidth]{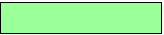}
\hspace{2ex}
{\footnotesize 	HINT}
\includegraphics[width=0.06\columnwidth]{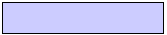}
\hspace{2ex}
{\footnotesize HINT$^m$}
\includegraphics[width=0.06\columnwidth]{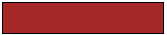}
\end{center}
}
}
\end{small}
\begin{tabular}{cccc}
BOOKS &WEBKIT &TAXIS &GREEND\\
\hspace{-1ex}\includegraphics[width=0.5\columnwidth]{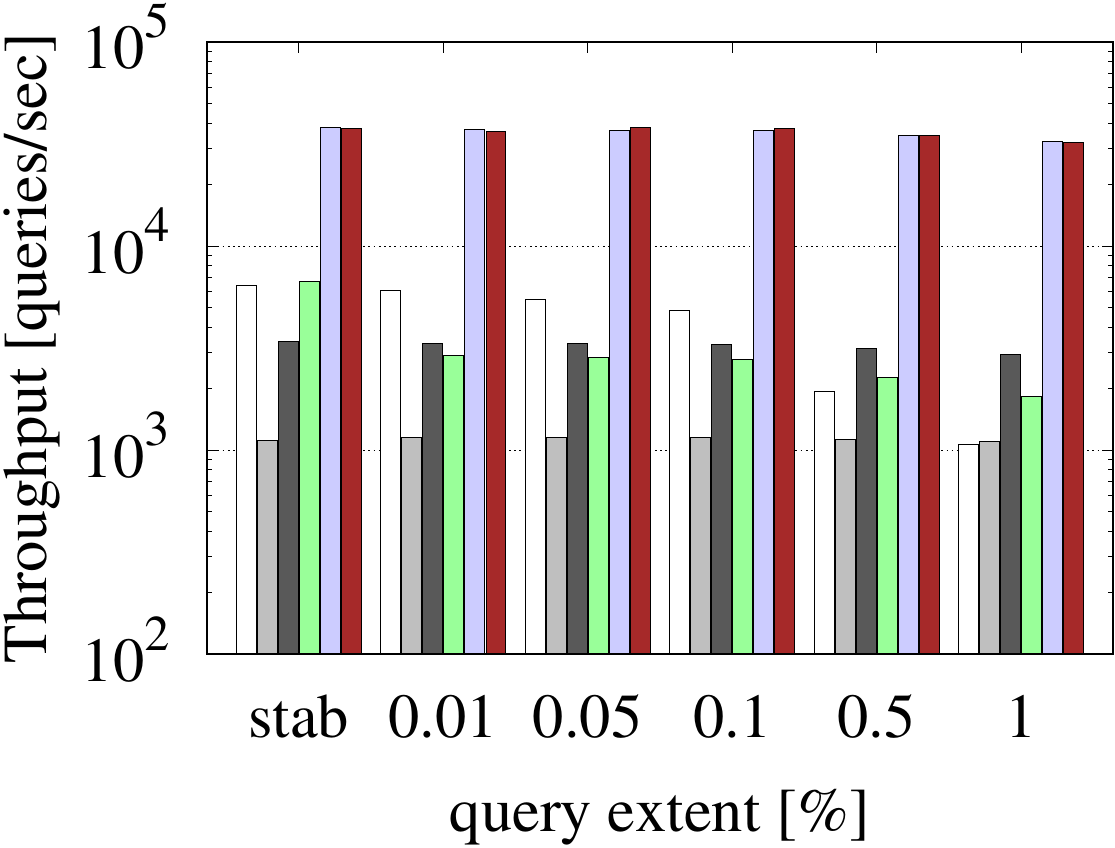}
&\hspace{-1ex}\includegraphics[width=0.5\columnwidth]{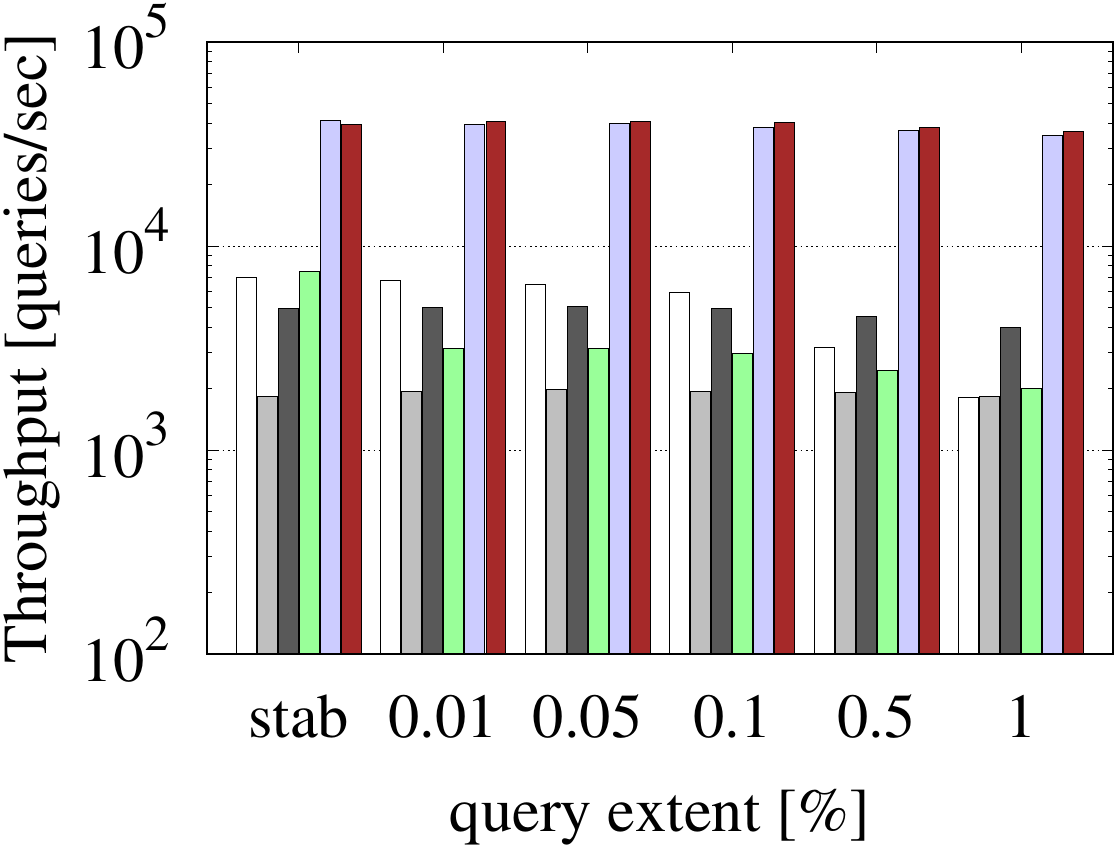}
&\hspace{-1ex}\includegraphics[width=0.5\columnwidth]{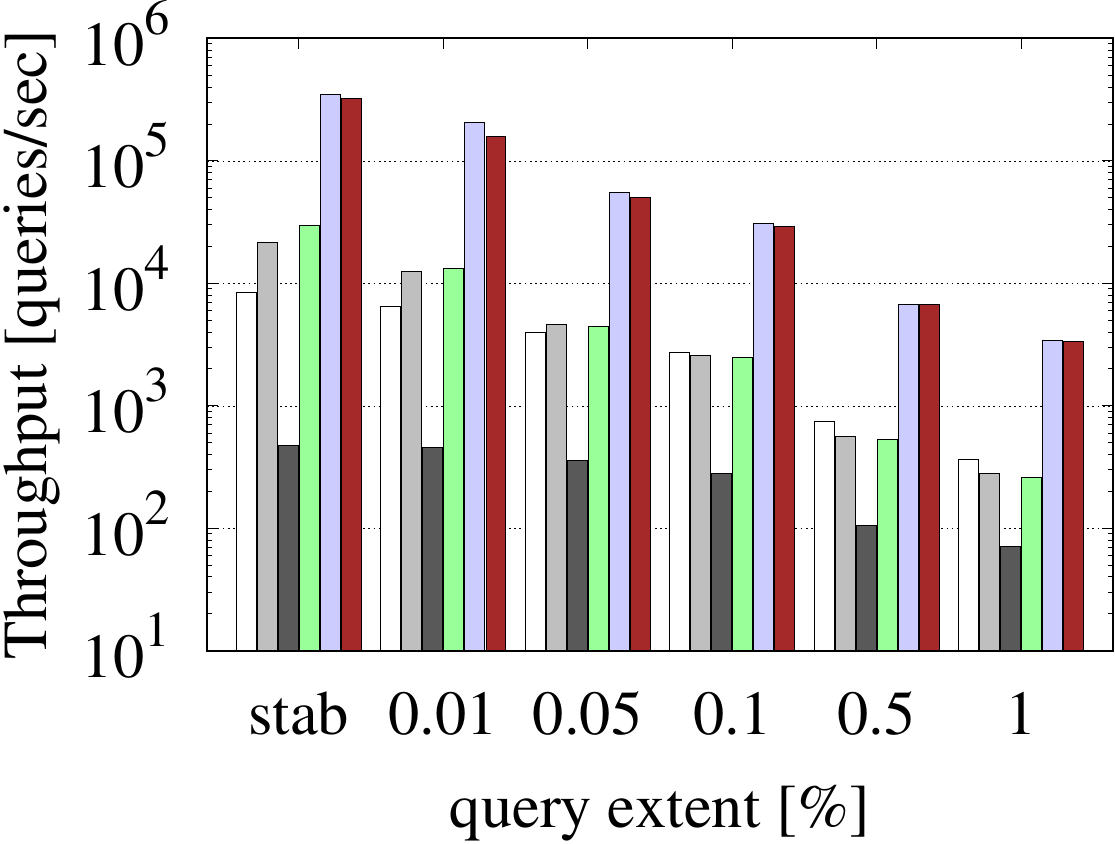}
&\hspace{-1ex}\includegraphics[width=0.5\columnwidth]{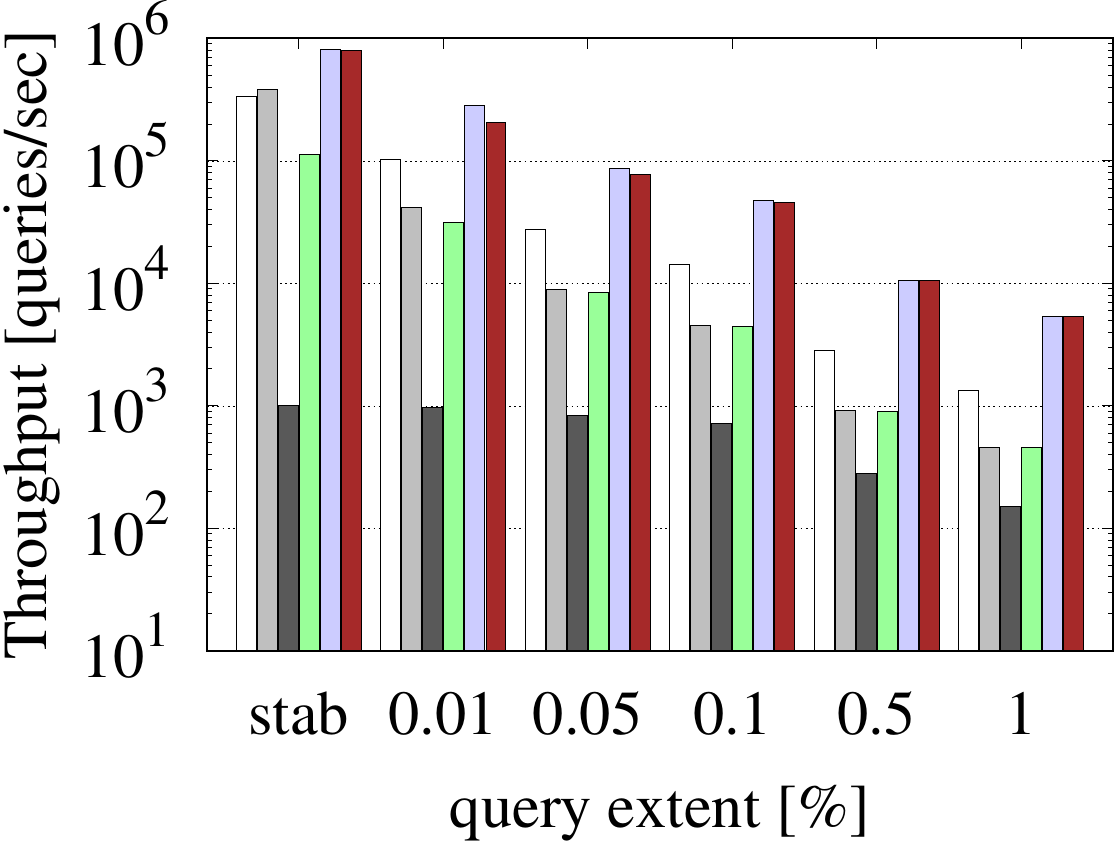}
\end{tabular}
\vspace{-3ex}
\caption{Comparing throughputs, real datasets}
\vspace{-2ex}
\label{fig:comparison}
\end{figure*}
\begin{figure*}[ht]
\begin{tabular}{ccccc}
\hspace{-1ex}\includegraphics[width=0.4\columnwidth]{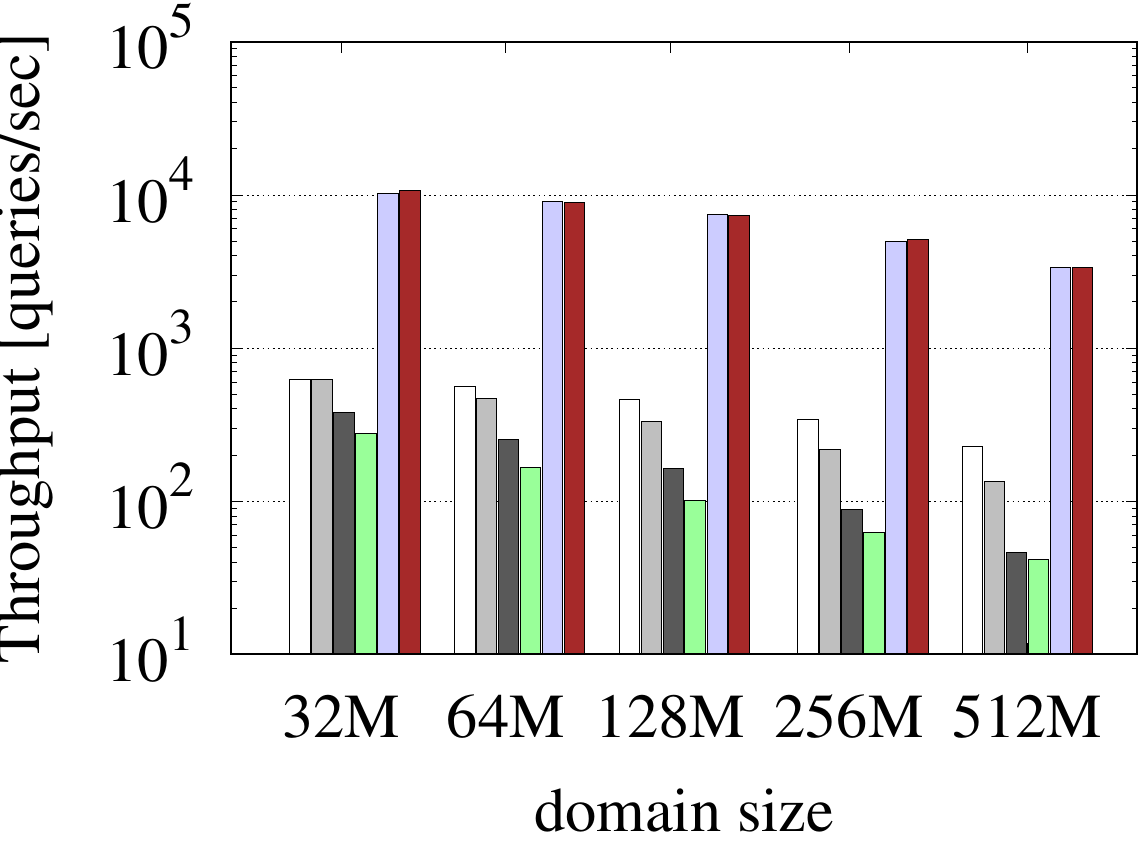}
&\hspace{-1ex}\includegraphics[width=0.4\columnwidth]{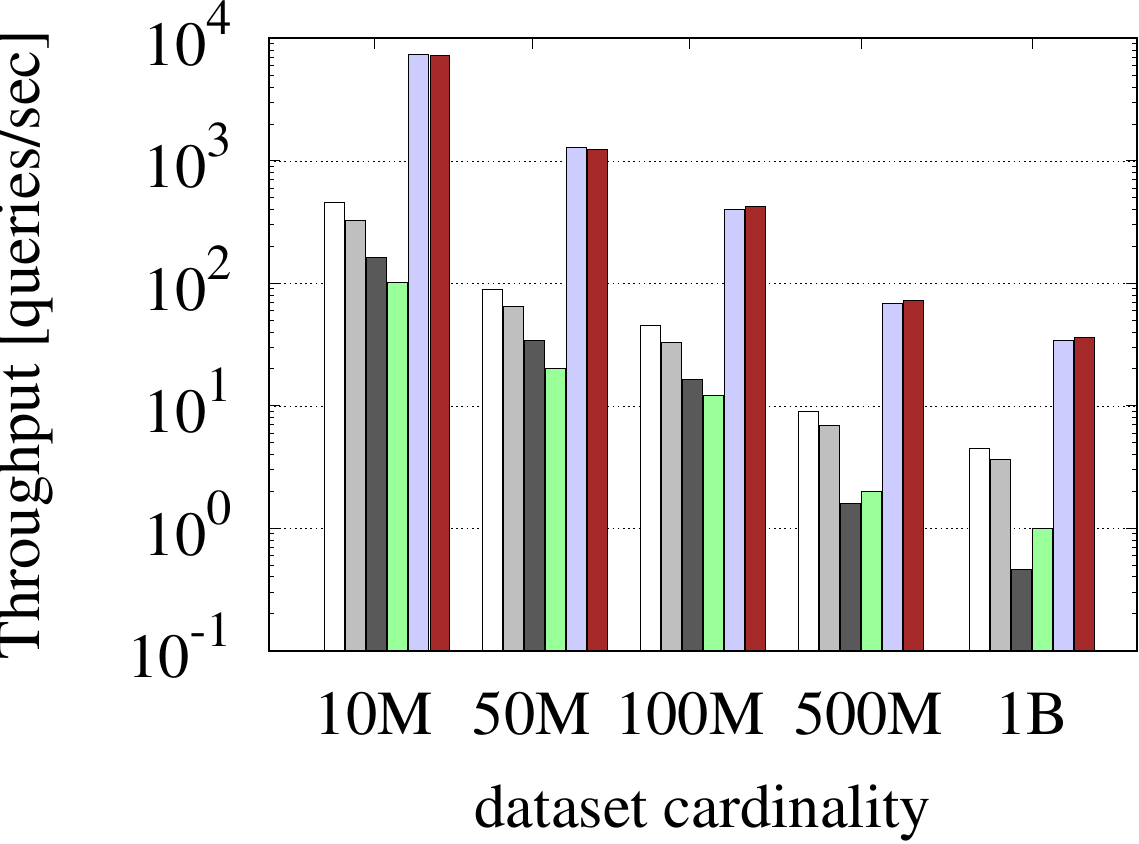}
&\hspace{-1ex}\includegraphics[width=0.4\columnwidth]{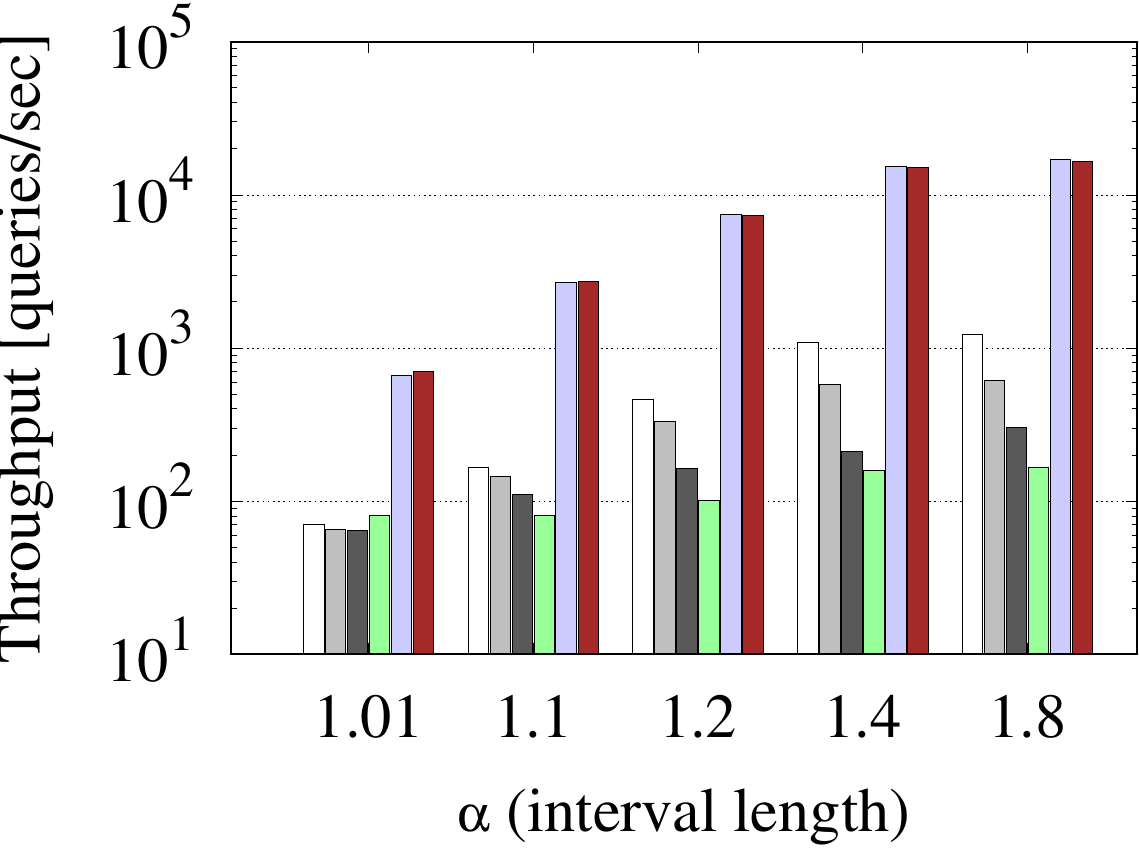}
&\hspace{-1ex}\includegraphics[width=0.4\columnwidth]{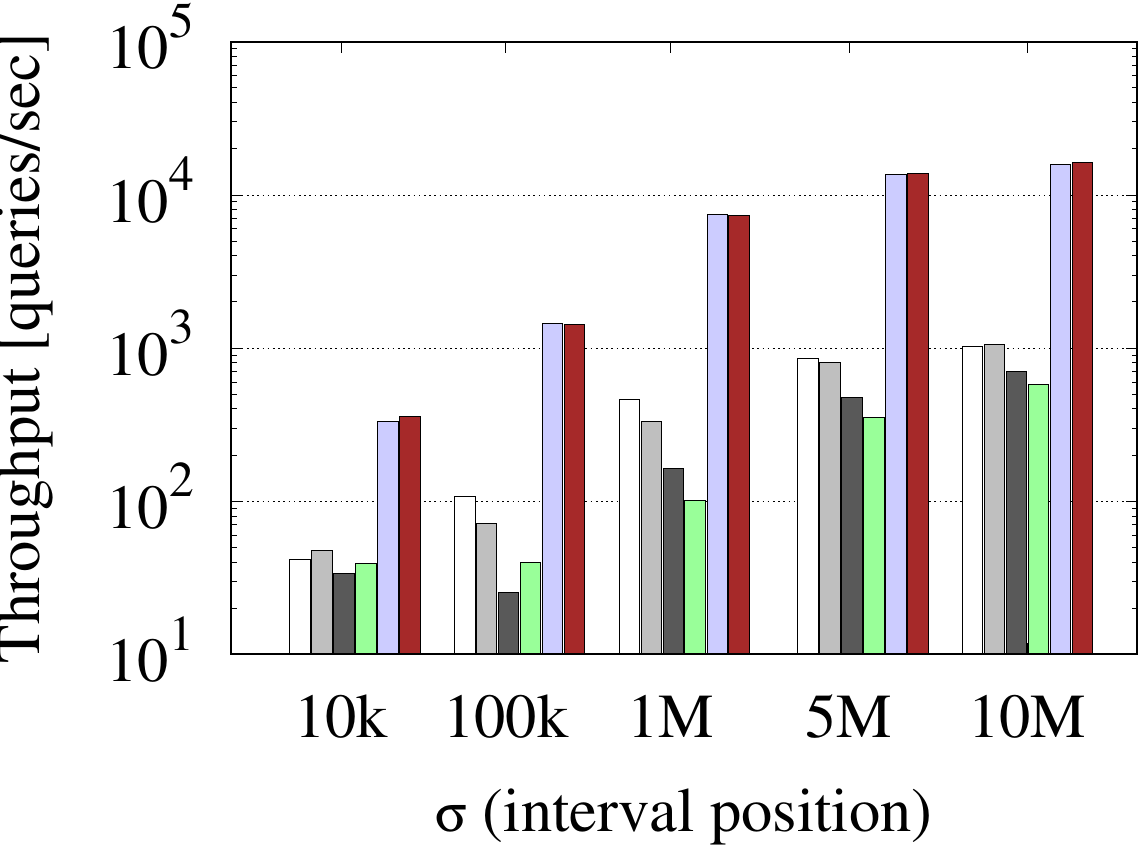}
&\hspace{-1ex}\includegraphics[width=0.4\columnwidth]{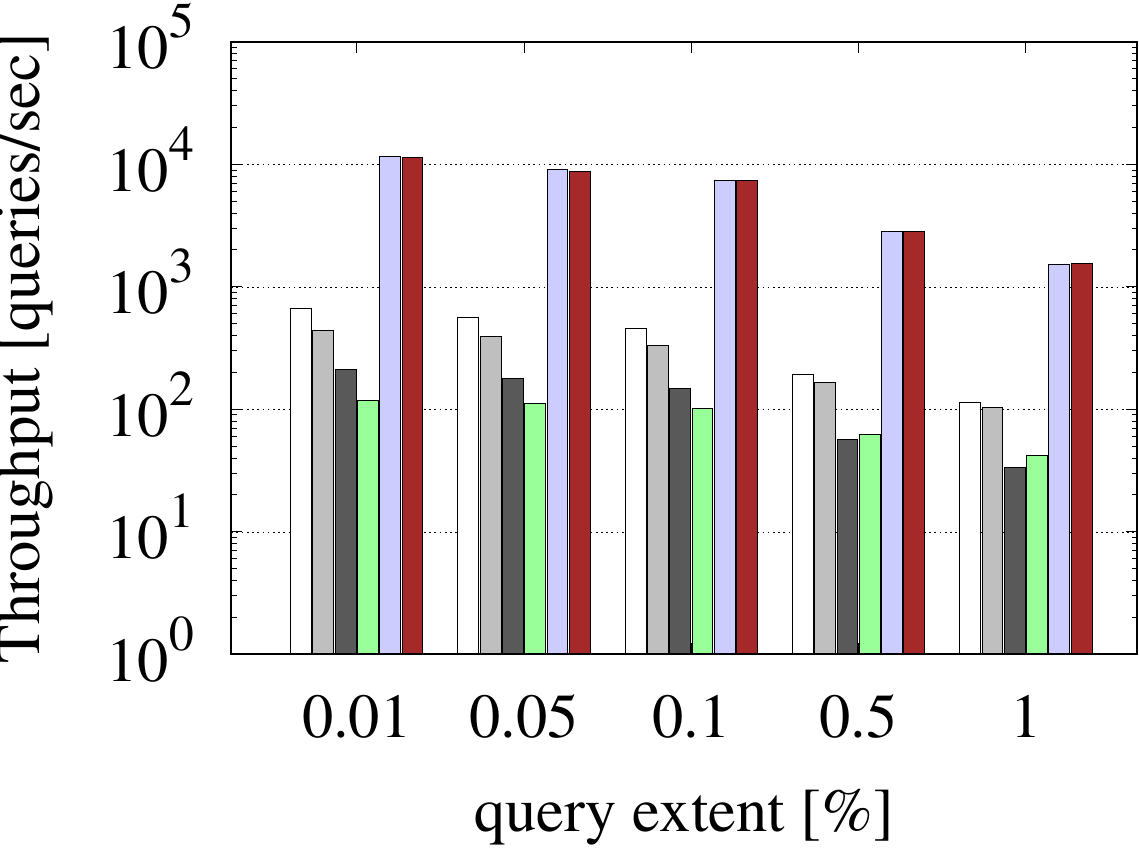}
\end{tabular}
\vspace{-3ex}
\caption{Comparing throughputs, synthetic datasets}
\vspace{-2ex}
\label{fig:comparison_synth}
\end{figure*}
We next compare the optimized versions of HINT and HINT$^m$
against the previous work competitors.
We start with our tests on the real datasets.
For HINT$^m$, we set $m$ to the best value on each dataset, according to
Table \ref{tab:stats}. Similarly, we set the number of partitions for 1D-grid, the number of checkpoints for the timeline index, and the number of levels and number of coarse partitions for the period index (see Table~\ref{tab:stats}).
Table \ref{tab:indexsize} shows the sizes of each index
in memory and Table \ref{tab:indexbuildtime} shows the construction
cost of each index, for the default query extent 0.1\%.
Regarding space, HINT$^m$ along with the interval tree and the period
index have the lowest requirements on datasets with long intervals
(BOOKS and WEBKIT) and very similar to 1D-grid in the rest.
In TAXIS and GREEND where the intervals are indexed mainly at the
bottom level, the space requirements of HINT$^m$ are significantly
lower than our comparison-free HINT due to limiting the number of levels.
\rev{When compared to the raw data (see Table \ref{tab:datasets}),
  HINT$^m$ is 2 to 3 times bigger for BOOKS and WEBKIT (which contain
  many long intervals), and 1\eat{to 2} time bigger for GREEND and TAXIS.
These ratios are smaller than the replication ratios $k$ reported in
Table \ref{tab:stats},
due to our storage optimization (cf. Section \ref{sec:flat:plus:decomposition}).}
 Due to its simplicity, 1D-grid has the lowest index time
across all datasets. Nevertheless, HINT$^m$ is the runner up in most
of the cases, especially for the biggest inputs, i.e., TAXIS and
GREEND, while in BOOKS and WEBKIT, its index time is very 
close to the interval tree.

Figure \ref{fig:comparison} compares the query throughputs of all
indices on queries of various extents (as a percentage of the domain
size). The first set of bars in each plot corresponds to {\em stabbing}
queries, i.e., range queries of 0 extent.
We observe that HINT and HINT$^m$ outperform the competition
by almost one order of magnitude, across the board. In fact, only on GREEND the 
performance for one of the competitors, i.e., 1D-grid, comes close to the performance 
of our hierarchical indexing. Due to the extremely short intervals \eat{contained }in GREEND (see Table~\ref{tab:datasets}) the vast majority of the results are collected from the bottom level of HINT/HINT$^m$, which essentially resembles the evaluation process in 1D-grid. Nevertheless, our \eat{hierarchical }indices are even in this case faster as they require no duplicate elimination. 

\eat{FLAT+ achieves very good performance as well. 
In fact for very short queries, FLAT+ may outperform HINT and
HINT$^m$, as in these cases the query range is usually completely
contained inside a single FLAT+ partition.
On the other hand, FLAT+ requires
significantly more space for datasets with long intervals (BOOKS and
WEBKIT) (see Table~\ref{tab:statsFLATHINT}). 
}

HINT$^m$ is the best index overall, as it achieves the performance of
HINT, requiring less space, confirming 
the findings of our analysis in Section~\ref{sec:hint:analysis}.
As shown in Table~\ref{tab:indexsize}, HINT always has higher space
requirements than HINT$^m$; even up to an order of magnitude higher in
case of GREEND.
What is more, since HINT$^m$ offers the option to control the occupied
space in memory by appropriately setting the $m$ parameter,
it can handle scenarios with space limitations.
HINT is marginally better than HINT$^m$ only on datasets with short
intervals (TAXIS and GREEND) and only for selective queries. 
In these cases, the intervals are stored at the lowest levels of the
hierarchy where HINT$^m$ typically needs to conduct comparisons
to identify results, but HINT applies comparison-free retrieval.

The next set of tests are on synthetic datasets. 
In each test, we fix all but one parameters (domain size,
cardinality, $\alpha$, $\sigma$, query extent) to their default values and
varied one (see Table~\ref{tab:synthdatasets}).
The value of $m$ for HINT$^m$, the number of partitions for 1D-grid, the number of checkpoints for the timeline index and the number of levels/coarse partitions for the period index are set to their
best values on each dataset. 
The results, shown in  Figure~\ref{fig:comparison_synth}, follow a
similar trend to the tests on the real datasets. 
HINT and HINT$^m$ are
always significantly faster than the competition, .
Different to the \eat{case of the }real datasets, 1D-grid is steadily outperformed by the other three competitors. Intuitively, the uniform partitioning of the domain \eat{employed by}in 1D-grid cannot cope with the skewness of the synthetic datasets\eat{; in contrast, HINT and HINT$^m$ use a much finer partitioning}.
As expected the domain size, the dataset cardinality and the query extent have a negative impact on the performance of all indices. 
Essentially, increasing the domain size under a fixed query extent,
affects the performance similar to increasing the query extent, i.e.,
the queries become longer and less selective, including more
results. Further, the querying cost grows linearly with the dataset
size since the number of query results are proportional to it.
\rev{HINT$^m$ occupies around 8\% more space than the raw data,
  because the replication factor $k$ is close to 1.}
In contrast, as $\alpha$ grows, the intervals become shorter, so the query performance improves.
Similarly, when increasing $\sigma$ the intervals are more widespread,
\eat{which means}meaning that the queries are expected to retrieve fewer results,
and the query cost drops accordingly.

\begin{table}[t]
\centering
\caption{\rev{Throughput [operations/sec] and total cost [sec]}}
\vspace{-2ex}
\label{tab:updates}
BOOKS\\\vspace{1ex}
\footnotesize
\begin{tabular}{|@{~}c@{~}||@{~}c@{~}|c@{~}|c@{~}|@{~}c@{~}|c@{~}|}
\hline
\textbf{operation}		&Interval tree	&Period index	&1D-grid		&$_\mathrm{subs+sopt}$HINT$^m$		&HINT$^m$\\\hline\hline
queries						&1,258		&3,088			&3,739			&14,390									&{\bf40,311}\\
insertions					&5,841		&519,904		&411,540		&2,405,228								&{\bf3,680,457}\\
deletions					&1,142		&765			&165			&2,201									&{\bf5,928}\\
\hline\hline
\textbf{total cost}		&9.63			&4.52			&8.68			&1.14									&{\bf0.41}\\
\hline
\end{tabular}
\\\vspace{2ex}
\normalsize
TAXIS\\\vspace{1ex}
\footnotesize
\begin{tabular}{|@{~}c@{~}||@{~}c@{~}|c@{~}|@{~}c@{~}|@{~}c@{~}|c@{~}|}
\hline
\textbf{operation}		&Interval tree	&Period index	&1D-grid			&$_\mathrm{subs+sopt}$HINT$^m$		&HINT$^m$\\\hline\hline
queries						&2,619		&2,695			&2,572				&8,774								&{\bf28,596}\\
insertions					&61,923		&1,026,423		&{\bf 8,347,273}	&4,407,743							&6,745,622\\	
deletions					&14,318		&21,293			&16,236				&71,122								&{\bf90,460}\\
\hline\hline
\textbf{total cost}		&3.93			&3.76				&3.95			&1.15								&{\bf0.36}\\
\hline
\end{tabular}
\vspace{-2ex}
\end{table}

\subsection{Updates}
\rev{
Finally, we test the efficiency of HINT$^m$ in updates using
both the update-friendly version of HINT$^m$ (Section
\ref{sec:hierarchical:updates}),
denoted by $_\mathrm{subs+sopt}$HINT$^m$,
and the hybrid setting for the fully-optimized index from
Section~\ref{sec:opts:updates}, denoted as HINT$^m$. 
\eat{Specifically, w}We
index offline the first 90\% of the intervals for each real dataset in
batch and then execute a mixed workload with 10K range queries of
0.1\% extent, 5K insertions of new intervals (randomly selected from
the remaining 10\% of the dataset) and 1K random deletions.
Table~\ref{tab:updates} reports our findings for BOOKS and TAXIS; the
results for WEBKIT and GREEND follow the same trend. Note that we
excluded Timeline\eat{ from our analysis} since the index is
designed for temporal (versioned) data where updates only happen as
new events are appended at the end of the event list, and the
comparison-free HINT, for which our tests have already shown a
similar performance to HINT$^m$ with higher indexing/storing
costs. 
Also, all indices handle deletions with ``tombstones''. 
We observe that both versions of HINT$^m$ outperform the competition by a
wide margin. 
An exception arises on TAXIS, as the short intervals are inserted in only one partition in 1D-grid.
The interval tree has in fact several orders of magnitude
slower updates due to the extra cost of maintaining the partitions in
the tree sorted at all time. Overall, we also observe that the hybrid 
%
HINT$^m$ setting is the most efficient index as the \emph{smaller} delta
$_\mathrm{subs+sopt}$HINT$^m$ handles insertions faster
than the 90\% pre-filled $_\mathrm{subs+sopt}$HINT$^m$.
%
}


\section{Conclusions and Future Work}
\label{sec:concl}

We proposed a hierarchical index (HINT) for intervals, which has
low space complexity and minimizes the number of data accesses and
comparisons during query evaluation.
Our experiment\eat{al analysi}s on real and synthetic datasets
shows that HINT outperforms previous work by one order of magnitude in a wide
variety of interval data and query distributions.
There are several directions for future work. 
First, we plan to study
the performance of HINT on selection queries,
based on Allen's relationships \cite{Allen81} between intervals \rev{and on
complex event processing in data streams, based on interval
operators \cite{Awad0KVS20}}.
Second, we plan to investigate extensions of HINT for supporting
queries that combine temporal selections and selections on
additional object attributes or the duration of
intervals \cite{BehrendDGSVRK19}.
Third, we plan to investigate effective parallelization
techniques,
taking advantage of the fact that HINT partitions are independent.

\eat{

\todo{Application to range temporal aggregates \cite{ZhangMTGS01} and
  interval joins.}

\todo{Other queries: (1) Allen's, (2) selection on multiple attributes
(besides interval-based), (3) duration}

\todo{Adaptivity: (1) unbalanced partitioning (2) skip levels}

\todo{Dynamic environments and updates}
\begin{itemize}
	\item append-only updates (streaming, temporal)
	\item append-front, delete-tail updates (sliding window)
	\item ad-hoc updates
\end{itemize}
(some discussions on this could be included in the paper)

\todo{Parallel Processing} query evaluation at each (sub)partition and
at each level of HINT is independent to other partitions, ....

\todo{adaptivity}
\begin{itemize}
	\item uneven depth
	\item jump levels
	\item adaptive partitioning per level (BOOKS)
        \end{itemize}

\todo{time-series segmentation and indexing - new paper}
}

\balance
\bibliographystyle{ACM-Reference-Format}
\bibliography{iindex}


\begin{thebibliography}{35}


\ifx \showCODEN    \undefined \def \showCODEN     #1{\unskip}     \fi
\ifx \showDOI      \undefined \def \showDOI       #1{#1}\fi
\ifx \showISBNx    \undefined \def \showISBNx     #1{\unskip}     \fi
\ifx \showISBNxiii \undefined \def \showISBNxiii  #1{\unskip}     \fi
\ifx \showISSN     \undefined \def \showISSN      #1{\unskip}     \fi
\ifx \showLCCN     \undefined \def \showLCCN      #1{\unskip}     \fi
\ifx \shownote     \undefined \def \shownote      #1{#1}          \fi
\ifx \showarticletitle \undefined \def \showarticletitle #1{#1}   \fi
\ifx \showURL      \undefined \def \showURL       {\relax}        \fi
\providecommand\bibfield[2]{#2}
\providecommand\bibinfo[2]{#2}
\providecommand\natexlab[1]{#1}
\providecommand\showeprint[2][]{arXiv:#2}

\bibitem[\protect\citeauthoryear{Allen}{Allen}{1981}]%
        {Allen81}
\bibfield{author}{\bibinfo{person}{James~F. Allen}.}
  \bibinfo{year}{1981}\natexlab{}.
\newblock \showarticletitle{An Interval-Based Representation of Temporal
  Knowledge}. In \bibinfo{booktitle}{\emph{{IJCAI}}}.
  \bibinfo{pages}{221--226}.
\newblock


\bibitem[\protect\citeauthoryear{Awad, Tommasini, Langhi, Kamel, Valle, and
  Sakr}{Awad et~al\mbox{.}}{2022}]%
        {Awad0KVS20}
\bibfield{author}{\bibinfo{person}{Ahmed Awad}, \bibinfo{person}{Riccardo
  Tommasini}, \bibinfo{person}{Samuele Langhi}, \bibinfo{person}{Mahmoud
  Kamel}, \bibinfo{person}{Emanuele~Della Valle}, {and} \bibinfo{person}{Sherif
  Sakr}.} \bibinfo{year}{2022}\natexlab{}.
\newblock \showarticletitle{D$^2${IA:} User-defined interval analytics on
  distributed streams}.
\newblock \bibinfo{journal}{\emph{Information Systems}}  \bibinfo{volume}{104}
  (\bibinfo{year}{2022}), \bibinfo{pages}{101679}.
\newblock


\bibitem[\protect\citeauthoryear{Becker, Gschwind, Ohler, Seeger, and
  Widmayer}{Becker et~al\mbox{.}}{1996}]%
        {BeckerGOSW96}
\bibfield{author}{\bibinfo{person}{Bruno Becker}, \bibinfo{person}{Stephan
  Gschwind}, \bibinfo{person}{Thomas Ohler}, \bibinfo{person}{Bernhard Seeger},
  {and} \bibinfo{person}{Peter Widmayer}.} \bibinfo{year}{1996}\natexlab{}.
\newblock \showarticletitle{An Asymptotically Optimal Multiversion B-Tree}.
\newblock \bibinfo{journal}{\emph{{VLDB} J.}} \bibinfo{volume}{5},
  \bibinfo{number}{4} (\bibinfo{year}{1996}), \bibinfo{pages}{264--275}.
\newblock


\bibitem[\protect\citeauthoryear{Behrend, Dign{\"o}s, Gamper, Schmiegelt,
  Voigt, Rottmann, and Kahl}{Behrend et~al\mbox{.}}{2019}]%
        {BehrendDGSVRK19}
\bibfield{author}{\bibinfo{person}{Andreas Behrend}, \bibinfo{person}{Anton
  Dign{\"o}s}, \bibinfo{person}{Johann Gamper}, \bibinfo{person}{Philip
  Schmiegelt}, \bibinfo{person}{Hannes Voigt}, \bibinfo{person}{Matthias
  Rottmann}, {and} \bibinfo{person}{Karsten Kahl}.}
  \bibinfo{year}{2019}\natexlab{}.
\newblock \showarticletitle{Period Index: A Learned 2D Hash Index for Range and
  Duration Queries}. In \bibinfo{booktitle}{\emph{SSTD}}.
  \bibinfo{pages}{100--109}.
\newblock


\bibitem[\protect\citeauthoryear{B{\"{o}}hlen, Dign{\"{o}}s, Gamper, and
  Jensen}{B{\"{o}}hlen et~al\mbox{.}}{2017}]%
        {BohlenDGJ17}
\bibfield{author}{\bibinfo{person}{Michael~H. B{\"{o}}hlen},
  \bibinfo{person}{Anton Dign{\"{o}}s}, \bibinfo{person}{Johann Gamper}, {and}
  \bibinfo{person}{Christian~S. Jensen}.} \bibinfo{year}{2017}\natexlab{}.
\newblock \showarticletitle{Temporal Data Management - An Overview}. In
  \bibinfo{booktitle}{\emph{{eBISS}}}. \bibinfo{pages}{51--83}.
\newblock


\bibitem[\protect\citeauthoryear{Bouros, Lampropoulos, Tsitsigkos, Mamoulis,
  and Terrovitis}{Bouros et~al\mbox{.}}{2020}]%
        {BourosLTMT20}
\bibfield{author}{\bibinfo{person}{Panagiotis Bouros},
  \bibinfo{person}{Konstantinos Lampropoulos}, \bibinfo{person}{Dimitrios
  Tsitsigkos}, \bibinfo{person}{Nikos Mamoulis}, {and} \bibinfo{person}{Manolis
  Terrovitis}.} \bibinfo{year}{2020}\natexlab{}.
\newblock \showarticletitle{Band Joins for Interval Data}. In
  \bibinfo{booktitle}{\emph{EDBT}}. \bibinfo{pages}{443--446}.
\newblock


\bibitem[\protect\citeauthoryear{Bouros and Mamoulis}{Bouros and
  Mamoulis}{2017}]%
        {BourosM17}
\bibfield{author}{\bibinfo{person}{Panagiotis Bouros} {and}
  \bibinfo{person}{Nikos Mamoulis}.} \bibinfo{year}{2017}\natexlab{}.
\newblock \showarticletitle{A Forward Scan based Plane Sweep Algorithm for
  Parallel Interval Joins}.
\newblock \bibinfo{journal}{\emph{{Proc. {VLDB} Endow.}}} \bibinfo{volume}{10},
  \bibinfo{number}{11} (\bibinfo{year}{2017}), \bibinfo{pages}{1346--1357}.
\newblock


\bibitem[\protect\citeauthoryear{Bouros and Mamoulis}{Bouros and
  Mamoulis}{2018}]%
        {BourosM18}
\bibfield{author}{\bibinfo{person}{Panagiotis Bouros} {and}
  \bibinfo{person}{Nikos Mamoulis}.} \bibinfo{year}{2018}\natexlab{}.
\newblock \showarticletitle{Interval Count Semi-Joins}. In
  \bibinfo{booktitle}{\emph{EDBT}}. \bibinfo{pages}{425--428}.
\newblock


\bibitem[\protect\citeauthoryear{Bouros, Mamoulis, Tsitsigkos, and
  Terrovitis}{Bouros et~al\mbox{.}}{2021}]%
        {BourosMTT21}
\bibfield{author}{\bibinfo{person}{Panagiotis Bouros}, \bibinfo{person}{Nikos
  Mamoulis}, \bibinfo{person}{Dimitrios Tsitsigkos}, {and}
  \bibinfo{person}{Manolis Terrovitis}.} \bibinfo{year}{2021}\natexlab{}.
\newblock \showarticletitle{In-Memory Interval Joins}.
\newblock \bibinfo{journal}{\emph{{VLDB} J.}} \bibinfo{volume}{30},
  \bibinfo{number}{4} (\bibinfo{year}{2021}).
\newblock


\bibitem[\protect\citeauthoryear{Cafagna and B{\"o}hlen}{Cafagna and
  B{\"o}hlen}{2017}]%
        {CafagnaB17}
\bibfield{author}{\bibinfo{person}{Francesco Cafagna} {and}
  \bibinfo{person}{Michael~H. B{\"o}hlen}.} \bibinfo{year}{2017}\natexlab{}.
\newblock \showarticletitle{Disjoint interval partitioning}.
\newblock \bibinfo{journal}{\emph{{VLDB} J.}} \bibinfo{volume}{26},
  \bibinfo{number}{3} (\bibinfo{year}{2017}), \bibinfo{pages}{447--466}.
\newblock


\bibitem[\protect\citeauthoryear{Chekol, Pirr{\`{o}}, and
  Stuckenschmidt}{Chekol et~al\mbox{.}}{2019}]%
        {ChekolPS19}
\bibfield{author}{\bibinfo{person}{Melisachew~Wudage Chekol},
  \bibinfo{person}{Giuseppe Pirr{\`{o}}}, {and} \bibinfo{person}{Heiner
  Stuckenschmidt}.} \bibinfo{year}{2019}\natexlab{}.
\newblock \showarticletitle{Fast Interval Joins for Temporal {SPARQL} Queries}.
  In \bibinfo{booktitle}{\emph{ACM WWW}}. \bibinfo{pages}{1148--1154}.
\newblock


\bibitem[\protect\citeauthoryear{Dalvi and Suciu}{Dalvi and Suciu}{2004}]%
        {DalviS04}
\bibfield{author}{\bibinfo{person}{Nilesh~N. Dalvi} {and} \bibinfo{person}{Dan
  Suciu}.} \bibinfo{year}{2004}\natexlab{}.
\newblock \showarticletitle{Efficient Query Evaluation on Probabilistic
  Databases}. In \bibinfo{booktitle}{\emph{{VLDB}}}. \bibinfo{pages}{864--875}.
\newblock


\bibitem[\protect\citeauthoryear{de~Berg, Cheong, van Kreveld, and
  Overmars}{de~Berg et~al\mbox{.}}{2008}]%
        {BergCKO08}
\bibfield{author}{\bibinfo{person}{Mark de Berg}, \bibinfo{person}{Otfried
  Cheong}, \bibinfo{person}{Marc~J. van Kreveld}, {and}
  \bibinfo{person}{Mark~H. Overmars}.} \bibinfo{year}{2008}\natexlab{}.
\newblock \bibinfo{booktitle}{\emph{Computational geometry: algorithms and
  applications, 3rd Edition}}.
\newblock \bibinfo{publisher}{Springer}.
\newblock


\bibitem[\protect\citeauthoryear{Dign{\"{o}}s, B{\"{o}}hlen, and
  Gamper}{Dign{\"{o}}s et~al\mbox{.}}{2014}]%
        {DignosBG14}
\bibfield{author}{\bibinfo{person}{Anton Dign{\"{o}}s},
  \bibinfo{person}{Michael~H. B{\"{o}}hlen}, {and} \bibinfo{person}{Johann
  Gamper}.} \bibinfo{year}{2014}\natexlab{}.
\newblock \showarticletitle{Overlap interval partition join}. In
  \bibinfo{booktitle}{\emph{ACM SIGMOD}}. \bibinfo{pages}{1459--1470}.
\newblock


\bibitem[\protect\citeauthoryear{Dittrich and Seeger}{Dittrich and
  Seeger}{2000}]%
        {DittrichS00}
\bibfield{author}{\bibinfo{person}{Jens{-}Peter Dittrich} {and}
  \bibinfo{person}{Bernhard Seeger}.} \bibinfo{year}{2000}\natexlab{}.
\newblock \showarticletitle{Data Redundancy and Duplicate Detection in Spatial
  Join Processing}. In \bibinfo{booktitle}{\emph{IEEE ICDE}}.
  \bibinfo{pages}{535--546}.
\newblock


\bibitem[\protect\citeauthoryear{Edelsbrunner}{Edelsbrunner}{1980}]%
        {Edels80}
\bibfield{author}{\bibinfo{person}{Herbert Edelsbrunner}.}
  \bibinfo{year}{1980}\natexlab{}.
\newblock \bibinfo{booktitle}{\emph{Dynamic Rectangle Intersection Searching}}.
\newblock \bibinfo{type}{{T}echnical {R}eport}~47.
  \bibinfo{institution}{Institute for Information Processing, Technical
  University of Graz, Austria}.
\newblock


\bibitem[\protect\citeauthoryear{Ferragina and Vinciguerra}{Ferragina and
  Vinciguerra}{2020}]%
        {FerraginaV20}
\bibfield{author}{\bibinfo{person}{Paolo Ferragina} {and}
  \bibinfo{person}{Giorgio Vinciguerra}.} \bibinfo{year}{2020}\natexlab{}.
\newblock \showarticletitle{The PGM-index: a fully-dynamic compressed learned
  index with provable worst-case bounds}.
\newblock \bibinfo{journal}{\emph{Proc. {VLDB} Endow.}} \bibinfo{volume}{13},
  \bibinfo{number}{8} (\bibinfo{year}{2020}), \bibinfo{pages}{1162--1175}.
\newblock


\bibitem[\protect\citeauthoryear{Garrison}{Garrison}{[n.d.]}]%
        {itree}
\bibfield{author}{\bibinfo{person}{Erik Garrison}.}
  \bibinfo{year}{[n.d.]}\natexlab{}.
\newblock \bibinfo{title}{A minimal C++ interval tree implementation}.
\newblock
\newblock
\newblock
\shownote{https://github.com/ekg/intervaltree.}


\bibitem[\protect\citeauthoryear{Kaufmann, Manjili, Vagenas, Fischer, Kossmann,
  F{\"{a}}rber, and May}{Kaufmann et~al\mbox{.}}{2013}]%
        {KaufmannMVFKFM13}
\bibfield{author}{\bibinfo{person}{Martin Kaufmann},
  \bibinfo{person}{Amin~Amiri Manjili}, \bibinfo{person}{Panagiotis Vagenas},
  \bibinfo{person}{Peter~M. Fischer}, \bibinfo{person}{Donald Kossmann},
  \bibinfo{person}{Franz F{\"{a}}rber}, {and} \bibinfo{person}{Norman May}.}
  \bibinfo{year}{2013}\natexlab{}.
\newblock \showarticletitle{Timeline index: a unified data structure for
  processing queries on temporal data in {SAP} {HANA}}. In
  \bibinfo{booktitle}{\emph{ACM SIGMOD}}. \bibinfo{pages}{1173--1184}.
\newblock


\bibitem[\protect\citeauthoryear{Kline and Snodgrass}{Kline and
  Snodgrass}{1995}]%
        {KlineS95}
\bibfield{author}{\bibinfo{person}{Nick Kline} {and}
  \bibinfo{person}{Richard~T. Snodgrass}.} \bibinfo{year}{1995}\natexlab{}.
\newblock \showarticletitle{Computing Temporal Aggregates}. In
  \bibinfo{booktitle}{\emph{IEEE ICDE}}. \bibinfo{pages}{222--231}.
\newblock


\bibitem[\protect\citeauthoryear{Kriegel, P{\"{o}}tke, and Seidl}{Kriegel
  et~al\mbox{.}}{2000}]%
        {KriegelPS00}
\bibfield{author}{\bibinfo{person}{Hans{-}Peter Kriegel},
  \bibinfo{person}{Marco P{\"{o}}tke}, {and} \bibinfo{person}{Thomas Seidl}.}
  \bibinfo{year}{2000}\natexlab{}.
\newblock \showarticletitle{Managing Intervals Efficiently in Object-Relational
  Databases}. In \bibinfo{booktitle}{\emph{VLDB}}. \bibinfo{pages}{407--418}.
\newblock


\bibitem[\protect\citeauthoryear{Lomet}{Lomet}{1975}]%
        {Lomet75}
\bibfield{author}{\bibinfo{person}{David~B. Lomet}.}
  \bibinfo{year}{1975}\natexlab{}.
\newblock \showarticletitle{Scheme for Invalidating References to Freed
  Storage}.
\newblock \bibinfo{journal}{\emph{{IBM} J. Res. Dev.}} \bibinfo{volume}{19},
  \bibinfo{number}{1} (\bibinfo{year}{1975}), \bibinfo{pages}{26--35}.
\newblock


\bibitem[\protect\citeauthoryear{Lomet, Hong, Nehme, and Zhang}{Lomet
  et~al\mbox{.}}{2008}]%
        {LometHNZ08}
\bibfield{author}{\bibinfo{person}{David~B. Lomet}, \bibinfo{person}{Mingsheng
  Hong}, \bibinfo{person}{Rimma~V. Nehme}, {and} \bibinfo{person}{Rui Zhang}.}
  \bibinfo{year}{2008}\natexlab{}.
\newblock \showarticletitle{Transaction time indexing with version
  compression}.
\newblock \bibinfo{journal}{\emph{Proc. {VLDB} Endow.}} \bibinfo{volume}{1},
  \bibinfo{number}{1} (\bibinfo{year}{2008}), \bibinfo{pages}{870--881}.
\newblock


\bibitem[\protect\citeauthoryear{Min, Park, and Chung}{Min
  et~al\mbox{.}}{2003}]%
        {MinPC03}
\bibfield{author}{\bibinfo{person}{Jun{-}Ki Min}, \bibinfo{person}{Myung{-}Jae
  Park}, {and} \bibinfo{person}{Chin{-}Wan Chung}.}
  \bibinfo{year}{2003}\natexlab{}.
\newblock \showarticletitle{{XPRESS:} {A} Queriable Compression for {XML}
  Data}. In \bibinfo{booktitle}{\emph{ACM SIGMOD}}. \bibinfo{pages}{122--133}.
\newblock


\bibitem[\protect\citeauthoryear{Monacchi, Egarter, Elmenreich, D'Alessandro,
  and Tonello}{Monacchi et~al\mbox{.}}{2014}]%
        {MonacchiEEDT14}
\bibfield{author}{\bibinfo{person}{Andrea Monacchi}, \bibinfo{person}{Dominik
  Egarter}, \bibinfo{person}{Wilfried Elmenreich}, \bibinfo{person}{Salvatore
  D'Alessandro}, {and} \bibinfo{person}{Andrea~M. Tonello}.}
  \bibinfo{year}{2014}\natexlab{}.
\newblock \showarticletitle{{GREEND:} An energy consumption dataset of
  households in Italy and Austria}. In
  \bibinfo{booktitle}{\emph{SmartGridComm}}. \bibinfo{pages}{511--516}.
\newblock


\bibitem[\protect\citeauthoryear{Moon, L{\'{o}}pez, and Immanuel}{Moon
  et~al\mbox{.}}{2003}]%
        {MoonLI03}
\bibfield{author}{\bibinfo{person}{Bongki Moon}, \bibinfo{person}{In{\'{e}}s
  Fernando~Vega L{\'{o}}pez}, {and} \bibinfo{person}{Vijaykumar Immanuel}.}
  \bibinfo{year}{2003}\natexlab{}.
\newblock \showarticletitle{Efficient Algorithms for Large-Scale Temporal
  Aggregation}.
\newblock \bibinfo{journal}{\emph{{IEEE TKDE}}} \bibinfo{volume}{15},
  \bibinfo{number}{3} (\bibinfo{year}{2003}), \bibinfo{pages}{744--759}.
\newblock


\bibitem[\protect\citeauthoryear{Overmars}{Overmars}{1983}]%
        {Overmars83}
\bibfield{author}{\bibinfo{person}{Mark~H. Overmars}.}
  \bibinfo{year}{1983}\natexlab{}.
\newblock \bibinfo{booktitle}{\emph{The Design of Dynamic Data Structures}}.
  \bibinfo{series}{Lecture Notes in Computer Science},
  Vol.~\bibinfo{volume}{156}.
\newblock \bibinfo{publisher}{Springer}.
\newblock


\bibitem[\protect\citeauthoryear{Pagel, Six, Toben, and Widmayer}{Pagel
  et~al\mbox{.}}{1993}]%
        {PagelSTW93}
\bibfield{author}{\bibinfo{person}{Bernd{-}Uwe Pagel},
  \bibinfo{person}{Hans{-}Werner Six}, \bibinfo{person}{Heinrich Toben}, {and}
  \bibinfo{person}{Peter Widmayer}.} \bibinfo{year}{1993}\natexlab{}.
\newblock \showarticletitle{Towards an Analysis of Range Query Performance in
  Spatial Data Structures}. In \bibinfo{booktitle}{\emph{ACM PODS}}.
  \bibinfo{pages}{214--221}.
\newblock


\bibitem[\protect\citeauthoryear{Piatov and Helmer}{Piatov and Helmer}{2017}]%
        {PiatovH17}
\bibfield{author}{\bibinfo{person}{Danila Piatov} {and} \bibinfo{person}{Sven
  Helmer}.} \bibinfo{year}{2017}\natexlab{}.
\newblock \showarticletitle{Sweeping-Based Temporal Aggregation}. In
  \bibinfo{booktitle}{\emph{{SSTD}}}. \bibinfo{pages}{125--144}.
\newblock


\bibitem[\protect\citeauthoryear{Piatov, Helmer, and Dign{\"{o}}s}{Piatov
  et~al\mbox{.}}{2016}]%
        {PiatovHD16}
\bibfield{author}{\bibinfo{person}{Danila Piatov}, \bibinfo{person}{Sven
  Helmer}, {and} \bibinfo{person}{Anton Dign{\"{o}}s}.}
  \bibinfo{year}{2016}\natexlab{}.
\newblock \showarticletitle{An interval join optimized for modern hardware}. In
  \bibinfo{booktitle}{\emph{IEEE ICDE}}. \bibinfo{pages}{1098--1109}.
\newblock


\bibitem[\protect\citeauthoryear{Piatov, Helmer, Dign{\"{o}}s, and
  Persia}{Piatov et~al\mbox{.}}{2021}]%
        {PiatovHDP21}
\bibfield{author}{\bibinfo{person}{Danila Piatov}, \bibinfo{person}{Sven
  Helmer}, \bibinfo{person}{Anton Dign{\"{o}}s}, {and} \bibinfo{person}{Fabio
  Persia}.} \bibinfo{year}{2021}\natexlab{}.
\newblock \showarticletitle{Cache-efficient sweeping-based interval joins for
  extended Allen relation predicates}.
\newblock \bibinfo{journal}{\emph{{VLDB} J.}} \bibinfo{volume}{30},
  \bibinfo{number}{3} (\bibinfo{year}{2021}), \bibinfo{pages}{379--402}.
\newblock


\bibitem[\protect\citeauthoryear{Salzberg and Tsotras}{Salzberg and
  Tsotras}{1999}]%
        {SalzbergT99}
\bibfield{author}{\bibinfo{person}{Betty Salzberg} {and}
  \bibinfo{person}{Vassilis~J. Tsotras}.} \bibinfo{year}{1999}\natexlab{}.
\newblock \showarticletitle{Comparison of Access Methods for Time-Evolving
  Data}.
\newblock \bibinfo{journal}{\emph{{ACM} Comput. Surv.}} \bibinfo{volume}{31},
  \bibinfo{number}{2} (\bibinfo{year}{1999}), \bibinfo{pages}{158--221}.
\newblock


\bibitem[\protect\citeauthoryear{Samarati and Sweeney}{Samarati and
  Sweeney}{1998}]%
        {SamaratiS98}
\bibfield{author}{\bibinfo{person}{Pierangela Samarati} {and}
  \bibinfo{person}{Latanya Sweeney}.} \bibinfo{year}{1998}\natexlab{}.
\newblock \showarticletitle{Generalizing Data to Provide Anonymity when
  Disclosing Information (Abstract)}. In \bibinfo{booktitle}{\emph{ACM PODS}}.
  \bibinfo{pages}{188}.
\newblock


\bibitem[\protect\citeauthoryear{Snodgrass and Ahn}{Snodgrass and Ahn}{1986}]%
        {SnodgrassA86}
\bibfield{author}{\bibinfo{person}{Richard~T. Snodgrass} {and}
  \bibinfo{person}{Ilsoo Ahn}.} \bibinfo{year}{1986}\natexlab{}.
\newblock \showarticletitle{Temporal Databases}.
\newblock \bibinfo{journal}{\emph{Computer}} \bibinfo{volume}{19},
  \bibinfo{number}{9} (\bibinfo{year}{1986}), \bibinfo{pages}{35--42}.
\newblock


\bibitem[\protect\citeauthoryear{Zhu, Fletcher, Yakovets, Papapetrou, and
  Wu}{Zhu et~al\mbox{.}}{2019}]%
        {ZhuFYPW19}
\bibfield{author}{\bibinfo{person}{Kaijie Zhu}, \bibinfo{person}{George H.~L.
  Fletcher}, \bibinfo{person}{Nikolay Yakovets}, \bibinfo{person}{Odysseas
  Papapetrou}, {and} \bibinfo{person}{Yuqing Wu}.}
  \bibinfo{year}{2019}\natexlab{}.
\newblock \showarticletitle{Scalable temporal clique enumeration}. In
  \bibinfo{booktitle}{\emph{SSTD}}. \bibinfo{pages}{120--129}.
\newblock


\end{thebibliography}

\end{document}